\documentclass{article}

\usepackage[T1]{fontenc}
\usepackage[utf8]{inputenc}
\usepackage{amsmath}
\usepackage{amsfonts}
\usepackage{amssymb}
\usepackage{amsthm}
\usepackage{amsbsy}
\usepackage{mathrsfs}
\usepackage{bm}
\usepackage{physics}
\usepackage{graphicx}
\usepackage{color}
\usepackage{multicol}
\usepackage{enumitem}
\usepackage[lmargin=0.9in, rmargin=0.9in, tmargin=1in, bmargin=1in]{geometry}
\usepackage{url}
\usepackage{hyperref}
\usepackage{cleveref}

\newtheorem{theorem}{Theorem}[section]
\newtheorem{lemma}[theorem]{Lemma}
\newtheorem{corollary}[theorem]{Corollary}
\newtheorem{proposition}[theorem]{Proposition}
\newtheorem{definition}[theorem]{Definition}
\newtheorem*{definition*}{Definition}

\newtheorem{remark}[theorem]{Remark}

\newcommand{\hi}{\mathcal{H}}
\newcommand{\his}{\mathcal{H}_{\mathcal{S}}}
\newcommand{\hir}{\mathcal{H}_{\mathcal{R}}}
\newcommand{\hisr}{\mathcal{H}_{\mathcal{S}} \otimes \mathcal{H}_{\mathcal{R}}}
\newcommand{\hik}{\mathcal{K}}

\renewcommand{\ip}[2]{\left\langle\,#1\,|\,#2\,\right\rangle}

\newcommand{\M}{\mathcal{M}}
\newcommand{\R}{\mathcal{R}}
\renewcommand{\S}{\mathcal{S}}
\newcommand{\A}{\mathcal{A}}

\newcommand{\F}{\mathcal{F}}
\newcommand{\N}{\mathcal{N}}
\newcommand{\T}{\mathcal{T}}
\newcommand{\Eff}{\mathcal{E}}

\newcommand{\Cn}{\mathbb{C}}
\newcommand{\Reals}{\mathbb{R}}
\newcommand{\Nn}{\mathbb{N}}
\newcommand{\id}{\boldsymbol{1}}
\newcommand{\zero}{\boldsymbol{0}}

\newcommand{\E}{\mathsf{E}}

\newcommand{\MS}{\M_\S}
\newcommand{\MR}{\M_\R}
\newcommand{\MSR}{\M_\S \bar{\otimes} \M_\R}
\newcommand{\MSp}{(\M_\S)_*}
\newcommand{\MRp}{(\M_\R)_*}
\newcommand{\MSRp}{(\MSR)_*}
\newcommand{\Mp}{\M_*}

\newcommand{\Bb}{B_b(\Sigma,\F,\MS)}
\newcommand{\BbBH}{B_b(\Sigma,\F,B(\his))}
\newcommand{\Bmes}{\mathcal{B}(\Sigma,\F,\MS)}
\newcommand{\Linf}{L^\infty_\E(\Sigma,\MS)}

\begin{document}

\thispagestyle{empty}

\author{
Jan G\l{}owacki\thanks{\texttt{jan.glowacki@oeaw.ac.at}}\vspace{10pt}\\
\emph{Basic Research Community for Physics}, Leipzig, GERMANY \\
\emph{Institute for Quantum Optics and Quantum Information}, Vienna, AUSTRIA\\
\emph{Department of Computer Science, University of Oxford}, Oxford, UK
\vspace{10pt}
\and
Yui Kuramochi\vspace{10pt}\\
\emph{Department of Informatics,
Kyushu University}, Fukuoka, JAPAN
}

\title{\textbf{$\mathrm{W}^*$-Algebraic Integration Theory}}

\newgeometry{top=0cm, left=2.5cm, right=2.5cm, bottom=2cm}
\maketitle

\date{}

\maketitle

\vspace{-15pt}
\begin{abstract}
We develop integration theory for operator-valued functions and positive operator-valued measures (POVMs) in the setting of $\mathrm{W}^*$-algebras. Given a pair of $\mathrm{W}^*$-algebras $(\MS,\MR)$ with $\MSp$ separable, a measurable space $(\Sigma, \F)$ and a POVM $\E: \F \to \Eff(\MR)$, the integral of a function $f: \Sigma \to \MS$ is defined as an element of the spatial tensor product $\int f \otimes d\E \in \MSR$. The space $\Bb$ of uniformly bounded ultraweakly measurable functions is the universal domain of integration; once $\E$ is fixed it refines to the quotient $\Linf = \Bb/\N_\E$ by $\E$-null functions. The space $\Bb$ is a $\mathrm{C}^*$-algebra, and when $\MRp$ is also separable, $\Linf \cong \MS \bar\otimes L^\infty_\E(\Sigma)$ is a $\mathrm{W}^*$-algebra. On $\Bb$, the integration map is a pointwise-normal unital completely positive (CP) map, a $*$-homomorphism for PVMs and an isometry for localizable (norm-1 property) POVMs; its descent to $\Linf$ inherits these properties and is moreover faithful and normal. It can be identified with the spatial tensor product $\id_{\MS} \hat\otimes \Phi_\E$ where $\Phi_\E: L^\infty_\E(\Sigma) \to \MR$ is the faithful normal positive map corresponding to $\E$. Complete positivity of integration maps is derived from Stinespring factorization through Naimark dilation. When $\MS$ and $\MR$ are commuting $\mathrm{W}^*$-subalgebras of a global $\mathrm{W}^*$-algebra $\M$ such that the multiplication map $a \otimes b \mapsto ab$ extends to a normal $*$-homomorphism $\mu: \MSR \to \MS \vee \MR$---a~mild hypothesis satisfied whenever $\MS$ or $\MR$ is injective, in particular in all applications in the context of algebraic quantum field theory---the integral embeds canonically into~$\M$. We establish an operator-valued Leibniz rule and Fubini theorem for POVMs valued in such commuting $\mathrm{W}^*$-subalgebras. The main motivation for this work comes from quantum reference frames theory; the tools developed here are used to generalize the existing operational framework beyond its current applicability, notably allowing for the construction of relational models for relativistic gauge theories, in a forthcoming paper.
\end{abstract}
\vspace{5pt}

\makeatletter
\begin{multicols}{2}
  \@starttoc{toc}
\end{multicols}
\makeatother

\clearpage
\restoregeometry

\section{Introduction}

This paper develops a theory of integration of operator-valued functions with respect to positive operator-valued measures (POVMs) in an abstract algebraic context. In operational quantum theory one naturally encounters the problem of integrating operator-valued functions---parametrized families of observables/effect on a system $\S$---against POVMs---a `parameter observable' on another system, determining what is to be measured on $\S$. Assuming a minimal notion of independence, the algebras of the two systems $\MS$ and $\MR$ can be seen as mutually commuting $\mathrm{W}^*$-subalgebras of a global $\mathrm{W}^*$-algebra $\M$. Given a POVM $\E: (\Sigma,\F) \to \Eff(\MR)$ and a uniformly bounded ultraweakly measurable function $f: \Sigma \to \MS$, we construct a unique element $\int_\Sigma f \otimes d\E$ of the spatial tensor product $\MSR$ and, under a mild hypothesis, embed it into $\M$ via the multiplication map $\mu: \MSR \to \M$. The integral operator represents the $\Sigma \ni x$-dependent  quantity $f$ \emph{correlated/convoluted} with the observable $\E$ based on $\Sigma$. The construction is a common generalization of several classical theories: taking $\MS = \MR = \Cn$ recovers the Lebesgue integration theory for bounded functions and probability measures; taking $\MR = \Cn$ gives integration theory of bounded operator-valued functions against scalar probability measures (Pettis integrals \cite{diestel_vector_1977}); taking $\MS = \Cn$ gives integration of bounded scalar functions against POVMs, a standard tool in quantum measurement theory \cite{busch_quantum_2016,holevo_probabilistic_2011}. The case in which both $\MS$ and $\MR$ are arbitrary non-commutative $\mathrm{W}^*$-algebras is the genuinely new territory (see below for direct comparison with existing literature).

The construction is non-trivial even when its conclusion may seem natural. The defining formula (Thm.~\ref{thm:intop}) specifies the integral through its action on \emph{product} normal functionals, in the spirit of the Pettis integral: the integral operator is required to satisfy
\[
	(\rho \otimes \omega)\left(\int_\Sigma f \otimes d\E\right) = \int_\Sigma \rho \circ f \, d(\omega \circ \E), \quad \text{for all } \rho \in \S(\MS), \,\omega \in \S(\MR).
	\]
	It is defined as extensions of the functional this requirement defines on product states, exploiting the duality structure of the $\mathrm{W}^*$-algebras. The obstacle is that the linear extension of this map needs to remain bounded. Our approach establishes such a bound by passing to a faithful normal representation, where a sesquilinear form argument provides the necessary estimates sufficient to establish existence and uniqueness of $\int_\Sigma f \otimes d\E \in B(\hisr)$ as a bounded functional on $\T(\his \otimes \hir)$, and then as an element of $\MSR$.

The primary motivation comes from the theory of quantum reference frames (QRFs) \cite{loveridge_symmetry_2018,carette_operational_2025,fewster_quantum_2024,de_la_hamette_perspective-neutral_2021} and the program of relational quantum field theory (RQFT) \cite{glowacki_towards_2024,fedida_foundations_2025}. In the operational approach to QRFs \cite{loveridge_symmetry_2018,carette_operational_2025}, the \emph{relativization map} --- expressing system observables relative to a reference frame --- is constructed by integrating group-translated observables against a covariant POVM. Existing constructions operate under the premise that the frame algebras are type I ($B(\hi)$) and in restricted settings (locally compact groups, transitive actions, compact stabilizers) \cite{loveridge_symmetry_2018,glowacki_quantum_2024,fewster_quantum_2024}. The RQFT program requires the full machinery of operator-valued integration against covariant POVMs on general $G$-sets (including principal bundles). The present paper develops the integration theory in generality sufficient for this context.

\paragraph*{Main results.}

We collect here all the major results achieved in this work.

\emph{Operator-valued integration.} Let $\MS$ be a $\mathrm{W}^*$-algebra with separable predual, $\MR$ a $\mathrm{W}^*$-algebra, and $\E: \F \to \Eff(\MR)$ a normalized POVM on a measurable space $(\Sigma, \F)$. For each $f$ in the $\mathrm{C}^*$-algebra $\Bb$ of uniformly bounded ultraweakly measurable functions $f: \Sigma \to \MS$, the \emph{operator-valued integral} $\int_\Sigma f \otimes d\E \in \MSR$ is the unique element satisfying $(\rho \otimes \omega)(\int f \otimes d\E) = \int \rho \circ f \, d(\omega \circ \E)$ for all normal states $\rho \in \S(\MS)$, $\omega \in \S(\MR)$. It is also uniquely determined by continuity and the action on simple functions $\int_\Sigma \sum_j a_j \chi_{X_j} \otimes d\E = \sum_j a_j \otimes \E(X_j)$. The integration map
\[
    \int_\Sigma d\E: \Bb \longrightarrow \MSR, \; f \longmapsto \int_\Sigma f \otimes d\E,
\]
is a pointwise-normal\footnote{A map $T: \Bb \to \M$ is \emph{pointwise-normal} if whenever $(f_n) \subset \Bb$ is a bounded sequence with $f_n(x) \xrightarrow{\rm uw} f(x)$ for each $x \in \Sigma$, one has $T(f_n) \xrightarrow{\rm uw} T(f)$ in $\M$. Since $\Bb$ is a $\mathrm{C}^*$-algebra and not in general a $\mathrm{W}^*$-algebra, the standard $\mathrm{W}^*$-algebraic notion of normality (ultraweak continuity of a map between $\mathrm{W}^*$-algebras) does not directly apply; pointwise-normality is the natural analogue.} unital completely positive (CP) map, isometric (hence injective) for localizable POVMs and a $*$-homomorphism for PVMs.

\emph{Functoriality.} The construction is moreover functorial with respect to measurable maps and normal channels: for any measurable $\alpha: (\Sigma,\F) \to (\Sigma',\F')$, normal channels $\Psi: \MS \to \N_\S$, $\Phi: \MR \to \N_\R$, and uniformly bounded $\F$-measurable function $f: \Sigma' \to \MS$ we have
\[
    \int_\Sigma (\Psi \circ f \circ \alpha) \otimes d(\Phi \circ \E) \;=\; (\Psi \otimes \Phi)\Big(\int_{\alpha(\Sigma)} f \otimes d(\alpha_*\E)\Big).
\]

\emph{Descent to $\Linf$.} Once the POVM $\E$ is fixed and $\MRp$ assumed separable, the natural domain becomes the quotient $\Linf := \Bb/\N_\E$ by the $*$-ideal of $\E$-null functions, a $\mathrm{W}^*$-algebra defined in complete analogy with the scalar Lebesgue $L^\infty$-space via an $\E$-essential supremum norm. We have $\Linf \cong \MS \, \bar\otimes \, L^\infty_{\E}(\Sigma)$ with $L^\infty_{\E}(\Sigma)=L^\infty_{\mu_\E}(\Sigma)$ for any dominating probability measure $\mu_\E$ on $(\Sigma,\F)$, with the predual $L^1_\E(\Sigma) \,\hat\otimes_\pi\, \MSp$. The integration map factors through $\Linf$ and the descended map:
\[
    \int_\Sigma d\E: \Linf \longrightarrow \MSR, \; [f] \longmapsto \int_\Sigma f \otimes d\E,
\]
 is a normal unital completely positive (CP) map which is moreover \emph{faithful}, \emph{isometric} (hence injective) when $\E$ is localizable (in which case $\Linf = \Bb$) \emph{or sharp} (PVM), and a $*$-homomorphism in the latter case.
 
\emph{Monoidal characterization.} Normalized POVMs $\E: (\Sigma,\F) \to \Eff(\MR)$ such that $\N_\E=\N$ are in bijection with faithful normal unital positive linear maps $\Phi_\E: L^\infty_{\N}(\Sigma) \to \MR$. Sharpness and localizability of $\E$ translates to multiplicativity/injectivity of $\Phi_\E$. Under this bijection, and identifying $\MS \bar \otimes L^\infty_\E(\Sigma) \cong L^\infty_\E(\Sigma,\MS)$ (assuming $\MSp$ and $\MRp$ separable), the descended integration map is trivial in the following sense
\[
	\int_\Sigma d\E \;\cong\; \id_{\MS} \otimes \Phi_\E: \MS \bar\otimes L^\infty_\E(\Sigma) \longrightarrow \MSR.
\]

\emph{Stinespring factorization.} In any faithful normal representation, the integration map factors through the Naimark dilation $(\hik,V,\hat{\E})$ of the represented POVM:
\[
    \Big(\int_\Sigma f \otimes d\E\Big) \;=\; (\id_{\his} \otimes V^*)\Big(\int_\Sigma f \otimes d\hat{\E}\Big)(\id_{\his} \otimes V),
\]
where the right-hand side is integration against a PVM, hence an isometric unital $*$-homomorphism.\footnote{Stinespring factorization also suggests an alternative, necessarily equivalent, construction of the integral: one may first define the PVM integral on the dilated space (where it is a $*$-homomorphism) and then compress to the original space via $V^*(\cdot)V$. However, it is unclear if the PVM integral is genuinely easier to establish than the general case.}

\emph{Extension to commuting subalgebras.} When $\MS$ and $\MR$ are commuting $\mathrm{W}^*$-subalgebras of a global $\mathrm{W}^*$-algebra $\M$ with a normal multiplication map $\mu: \MSR \to \M$, $a \otimes b \mapsto ab$, the integral can be embedded into $\M$ as
\[
    \int_\Sigma f \, d\E := \mu\Big(\int_\Sigma f \otimes d\E\Big) \in \MS \vee \MR \subset \M,
\]
satisfying $\int \sum_j a_j \chi_{X_j} \, d\E = \sum_j a_j \E(X_j)$ for simple functions; together with pointwise-normality, this determines $\int f \, d\E$ for all $f \in \Bb$. The embedded integration map inherits all properties of the tensor-product integration except injectivity for localizable POVMs, which additionally requires injectivity of $\mu$ (split property).

\emph{Quantum Leibniz rule.} A function $f: I \to \MS$ on an open interval $I \subset \Reals$ is \emph{ultraweakly differentiable at $t \in I$} if there exists $\dot f(t) \in \MS$ such that $s \mapsto \rho(f(s))$ is differentiable at $s = t$ with derivative $\rho(\dot f(t))$ for every $\rho \in \MSp$ (equivalently, $\dot f(t)$ is the weak-$*$ derivative of $f$ at $t$). Given $Q: I \times \Sigma \to \MS$ such that $Q(t,\cdot)$ and its ultraweak derivative $\dot Q(t,\cdot)$ lie in $\Bb$ with locally uniform bounds in $t$, the embedded integral is ultraweakly differentiable in the parameter and integration commutes with differentiation:
\[
    \frac{d}{dt}\int_\Sigma Q(t,x)\,d\E(x) \;=\; \int_\Sigma \dot Q(t,x)\,d\E(x) \in \M.
\]

\emph{Quantum Fubini Theorem.} Given a triple of mutually commuting $\mathrm{W}^*$-subalgebras $\MS,\M_1, \M_2 \subset \M$ with separable preduals, a pair of POVMs $\E_j: \F_j \to \Eff(\M_j)$ for $j=1,2$, and a uniformly bounded jointly measurable map $Q: \Sigma_1 \times \Sigma_2 \to \MS$, the iterated integrals agree and equal the product integral:
\[
    \int_{\Sigma_1}\!\Big(\int_{\Sigma_2} Q(x_1,x_2) \, d\E_2(x_2)\Big) d\E_1(x_1) \;=\; \int_{\Sigma_1 \times \Sigma_2} Q \, d(\E_1\E_2) \;=\; \int_{\Sigma_2}\!\Big(\int_{\Sigma_1} Q(x_1,x_2) \, d\E_1(x_1)\Big) d\E_2(x_2)
\]
in $\MS \vee \M_1 \vee \M_2 \subset \M$, where $\E_1\E_2$ is the embedding of the product POVM $\E_1 \otimes \E_2$.

\paragraph*{Related work.}
Integration of vector-valued and operator-valued objects has a long history. The Bochner and Pettis integrals \cite{diestel_vector_1977} handle Banach-space-valued functions against scalar measures. The theory of integration with respect to countably additive \emph{vector} measures was initiated by Bartle, Dunford and Schwartz \cite{bartle_weak_1955} and developed systematically by Diestel and Uhl \cite{diestel_vector_1977}. Dobrakov \cite{dobrakov_integration_1970} developed a theory of integration of Banach-space-valued functions with respect to \emph{operator-valued} measures $m: \F \to L(X,Y)$, where the integral is defined via the bilinear pairing $(f(x), m(dx)) \mapsto m(dx)(f(x)) \in Y$; however, this construction pairs the integrand and measure values via evaluation rather than tensoring them, and the integral lands in the target Banach space $Y$ rather than in a tensor product. The bilinear approach --- where a bounded bilinear pairing $X \times Y \to Z$ allows one to integrate $X$-valued functions against $Y$-valued measures with the integral landing in $Z$ --- originates with Bartle \cite{bartle_general_1956}, who developed the general theory of integration with respect to a vector measure paired bilinearly with the integrand. A close precursor to the present work is the \emph{bilinear integration in tensor products} of Jefferies and Okada \cite{jefferies_bilinear_1998,jefferies_bilinear_2010}, who integrate $X$-valued functions against $Y$-valued measures with the integral taking values in a Banach space tensor product completion $X \hat{\otimes}_\tau Y$ for various cross-norm topologies $\tau$. Notably, Jefferies and Okada \cite{jefferies_semivariation_2005} show that not all uniformly bounded $X$-valued functions are $\tau$-integrable with respect to a $Y$-valued measure in general; integrability of all bounded functions requires Hilbert spaces and the Hilbert space tensor topology (via Grothendieck's theorem). Our construction differs from Jefferies--Okada in several essential respects: (i) the integral lands in the \emph{spatial} $\mathrm{W}^*$-tensor product $\MSR$, which is defined representation-theoretically as an ultraweak closure rather than as a cross-norm completion; (ii) integrability of all uniformly bounded ultraweakly measurable functions is guaranteed by the $\mathrm{W}^*$-algebraic structure, exploiting positivity of the POVM via the sesquilinear form argument of Lemma~\ref{lem:semiip}; (iii) the $\mathrm{W}^*$-setting provides the full quantum-information-theoretic toolkit --- complete positivity, the Stinespring factorization through the Naimark dilation, (pointwise)-normality of the integration map, injectivity for localizable POVMs and multiplicativity for PVMs --- which has no counterpart in the Banach space framework. These structures are \emph{necessary} for the intended applications: complete positivity is the mathematical expression of physicality of a quantum operation, while (pointwise)-normality ensures compatibility with the ultraweak topology that governs the predual (state-space) structure of $\mathrm{W}^*$-algebras. Under the assumption of separability of $\MRp$, the descended integration map becomes a $\mathrm{W}^*$-morphism itself, making the formalism maximally aligned with algebraic quantum information theory; the present work is set up in the natural framework for its applications. In the specifically quantum-mechanical matrix-algebra setting, Farenick, Plosker and Smith \cite{farenick_classical_2011} study an operator-valued integral of $M_d$-valued functions against $M_d$-valued POVMs on compact (or locally compact) Hausdorff sample spaces, defined via a symmetrized Radon--Nikodym formula $\int \psi\, d\nu = \int (d\nu/d\mu)^{1/2}\,\psi\,(d\nu/d\mu)^{1/2}\, d\mu$ relative to the normalized trace $\mu = (1/d)\,\mathrm{Tr}\circ\nu$, and use it to identify POVMs with unital completely positive maps $C(\Sigma) \otimes M_d \to M_d$; Farenick and Kozdron \cite{farenick_conditional_2012} build on this with a quantum Radon--Nikodym theorem and a quantum Bayes' rule. Their methods are tied to the finite-dimensional setting --- through the normalized trace, generalized inverses and geometric means of positive matrices, and compactness of matricial $\mathrm{C}^*$-convex hulls --- and do not directly extend beyond matrix algebras. Their integrand and POVM are moreover valued in the same $M_d$, and the integral returns there rather than landing in a tensor product, so the bipartite structure is absent. Finally, integration of \emph{specific} operator-valued functions with respect to covariant POVMs on transitive $G$-sets in the context of type I algebras appears in the construction of relativization maps \cite{loveridge_symmetry_2018,carette_operational_2025,fewster_quantum_2024}, but the integration theory itself was not developed or studied in further generality in those works.

\paragraph*{Organization.}
Section~\ref{sec:prelim} fixes notation and recalls $\mathrm{W}^*$-algebraic preliminaries. Sections~\ref{sec:generalities}--\ref{sec:parametrized} develop the integration theory: existence, uniqueness, functoriality and pointwise-normality of the integral in $\MSR$ (Sec.~\ref{sec:generalities}); the domains of integration (Sec.~\ref{sec:Bb}) --- the $\mathrm{C}^*$-algebra $\Bb$ as universal maximal common domain (Sec.~\ref{sec:Bb-universal}) and its quotient $\Linf$ as the specific $\mathrm{W}^*$-domain once a POVM is fixed (Sec.~\ref{sec:LinfE}); algebraic properties of the integration map (Sec.~\ref{sec:propint}) --- on $\Bb$ (Sec.~\ref{sec:propint-univ}) and via descent to $\Linf$ (Sec.~\ref{sec:propint-desc}); relation to monoidal structure on the $\mathrm{W}^*$-category (Sec.~\ref{sec:monoidal}); the Stinespring factorization through the Naimark dilation with complete positivity as corollary (Sec.~\ref{sec:CP}); the commuting-subalgebra embedding (Sec.~\ref{sec:commuting}); and parametrized integration: continuity, measurability, Leibniz rule, and Fubini theorem (Sec.~\ref{sec:parametrized}). All the results are collected in Section~\ref{sec:intsummary}. 

\section{Preliminaries}\label{sec:prelim}

\paragraph*{$\mathrm{W}^*$-algebras and preduals.}
A $\mathrm{W}^*$-algebra is a unital $\mathrm{C}^*$-algebra $\M$ that is isometrically isomorphic to the Banach dual of some Banach space. Such a predualizing Banach space, called the \emph{predual} and denoted $\Mp$, is unique up to isometric isomorphism \cite[Cor.~1.13.3]{sakai_Cstar_1971}. Sakai's theorem \cite[Thm.~1.16.7]{sakai_Cstar_1971} asserts that every $\mathrm{W}^*$-algebra is isometrically $*$-isomorphic to a unital $*$-subalgebra of $B(\hi)$, for some Hilbert space $\hi$, that is closed in the ultraweak topology; neither $\hi$ nor this concrete realization is unique. Elements of $\Mp$ are the \emph{normal} (i.e., ultraweakly continuous, see below) linear functionals on $\M$; the normal \emph{states} are
\[
\S(\M) := \{\omega \in (\Mp)_+ \; | \; \omega(\id_\M) = 1\},
\]
where $(\Mp)_+$ denotes the positive cone of the predual. By the Jordan decomposition of normal functionals, the set $\S(\M)$ is $\Cn$-linearly dense in $\Mp$. We denote the self-adjoint part of $\M$ by $\M^{\rm sa}$, the positive cone by $\M_+$, the unit ball by $\M_1 := \{A \in \M : \|A\| \leq 1\}$, and the set of \emph{effects} by
\[
\Eff(\M) := \{E \in \M^{\rm sa} \;| \;\zero \leq E \leq \id_\M\}.
\]
For the concrete case $\M = B(\hi)$, the predual is the space of trace-class operators $\T(\hi)$, with the pairing $\omega(A) = \tr[\omega A]$.

\paragraph*{Bimodule structure of the predual.}
The predual $\Mp$ carries a natural Banach $\M$-bimodule structure: for $\omega \in \Mp$ and $a \in \M$, the functionals $a \cdot \omega$ and $\omega \cdot a$ defined by
\begin{equation}\label{eq:bimodule}
    (a \cdot \omega)(x) := \omega(xa), \; (\omega \cdot a)(x) := \omega(ax) \; (x \in \M)
\end{equation}
are again normal functionals in $\Mp$, with $\|a \cdot \omega\| \leq \|a\| \cdot \|\omega\|$ and $\|\omega \cdot a\| \leq \|a\| \cdot \|\omega\|$.

\paragraph*{Topologies.}
The \emph{ultraweak} (or $\sigma$-\emph{weak}) topology on $\M$ is $\sigma(\M, \Mp)$, i.e., $A_\alpha \to A$ ultraweakly iff $\omega(A_\alpha) \to \omega(A)$ for all $\omega \in \Mp$. By the Banach--Alaoglu theorem, $\M_1$ is ultraweakly compact. If $\Mp$ is separable, the ultraweak topology on $\M_1$ is metrizable: given a countable norm-dense subset $(\omega_k)_{k \in \Nn}$ of $\Mp$, the formula
\begin{equation}\label{eq:metric}
    d(A,B) := \sum_{k=1}^\infty 2^{-k}\frac{|\omega_k(A-B)|}{1+|\omega_k(A-B)|}
\end{equation}
defines a metric on $\M_1$ inducing the ultraweak topology.

\paragraph*{Faithful representations.}
A $\mathrm{W}^*$-algebra $\M$ with separable predual admits a faithful normal $*$-representation $\pi: \M \hookrightarrow B(\hi)$ on a \emph{separable} Hilbert space.\footnote{Since $\Mp$ is separable, $\S(\M)$ contains a countable norm-dense subset $\{\omega_n\}$. The direct sum of the GNS representations~$\bigoplus_n \pi_{\omega_n}$ is faithful (since the $\omega_n$ are dense, they separate points of $\M$) and normal, and acts on the separable Hilbert space~$\bigoplus_n \hi_{\omega_n}$.} In any concrete realization $\pi(\M) \subseteq B(\hi)$, every normal functional $\omega \in \Mp$ can be represented as $\omega(A) = \tr[T_\omega \pi(A)]$ for some $T_\omega \in \T(\hi)$ (not uniquely in general), with $\|\omega\| = \inf\{\|T\|_1 : \omega(\cdot) = \tr[T \,\pi(\cdot)]\}$; for positive $\omega$, the representative can be chosen positive and $\|\omega\| = \omega(\id_\M)$. Faithful normal representations are used throughout the paper as proof tools, but never appear in the hypotheses of the main results.

\paragraph*{Spatial tensor products.}
Given $\mathrm{W}^*$-algebras $\MS$ and $\MR$, choose faithful normal representations $\pi_\S: \MS \hookrightarrow B(\his)$ and $\pi_\R: \MR \hookrightarrow B(\hir)$. The \emph{spatial} (or \emph{normal}) \emph{tensor product} is the $\mathrm{W}^*$-algebra
\[
    \MSR := \overline{\pi_\S(\MS) \otimes_{\rm alg} \pi_\R(\MR)}^{\rm uw} \subset B(\hisr),
\]
i.e., the ultraweak closure of the algebraic tensor product \cite[Sec.~IV.5]{takesaki2001theory}. Up to canonical $*$-isomorphism, this is independent of the choice of faithful normal representations \cite[Cor.~IV.5.3]{takesaki2001theory}; we therefore write $\MSR$ without reference to $\pi_\S, \pi_\R$.

For $\rho \in \MSp$ and $\omega \in \MRp$, the map $(\rho \otimes \omega)(a \otimes b) := \rho(a)\omega(b)$ extends uniquely by ultraweak continuity to a normal functional $\rho \otimes \omega \in \MSRp$. We record the density of product functionals.

\begin{lemma}\label{lem:proddense}
The $\Cn$-linear span of the product normal states $\S(\MSR)_{\rm prod} := \{\rho \otimes \omega : \rho \in \S(\MS),\, \omega \in \S(\MR)\}$ is norm-dense in $\MSRp$.
\end{lemma}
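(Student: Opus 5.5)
By the Jordan decomposition recalled in the preliminaries, the $\Cn$-span of $\S(\MS)$ is all of $\MSp$, and likewise for $\MR$. The $\Cn$-span of product states is therefore the span of all product functionals $\rho\otimes\omega$ with $\rho\in\MSp$ and $\omega\in\MRp$, by bilinearity of $(\rho,\omega)\mapsto\rho\otimes\omega$. I would prove norm-density of this linear subspace $D\subset\MSRp$ by a Hahn--Banach duality argument.

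Suppose $D$ is not norm-dense. Then there is a nonzero bounded functional on $\MSRp$ vanishing on $D$. Since $\MSRp$ is the predual, $(\MSRp)^*\cong\MSR$, so this functional is given by some $X\in\MSR$ with $X\neq 0$ and $(\rho\otimes\omega)(X)=0$ for all $\rho\in\MSp$, $\omega\in\MRp$. It then suffices to show that such an $X$ must vanish.

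For this I would work in the faithful normal representations $\pi_\S,\pi_\R$ used to define $\MSR\subset B(\hisr)$. Take vector functionals $\rho=\langle\xi,\cdot\,\xi'\rangle$ on $\his$ and $\omega=\langle\eta,\cdot\,\eta'\rangle$ on $\hir$, restricted to the represented algebras; these are normal. Their product functional $\rho\otimes\omega$ agrees with the vector functional $\langle\xi\otimes\eta,\cdot\,(\xi'\otimes\eta')\rangle$ on the algebraic tensor product. Both functionals are ultraweakly continuous, so they agree on the ultraweak closure $\MSR$. Hence
\[
\langle\xi\otimes\eta, X(\xi'\otimes\eta')\rangle = 0
\]
for all elementary tensors. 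Simple tensors span a dense subspace of $\hisr$ and $X$ is bounded, so by sesquilinearity and continuity $X=0$. This is a contradiction.

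The only step needing care is the identification of $\rho\otimes\omega$, which was defined by ultraweak extension, with the restriction of the corresponding vector (or trace-class) functional on $B(\hisr)$. This is where the uniqueness of the extension stated just before the lemma is used. Alternatively one can avoid vectors: write $\rho=\tr[T_\rho\,\cdot\,]$ and $\omega=\tr[T_\omega\,\cdot\,]$. Then $\rho\otimes\omega=\tr[(T_\rho\otimes T_\omega)\,\cdot\,]$, and finite-rank tensors $T_\rho\otimes T_\omega$ span a trace-norm-dense subspace of $\T(\hisr)$, whose quotient map onto $\MSRp$ is contractive and surjective. That route gives density directly, without Hahn--Banach.
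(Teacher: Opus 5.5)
Your proof is correct. The alternative you sketch at the end (trace-class representatives, trace-norm density of product operators in $\T(\hisr)$, and the contractive surjection onto $\MSRp$) is essentially the paper's proof. The paper takes rank-one operators $|\xi'\otimes\eta'\rangle\langle\xi\otimes\eta|$ and observes that their images under the quotient map $r:\T(\hisr)\to\MSRp\cong\T(\hisr)/(\MSR)^\perp$ are product functionals. It then concludes because a continuous surjection maps a dense set onto a dense set.

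Your main argument is a genuinely different route. Via Hahn--Banach you turn norm-density into the statement that product functionals separate the points of $(\MSRp)^*\cong\MSR$. You then verify this with vector functionals of elementary tensors. The trade-off is as follows:
\begin{itemize}
\item The paper's route is constructive and exhibits explicit approximants. It relies on the surjectivity of $r$, i.e.\ on every normal functional on $\MSR$ being implemented by a trace-class operator on $\hisr$.
\item Your route is non-constructive but needs only the easy direction: restrictions of vector functionals to $\MSR$ are normal.
\item Your route also makes explicit that the lemma is equivalent to the point-separation property of product states. That property is what the downstream uniqueness statements actually use, e.g.\ item~1 of Lemma~\ref{lem:extfromstates} for product states.
\end{itemize}

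Your bilinearity step, which identifies the span of product states with the span of all $\rho\otimes\omega$ for $\rho\in\MSp$, $\omega\in\MRp$, is also the clean way to finish. The paper's phrase ``at most four product normal states'' is best read as ``finitely many'': multiplying out two Jordan decompositions gives up to sixteen terms. This is harmless for density.
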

\begin{proof}
Choose faithful normal representations $\pi_\S, \pi_\R$ and identify $\MSR$ with a $\mathrm{W}^*$-subalgebra of $B(\hisr)$. The predual is $\MSRp \cong \T(\hisr)/(\MSR)^\perp$, where $(\MSR)^\perp := \{T \in \T(\hisr) : \tr[TA] = 0 \;\forall A \in \MSR\}$ is the preannihilator.\footnote{Equivalently, $T \sim T'$ iff $\tr[TA]=\tr[T'A]$ for all $A \in \MSR$.} The quotient map $r: \T(\hisr) \twoheadrightarrow \MSRp$ is a norm-continuous surjection. Now consider the rank-one operators $\dyad{\xi' \otimes \eta'}{\xi \otimes \eta}$. They span a dense subspace of $\T(\hisr)$ in trace norm and their images under $r$~are product normal functionals: $\tr[\dyad{\xi' \otimes \eta'}{\xi \otimes \eta} \cdot A \otimes B] = \braket{\xi}{A\xi'}\braket{\eta}{B\eta'}$. Since, by the Jordan decomposition, every product normal functional is a $\Cn$-linear combination of at most four product normal states, the claim follows.
\end{proof}

\paragraph*{Extension from states.}
The duality $(\Mp)^* \cong \M$ and the $\Cn$-linear density of $\S(\M)$ in $\Mp$ together give a powerful tool for constructing and identifying elements of $\M$: it suffices to specify their values on normal states.

\begin{lemma}\label{lem:extfromstates}
    Let $\M$ be a $\mathrm{W}^*$-algebra with normal states $\S(\M) \subset \Mp$.
    \begin{enumerate}
        \item Any $A \in \M$ is uniquely specified by the bounded affine map
        $\hat{A}: \S(\M) \ni \omega \mapsto \omega(A) \in \Cn$.
        \item Conversely, any bounded affine map $\Phi: \S(\M) \to \Cn$ extends uniquely by linearity and continuity to a bounded linear functional on $\Mp$, which by duality corresponds to a unique $A_\Phi \in \M$ with $\omega(A_\Phi) = \Phi(\omega)$ for all $\omega \in \S(\M)$.
        \item For self-adjoint $A$, this identification is isometric: $\|\hat{A}\| = \|A\|$; equivalently, if $\Phi$ is real-valued on $\S(\M)$ then $\|A_\Phi\| = \|\Phi\|$.
        \begin{multicols}{2}
        \item $A_\Phi^* = A_{\overline{\Phi}}$ where $\overline{\Phi}(\omega) := \overline{\Phi(\omega)}$.
        \item $A_\Phi \geq 0$ iff $\Phi \geq 0$ on $\S(\M)$.
        \item $A_\Phi \in \Eff(\M)$ iff $\Phi(\S(\M)) \subset [0,1]$.
        \item $A_\Phi = \id_\M$ iff $\Phi \equiv 1$ on $\S(\M)$.
        \end{multicols}
    \end{enumerate}
    If $\M = \MSR$, items 1--4 and 7 hold with $\S(\M)$ replaced by $\S(\MSR)_{\rm prod}$ (Lemma~\ref{lem:proddense}). Items 5--6 require testing against all normal states and do not extend to product states alone.\footnote{Product states do not generate the positive cone of $\MSRp$ in general: the flip operator in $M_2(\Cn) \bar{\otimes} M_2(\Cn)$ has non-negative expectation on all product states but is not positive.}
\end{lemma}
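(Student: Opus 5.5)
The plan is to rest everything on two facts: the duality $(\Mp)^* \cong \M$, and the $\Cn$-linear density of $\S(\M)$ in $\Mp$ (Jordan decomposition). \emph{Item 1} follows because a normal functional $\omega\in\Mp$ is a combination $\omega=\sum_{i=1}^4 c_i\omega_i$ of at most four normal states. Hence if $\omega(A)=\omega(B)$ for all states, then $\omega(A-B)=0$ for all $\omega\in\Mp$. Since $\Mp$ separates points of $\M=(\Mp)^*$, this gives $A=B$.

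\emph{Item 2} is the substantive step. Given a bounded affine $\Phi$ on $\S(\M)$, I would first extend it to the positive cone by $\Phi(t\omega):=t\Phi(\omega)$. Additivity on the cone comes from affinity, since
\[
\Phi(\omega_1+\omega_2)=(t_1+t_2)\,\Phi\!\left(\tfrac{t_1}{t_1+t_2}\hat\omega_1+\tfrac{t_2}{t_1+t_2}\hat\omega_2\right)=t_1\Phi(\hat\omega_1)+t_2\Phi(\hat\omega_2),
\]
where $\hat\omega_i=\omega_i/t_i$. Next I would extend to the self-adjoint part of $\Mp$ via $\Phi(\omega_+-\omega_-):=\Phi(\omega_+)-\Phi(\omega_-)$, which is well defined by additivity on the cone. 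Finally I would extend complex-linearly via real and imaginary (hermitian) parts.

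The main obstacle is boundedness of this extension. For hermitian $\omega$, the Jordan decomposition satisfies $\|\omega\|=\|\omega_+\|+\|\omega_-\|$, which gives $|\Phi(\omega)|\le\sup|\Phi|\,\|\omega\|$. For general $\omega=\omega_1+i\omega_2$, I would use $\|\omega_j\|\le\|\omega\|$, giving at most a factor $2$. This suffices to obtain a bounded linear functional, hence a unique $A_\Phi\in\M$. Uniqueness of $A_\Phi$ is item 1. The extension itself is uniquely determined, since states span $\Mp$.

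\emph{Item 3.} For $A=A^*$ the standard identity $\|A\|=\sup_{\omega\in\S(\M)}|\omega(A)|$ holds. I would justify it by restricting to the abelian $\mathrm{W}^*$-algebra generated by $A$, or by using normal states approximating the spectral projection of $A$ near $\pm\|A\|$; in that projection $p$, any normal state supported on $p$ has $|\omega(A)|\ge\|A\|-\epsilon$.

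\emph{Items 4--7} are then direct checks, using $\omega(A^*)=\overline{\omega(A)}$ for positive $\omega$ together with item 1. For item 5, I would use that $A\ge0$ iff $\omega(A)\ge0$ for all normal states, since normal vector states suffice in a faithful normal representation. Item 6 is item 5 applied to $A$ and $\id-A$, and item 7 is item 1.

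For $\M=\MSR$, I would repeat the arguments for items 1, 2 and 4, replacing Jordan density by Lemma~\ref{lem:proddense}. The argument has two parts.
\begin{enumerate}
\item \emph{Well-definedness.} Affinity on product states must be interpreted as linearity in each slot separately. Well-definedness of the linear extension is then exactly the point to verify: boundedness on the span yields a bounded functional on the dense span, which extends by continuity.
\item \emph{Items 3 and 7.} For item 3, I would use that $\sup$ over product states equals $\|A\|$ only in the inequality direction needed when $\Phi$ comes from a given element; otherwise I would state it via the extended functional's norm. Item 7 follows from item 1 restricted to product states.
\end{enumerate}
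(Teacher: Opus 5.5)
For a general $\M$ your argument is correct and follows the paper's route: Jordan density of $\S(\M)$ in $\Mp$ together with $(\Mp)^*\cong\M$. You also supply details the paper omits. Your cone/Jordan construction actually shows that the extension in item~2 is well defined and bounded by $2\sup|\Phi|$. Your spectral-projection argument for item~3 is the correct form of the paper's remark that spectral values are ``attained by some state''. For \emph{normal} states that remark holds only approximately: the function $A(x)=x$ in $L^\infty[0,1]$ has $\|A\|=1$, but $\omega(A)<1$ for every normal state $\omega$.

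The gap is in the tensor-product variant, at the step ``boundedness on the span yields a bounded functional''. You never show that the linear extension of $\Phi$ to $\mathrm{span}\,\S(\MSR)_{\rm prod}=\MSp\odot\MRp$ is bounded for the norm of $\MSRp$. Lemma~\ref{lem:proddense} cannot supply this. A bound on product states only bounds the associated bilinear form, i.e.\ it gives continuity for the Banach projective norm. That norm is in general not equivalent to the norm that $\MSRp$ induces on $\MSp\odot\MRp$.

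In fact the claim is false. Take $\MS=\MR=B(\ell^2)$ and put $\Phi(\rho\otimes\omega):=\tr[\rho^{\top}\omega]$, with the transpose taken in a fixed basis $(e_i)$. This is bilinear and takes values in $[0,1]$ on product states. Now let $\Omega_d:=\sum_{i\le d}e_i\otimes e_i$ and $P_d:=d^{-1}|\Omega_d\rangle\langle\Omega_d|$; this is a state lying in $\T(\ell^2)\odot\T(\ell^2)$. The linear extension of $\Phi$ evaluated on $P_d$ equals $d$, so no $A_\Phi\in\MSR$ exists.

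Item~3 fails even in $M_d(\Cn)\otimes M_d(\Cn)$. There $P_d$ is a projection of norm $1$, yet $(\rho\otimes\omega)(P_d)\le 1/d$ on all product states. This follows from Cauchy--Schwarz on $|\sum_i\psi_i\phi_i|^2$ for pure product states, and then from convexity.

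So your hedge on item~3 should become an explicit correction, and the same applies to item~2. Via Lemma~\ref{lem:proddense}, only the uniqueness-type items~1, 4 and~7 transfer to product states. Existence of $A_\Phi$ needs the additional input that the extension is bounded in the $\MSRp$-norm. That is exactly what Lemma~\ref{lem:semiip} provides in the proof of Theorem~\ref{thm:intop}. The paper's one-line reduction ``follows from Lemma~\ref{lem:proddense}'' has the same defect. It is harmless later on, since the rest of the paper only uses uniqueness on product states.
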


\begin{proof}
Items 1--2 follow from the $\Cn$-linear density of $\S(\M)$ in $\Mp$: a bounded affine map on $\S(\M)$ extends uniquely by linearity to the span of $\S(\M)$ and thence by continuity to all of $\Mp$; the duality $(\Mp)^* \cong \M$ then gives the unique element $A_\Phi$. Item 3: for self-adjoint $A$, $\|A\| = \sup_{\omega \in \S(\M)} |\omega(A)|$ because the norm of a self-adjoint element equals its spectral radius, and each spectral value is attained by some state; for general $A$, $\sup_{\S(\M)}|\hat{A}|$ can be strictly smaller than $\|A\|$ (recovering $\|A\|$ requires the supremum over $(\Mp)_1$, not just states). Item 4: since states are self-adjoint functionals, $\omega(A_\Phi^*) = \overline{\omega(A_\Phi)} = \overline{\Phi(\omega)} = \overline{\Phi}(\omega)$ for all $\omega \in \S(\M)$, so $A_\Phi^* = A_{\overline{\Phi}}$ by item 1. Item 5: non-negativity of $\Phi$ on $\S(\M)$ extends to non-negativity on $(\Mp)_+$ since $(\Mp)_+$ is the closed cone generated by $\S(\M)$. Item 6 follows from item 5 applied to $\Phi$ and $\hat{\id} - \Phi$. Item 7: $\Phi \equiv 1$ on $\S(\M)$ means $\Phi = \hat{\id}$ on $\S(\M)$, so $A_\Phi = \id_\M$ by item 1. The tensor product variant for items 1--4 and 7 follows from Lemma~\ref{lem:proddense}; the argument for items 5--6 uses that $\S(\M)$ generates $(\Mp)_+$, which fails for product states in $\MSRp$.
\end{proof}

\paragraph*{Normal maps.}
A linear map $\Phi: \M \to \N$ between $\mathrm{W}^*$-algebras is \emph{normal} if it is continuous with respect to the ultraweak topologies or, equivalently, if it possesses a predual map $\Phi_*: \N_* \to \Mp$ satisfying $\omega(\Phi(A)) = (\Phi_*\omega)(A)$ for all $A \in \M$, $\omega \in \N_*$. It is \emph{unital} if $\Phi(\id_\M) = \id_\N$, \emph{positive} if $\Phi(\M_+) \subseteq \N_+$, $n$-\emph{positive} if $\id_n \otimes \Phi: M_n(\Cn) \otimes \M \to M_n(\Cn) \otimes \N$ is positive, and \emph{completely positive (CP)} if it is $n$-positive for all $n \in \Nn$. Normal unital CP maps are called \emph{channels}. Normal states $\omega \in \S(\M)$ are precisely the channels $\M \to \Cn$.

\paragraph*{The multiplication map for commuting subalgebras.}
Several parts of the paper (Section~\ref{sec:commuting} onwards) work with commuting $\mathrm{W}^*$-subalgebras of a global $\mathrm{W}^*$-algebra. In this setting the algebraic multiplication $a \otimes b \mapsto ab$ on $\MS \otimes_{\rm alg} \MR$ is separately normal and therefore extends to a normal $*$-homomorphism on the \emph{binormal} $\mathrm{W}^*$-tensor product $\MS \otimes_{\rm bin}^{W^*} \MR$; it does not, however, always descend to the \emph{spatial} tensor product $\MSR$.\footnote{The canonical surjection $\MS \otimes_{\rm bin}^{W^*} \MR \twoheadrightarrow \MSR$ can have non-trivial kernel: take $\MS = \MR = \M = L^\infty[0,1]$, trivially commuting. Then $\MSR \cong L^\infty([0,1]^2)$ while $\MS \vee \MR = L^\infty[0,1]$, and the would-be $\mu: L^\infty([0,1]^2) \to L^\infty[0,1]$ amounts to pullback along the diagonal $[0,1] \hookrightarrow [0,1]^2$, which is ill-defined as a map of $L^\infty$-spaces since the diagonal has Lebesgue measure zero.} Whenever we consider a (finite) family $\M_1, \ldots, \M_n \subset \M$ of mutually commuting $\mathrm{W}^*$-subalgebras, we will assume the multiplication map extends to a normal unital $*$-homomorphism
\[
    \mu: \M_1 \bar\otimes \cdots \bar\otimes \M_n \longrightarrow \M_1 \vee \cdots \vee \M_n \subseteq \M, \quad a_1 \otimes \cdots \otimes a_n \longmapsto a_1 \cdots a_n.
\]

This is a mild hypothesis. It is satisfied, in particular, whenever each $\M_i$ is \emph{injective} (equivalently, \emph{semidiscrete}; see e.g.\ \cite[Ch.~XV]{takesaki2003theoryIII}): by the Effros--Lance theorem \cite{effros_tensor_1977}, semidiscreteness of a $\mathrm{W}^*$-algebra makes the canonical surjection from the binormal to the spatial $\mathrm{W}^*$-tensor product with any partner an isomorphism, so the binormal extension of $\mu$ descends. In the algebraic quantum field theory context \cite{haag_local_1996}, local algebras of bounded spacetime regions under standard assumptions are hyperfinite type~III$_1$ factors \cite{buchholz_universal_1987}, hence injective --- the assumption is therefore automatic in all AQFT applications targeted by this work. The stronger property that $\mu$ is moreover injective --- equivalently, $\MS \vee \MR \cong \MSR$ canonically via $\mu$ --- is the \emph{split property}; by Doplicher--Longo \cite{doplicher_standard_1984,summers_independence_1990}, it is equivalent to the existence of a type~I factor $\mathcal{F}$ with $\MS \subset \mathcal{F} \subset \MR'$ in $\M$. Split property is also widely endorsed in AQFT models for space-like separated local algebras.

\paragraph*{Positive operator-valued measures.}
Given a measurable space $(\Sigma,\F)$ and a $\mathrm{W}^*$-algebra $\M$, a \emph{positive operator-valued measure (POVM)} on $(\Sigma,\F)$ with values in $\M$ is a map $\E: \F \to \Eff(\M)$ such that for any normal state $\omega \in \S(\M)$ the set function
\[
    \E_\omega: \F \ni X \longmapsto  \omega(\E(X)) \in [0,1]
\]
is a probability measure on $(\Sigma,\F)$; equivalently, $\E(\emptyset) = \zero$, $\E(\Sigma) = \id_\M$, and for any sequence of disjoint sets $\{X_n\} \subset \F$ we have $\E(\cup_n X_n) = \sum_n \E(X_n)$ with the sum converging ultraweakly. A POVM is \emph{sharp}, or a \emph{projection-valued measure (PVM)}, if all its effects are projections, and \emph{localizable} \cite{carette_operational_2025} if for any $x \in \Sigma$ there exists a sequence $\{\omega_n^x\}_{n \in \Nn} \subset \S(\M)$ such that $\lim_{n \to \infty} \int_\Sigma f \, d\E_{\omega_n^x} = f(x)$ for every bounded $\F$-measurable $f: \Sigma \to \Cn$ (i.e., $\E_{\omega_n^x}$ converges weakly to the point mass $\delta_x$). This is a version of the norm-$1$ property suitable to our purpose: on metrizable $\Sigma$ it is equivalent~\cite{carette_operational_2025} to the standard norm-$1$ condition $\|\E(X)\| = 1$ for every $X \in \F$ with $\E(X) \neq 0$ (see~\cite{busch_quantum_2016} for the $B(\hi)$ case), but the localizing-sequence formulation is operationally transparent, applies directly on general measurable spaces, and is exactly the form needed in the current context.\footnote{Note that in this definition, sharp POVM may not be localizable on the whole of $\Sigma$.}

POVMs are subject to actions of pre- and post-composition with suitable maps: given a measurable map $\alpha: (\Sigma,\F) \to (\Sigma',\F')$, the \emph{push-forward} $\alpha_*\E := \E \circ \alpha^{-1}$ is a POVM on $(\Sigma',\F')$ \cite{busch_quantum_2016}, and given a normal channel $\Phi: \M \to \N$, the map $\Phi \circ \E: X \mapsto \Phi(\E(X))$ is a POVM on $(\Sigma,\F)$ with values in $\N$ and we have
\begin{equation}\label{eq:postcompre}
(\Phi \circ \E)_\omega = \E_{\Phi_*\omega}.
\end{equation}

\section{The operator-valued integrals}\label{sec:generalities}

Throughout Sections~\ref{sec:generalities}--\ref{sec:parametrized}, $\MS$ and $\MR$ are arbitrary $\mathrm{W}^*$-algebras, $(\Sigma,\F)$ is a measurable space, and $\E: \F \to \Eff(\MR)$ is a normalized POVM. Separability of the preduals $\MSp$ and/or $\MRp$ is assumed when needed.

\begin{definition}\label{def:measurable}
    A function $f: \Sigma \to \MS$ is called $(\F$-$)$\emph{ultraweakly measurable} if for each $\rho \in \S(\MS)$ $($equivalently, for each $\rho \in \MSp$$)$ the scalar function
    \[
    	f_\rho: \Sigma \ni x \mapsto \rho(f(x)) \in \Cn
    \]
    is $($$\F$-$)$measurable.\footnote{In the terminology of Diestel and Uhl \cite[Ch.~II, Sec.~1]{diestel_vector_1977}, a function $f: \Sigma \to \MS$ is \emph{$\Gamma$-measurable} for a subspace $\Gamma \subseteq \MS^*$ if $\Lambda \circ f$ is measurable for every $\Lambda \in \Gamma$. The space $\Bmes$ used throughout this paper is the space of $\MSp$-measurable functions in this sense, rather than the (in general smaller) space of Pettis-measurable functions which in this case would take $\Gamma = \MS^*$.} It is uniformly bounded if $\|f\|_\infty = \sup_{x \in \Sigma} \|f(x)\| < \infty$. We denote by $\Bmes$ the set of $($$\F$-$)$ultraweakly measurable $\MS$-valued functions on $\Sigma$, and by $\Bb$ the subspace of uniformly bounded ones.
\end{definition}

Notice that for any $f \in \Bb$ the functions $f_\rho$ are $\F$-measurable and bounded (by $\|f\|_\infty$), and hence integrable with respect to any finite measure; in particular $f_\rho$ is $\E_\omega$-integrable for all $\rho \in \S(\MS)$, $\omega \in \S(\MR)$ and any POVM $\E: \F \to \Eff(\MR)$.\footnote{A bigger space of $\E$-integrable but not necessarily bounded functions can be considered, but the theory developed in this work relies heavily on the uniform boundedness assumption; see Sec. \ref{sec:conclusion} for a brief discussion of the potential extension.}

\subsection{Preparatory lemmas}

To construct the integral and establish its boundedness, we work with a faithful normal representation. The following three lemmas are proved at this level.

\begin{lemma}\label{lem:Hilbertlevel}
Choose faithful normal representations $\pi_\S: \MS \hookrightarrow B(\his)$ and $\pi_\R: \MR \hookrightarrow B(\hir)$, and write
\[
    \tilde{f}_{\xi,\xi'}(x) := \braket{\xi}{\pi_\S(f(x))\xi'}, \; \tilde{\E}_{\eta,\eta'}(X) := \braket{\eta}{\pi_\R(\E(X))\eta'}
\]
for $\xi,\xi' \in \his$ and $\eta,\eta' \in \hir$. Then $f_\rho$ is $\E_\omega$-integrable for all $\rho \in \S(\MS)$, $\omega \in \S(\MR)$ if and only if $\tilde{f}_{\xi,\xi'}$ is $|\tilde{\E}_{\eta,\eta'}|$-integrable for all $\xi,\xi',\eta,\eta'$.
\end{lemma}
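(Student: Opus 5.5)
The plan is to reduce both conditions to statements about finitely many scalar measures, using polarization and the vector-state description of normal functionals in a faithful normal representation.

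For the direction ($\Leftarrow$), let $\rho \in \S(\MS)$ and $\omega \in \S(\MR)$. By the preliminaries on faithful representations, $\rho = \tr[T_\rho \pi_\S(\cdot)]$ with $T_\rho \geq 0$ trace class, so $\rho(a) = \sum_i \lambda_i \braket{\xi_i}{\pi_\S(a)\xi_i}$ with $\lambda_i \geq 0$ and $\sum_i \lambda_i = 1$. In the same way $\omega(b) = \sum_j \mu_j \braket{\eta_j}{\pi_\R(b)\eta_j}$.

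Step 1. By normality of $\pi_\R$ and the dominated convergence of the series, $\E_\omega = \sum_j \mu_j \tilde{\E}_{\eta_j,\eta_j}$ as measures. Each summand is a positive measure of total mass $\|\eta_j\|^2$.

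Step 2. Similarly, $f_\rho(x) = \sum_i \lambda_i \tilde{f}_{\xi_i,\xi_i}(x)$ pointwise.

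Step 3. Integrability of $f_\rho$ with respect to $\E_\omega$ then reduces to showing $\sum_{i,j} \lambda_i \mu_j \int |\tilde{f}_{\xi_i,\xi_i}| \, d\tilde{\E}_{\eta_j,\eta_j} < \infty$. Here a uniform bound is needed, and this is the main obstacle: the hypothesis gives integrability of each term but no control over the sum. I would first try to get it from uniform boundedness, since $|\tilde f_{\xi,\xi}| \leq \|f\|_\infty \|\xi\|^2$. I would then observe that in the paper's setting $f \in \Bb$, so both sides of the equivalence hold automatically, with bounded measurable integrands against finite (complex) measures. The honest content of the lemma is therefore just the translation between states and vector functionals.

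For the direction ($\Rightarrow$), fix vectors $\xi,\xi',\eta,\eta'$.

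Step 1. Polarization writes $\tilde{f}_{\xi,\xi'} = \frac14\sum_{k=0}^3 i^k \tilde{f}_{\xi_k,\xi_k}$ with $\xi_k = \xi + i^k\xi'$, and analogously $\tilde{\E}_{\eta,\eta'} = \frac14\sum_l i^l \tilde{\E}_{\eta_l,\eta_l}$.

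Step 2. This gives $|\tilde{\E}_{\eta,\eta'}| \leq \frac14\sum_l \tilde{\E}_{\eta_l,\eta_l}$.

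Step 3. Each $\tilde{f}_{\xi_k,\xi_k}$ is a positive multiple of $f_{\rho_k}$ for the vector state $\rho_k$ determined by $\xi_k$ (or zero). Each $\tilde{\E}_{\eta_l,\eta_l}$ is a multiple of $\E_{\omega_l}$ for the vector state $\omega_l$. Normality of $\pi$ makes these genuine normal states.

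Step 4. Integrability of each $f_{\rho_k}$ against each $\E_{\omega_l}$ then gives integrability of $\tilde f_{\xi,\xi'}$ against $|\tilde{\E}_{\eta,\eta'}|$. This is a finite sum, so this direction needs no uniform estimate.

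For ($\Leftarrow$) with general density-operator states, I would handle the infinite sums by writing $\rho$ via a single vector in an amplification $\his \otimes \ell^2$, or more simply by invoking boundedness of $f$ as above. In full generality the expected difficulty is exactly the interchange of the countable decomposition with integration.
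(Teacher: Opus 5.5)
Your ($\Rightarrow$) direction is correct and is essentially the paper's argument. You use polarization where the paper uses the Jordan decomposition of the vector functional. You bound $|\tilde\E_{\eta,\eta'}|\le\frac14\sum_l\tilde\E_{\eta_l,\eta_l}$ by the triangle inequality, where the paper uses the Cauchy--Schwarz bound $|\tilde\E_{\eta,\eta'}|\le\frac12(\tilde\E_{\eta,\eta}+\tilde\E_{\eta',\eta'})$. Either way only finitely many states appear.

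The ($\Leftarrow$) direction is not proved, though you have identified exactly the right spot. The paper takes the same route: it uses countable vector decompositions of $\rho$ and of $\E_\omega=\sum_l\mu_l\tilde\E_{\eta_l,\eta_l}$ and concludes directly from termwise integrability. But Tonelli only gives
\[
\int_\Sigma|f_\rho|\,d\E_\omega\;\le\;\sum_{k,l}\lambda_k\mu_l\int_\Sigma|\tilde f_{\xi_k,\xi_k}|\,d\tilde\E_{\eta_l,\eta_l},
\]
and finiteness of each term says nothing about the double series. Neither of your remedies closes this.
\begin{itemize}
\item Falling back on $f\in\Bb$ makes the lemma vacuous. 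For bounded $f$ both sides hold trivially, and the lemma carries no boundedness hypothesis, so its only content is the unbounded case.
\item The amplification to $\his\otimes\ell^2$ is circular. The hypothesis concerns only vectors of $\his$, and integrability of $\sum_k\lambda_k\tilde f_{\xi_k,\xi_k}$ is precisely what must be shown.
\end{itemize}

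The missing ingredient is an automatic uniform estimate
\[
\int_\Sigma|\tilde f_{\xi,\xi'}|\,d\tilde\E_{\eta,\eta}\;\le\;C\,\|\xi\|\,\|\xi'\|\,\|\eta\|^2 .
\]
With unit vectors and $\sum_k\lambda_k=\sum_l\mu_l=1$, this bounds the double series by $C$. It follows from the hypothesis by closed-graph and Baire arguments, in two steps.

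First fix $\eta$ and $\xi'$. The conjugate-linear map $\xi\mapsto\tilde f_{\xi,\xi'}\in L^1(\tilde\E_{\eta,\eta})$ has closed graph. An $L^1$-limit is an a.e.\ limit along a subsequence, while $\tilde f_{\xi_n,\xi'}(x)\to\tilde f_{\xi,\xi'}(x)$ for every $x$ because each $f(x)$ is a bounded operator. The same holds in $\xi'$, and the uniform boundedness principle then gives $\|\tilde f_{\xi,\xi'}\|_{L^1(\tilde\E_{\eta,\eta})}\le C(\eta)\|\xi\|\|\xi'\|$, with $C(\eta)$ the optimal constant.

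Next, for measurable $h\ge0$, let $H_N\ge0$ be the bounded operator with $\braket{\eta}{H_N\eta}=\int\min(h,N)\,d\tilde\E_{\eta,\eta}$. By monotone convergence, $\eta\mapsto(\int h\,d\tilde\E_{\eta,\eta})^{1/2}=\sup_N\|H_N^{1/2}\eta\|$, which is a lower-semicontinuous seminorm. Hence $\eta\mapsto C(\eta)^{1/2}$, being a supremum of such seminorms over $\|\xi\|,\|\xi'\|\le1$, is a finite lower-semicontinuous seminorm on $\hir$. By the Baire category theorem it is bounded on the unit ball, so $C(\eta)\le C\|\eta\|^2$.
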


\begin{proof}
The vector functional $a \mapsto \braket{\xi}{\pi_\S(a)\xi'}$ is normal, hence a $\Cn$-linear combination of at most four states $\rho_k \in \S(\MS)$ (Jordan decomposition), so $\tilde{f}_{\xi,\xi'}$ is a linear combination of the $f_{\rho_k}$. On the measure side, $|\tilde{\E}_{\eta,\eta'}|(X) \leq \frac{1}{2}(\tilde{\E}_{\eta,\eta}(X) + \tilde{\E}_{\eta',\eta'}(X))$ (Cauchy--Schwarz), and the diagonal measures $\tilde{\E}_{\eta,\eta}$ are scalar multiples of $\E_\omega$ for suitable states $\omega \in \S(\MR)$. Hence $L^1$-integrability of $f_{\rho_k}$ against $\E_\omega$ gives $L^1(|\tilde{\E}_{\eta,\eta'}|)$-integrability of $\tilde{f}_{\xi,\xi'}$.

Conversely, every state $\rho \in \S(\MS)$ is represented by a positive trace-class $\tilde{\rho} = \sum_k \lambda_k \dyad{\xi_k}{\xi_k}$, so $f_\rho = \sum_k \lambda_k \tilde{f}_{\xi_k,\xi_k}$. Every state $\omega \in \S(\MR)$ gives a positive measure $\E_\omega = \sum_l \mu_l \tilde{\E}_{\eta_l,\eta_l}$. Since $|\tilde{\E}_{\eta,\eta}| = \tilde{\E}_{\eta,\eta}$ (positivity), $L^1(\tilde{\E}_{\eta_l,\eta_l})$-integrability of $\tilde{f}_{\xi_k,\xi_k}$ is a special case of the assumed Hilbert-level integrability, and gives $L^1(\E_\omega)$-integrability of $f_\rho$.
\end{proof}

\begin{definition}[Simple functions]
	A function $f: \Sigma \to \MS$ is called \emph{simple} if $f(x) = \sum_{j=1}^n a_j \chi_{X_j}(x)$ for some finite $\F$-measurable partition $(X_j)_{j=1}^n$ of $\Sigma$ and some $a_1,\ldots,a_n \in \MS$.
\end{definition}

Notice that simple functions are automatically $\F$-measurable.

\begin{lemma}[Point-wise approximation by simple positive-matrix-valued functions]\label{lem:posmatapprox} Let $(\Sigma,\F)$ be a measurable space, $n \in \Nn$, and let $F: \Sigma \to M_n(\Cn)_+$ be an $\F$-measurable $($each entry $F_{ij}: \Sigma \to \Cn$ is $\F$-measurable$)$ positive-semidefinite matrix-valued function. Then:
\begin{enumerate}[label=(\roman*)]
	\item There exists a sequence $(F^{(N)})_{N \in \Nn}$ of $M_n(\Cn)_+$-valued simple functions such that $F^{(N)}(x) \to F(x)$ pointwise.
	\item If moreover $F_{ij}$ is $|\E_{\eta_i,\eta_j}|$-integrable for all $i,j$ where $\E: \F \to \Eff(\hi)$ is a POVM on a Hilbert space $\hi$ and $\eta_1,\ldots,\eta_n \in \hi$, then $F^{(N)}_{ij} \to F_{ij}$ in $L^1(|\E_{\eta_i,\eta_j}|)$.
\end{enumerate}
\end{lemma}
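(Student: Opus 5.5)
The plan is to build the approximants by quantizing $F$ inside the closed convex cone $M_n(\Cn)_+$, using a nearest-point (or Borel-selected) projection onto a finite grid of positive matrices. Then (ii) follows from dominated convergence once the approximants carry an integrable envelope.

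\textbf{Step 1: a measurable quantization map.} Identify $M_n(\Cn)$ with $\Reals^{2n^2}$. For each $N$, let $G_N$ be the finite set of positive semidefinite matrices $P$ with $\|P\| \le N$ lying in a $2^{-N}$-net of $\{P\ge 0 : \|P\|\le N\}$; this set is compact, so a finite net exists, and we may take its points inside the cone itself. Define $q_N: M_n(\Cn)_+ \to G_N \cup \{\zero\}$ as follows:
\begin{itemize}
\item if $\|A\| > N$, set $q_N(A) = \zero$;
\item otherwise, let $q_N(A)$ be the first element of $G_N$ (in a fixed enumeration) within distance $2^{-N}$ of $A$.
\end{itemize}
The preimages of the points of $G_N$ are finite Boolean combinations of closed balls and of the set $\{\|A\|\le N\}$, so $q_N$ is Borel.

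Set $F^{(N)} := q_N\circ F$. The map $x\mapsto F(x)$ is $\F$-measurable into $\Reals^{2n^2}$ because its entries are measurable, so $F^{(N)}$ is a simple $M_n(\Cn)_+$-valued function on the finite measurable partition $\{F^{-1}(q_N^{-1}(P))\}$. For each fixed $x$, eventually $N \ge \|F(x)\|$, and then $\|F^{(N)}(x)-F(x)\|\le 2^{-N}$. This gives (i).

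\textbf{Step 2: domination.} To get (ii), I would like the bound $|F^{(N)}_{ij}(x)| \le C\,(|F_{ij}(x)| + \text{something integrable})$, which the crude construction above does not quite provide off-diagonal. The natural tool is positivity: for $P\ge0$ we have $|P_{ij}|\le \sqrt{P_{ii}P_{jj}} \le \tfrac12(P_{ii}+P_{jj})$, so $|P_{ij}| \le \|P\| \le \operatorname{tr} P$. On the event $\|F(x)\|\le N$ the approximant is $2^{-N}$-close to $F(x)$, giving $|F^{(N)}_{ij}(x)|\le \operatorname{tr}F(x) + n2^{-N}$.

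The main obstacle is integrability of this envelope against $|\E_{\eta_i,\eta_j}|$. The hypothesis only gives integrability of $F_{ij}$ against $|\E_{\eta_i,\eta_j}|$, not of $F_{kk}$ against it. I would first try to reduce to the diagonal. By the Cauchy--Schwarz estimate used in Lemma~\ref{lem:Hilbertlevel}, $|\E_{\eta_i,\eta_j}|\le \frac12(\E_{\eta_i,\eta_i}+\E_{\eta_j,\eta_j})$, so it suffices to dominate $F^{(N)}_{ij}$ in terms of $F_{ii}$ and $F_{jj}$.

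To make that possible, I would modify the quantization to preserve a diagonal bound: $(F^{(N)})_{kk}\le F_{kk}+2^{-N}$. This can be achieved by choosing, in Step 1, net points dominated entrywise on the diagonal, e.g. by slightly shrinking $A$ to $(1-\epsilon)A$ before quantizing. Then $|F^{(N)}_{ij}|\le \sqrt{(F_{ii}+1)(F_{jj}+1)}$, which is integrable against $\E_{\eta_i,\eta_i}$ and $\E_{\eta_j,\eta_j}$ provided $F_{ii}\in L^1(\E_{\eta_i,\eta_i})$. That provision is exactly the diagonal case $i=j$ of the hypothesis. The AM--GM inequality then handles the cross terms, since $\sqrt{ab} \le \tfrac12(a+b)$.

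\textbf{Step 3: conclusion.} With the integrable envelope in hand, dominated convergence gives $F^{(N)}_{ij}\to F_{ij}$ in $L^1(|\E_{\eta_i,\eta_j}|)$, using the pointwise convergence from Step 1. The truncation to $\zero$ when $\|F(x)\|>N$ does not affect the domination.
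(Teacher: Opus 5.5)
Your Step~1 is correct and gives a valid alternative route to (i). The paper instead approximates the entries of $G=\sqrt{F}$ by simple functions with $|G^{(N)}_{ij}|\le|G_{ij}|$ and sets $F^{(N)}=G^{(N)\dagger}G^{(N)}$.

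\textbf{The gap is in Step~2.} It sits exactly at the obstacle you identified. Given the envelope $|F^{(N)}_{ij}|\le\sqrt{(F_{ii}+1)(F_{jj}+1)}$, you bound $|\E_{\eta_i,\eta_j}|\le\frac12(\E_{\eta_i,\eta_i}+\E_{\eta_j,\eta_j})$ and apply AM--GM. That route needs the envelope to lie in $L^1(\E_{\eta_i,\eta_i})$ and in $L^1(\E_{\eta_j,\eta_j})$ separately. It therefore produces the cross terms $\int F_{jj}\,d\E_{\eta_i,\eta_i}$ and $\int F_{ii}\,d\E_{\eta_j,\eta_j}$. The hypothesis says nothing about these: its diagonal case gives only $F_{ii}\in L^1(\E_{\eta_i,\eta_i})$ and $F_{jj}\in L^1(\E_{\eta_j,\eta_j})$.

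Here is a concrete failure. Take $\Sigma=\Nn$ with the power set, $\hi=\ell^2(\Nn)$, and $\E(X)=\sum_{k\in X}|e_k\rangle\langle e_k|$. Let $\eta_1=\sum_{k\ \mathrm{odd}}k^{-1}e_k$ and $\eta_2=\sum_{k\ \mathrm{even}}k^{-1}e_k$. Let $F(k)$ be diagonal with entries $0,k^4$ for odd $k$, and $F(k)=\zero$ for even $k$. Every hypothesis holds: $\E_{\eta_1,\eta_2}=0$, and $F_{22}$ vanishes on the support of $\E_{\eta_2,\eta_2}$. Yet
\[
\int\sqrt{(F_{11}+1)(F_{22}+1)}\,d\E_{\eta_1,\eta_1}=\sum_{k\ \mathrm{odd}}\sqrt{k^4+1}\,k^{-2}=\infty .
\]
So your claim that the envelope is integrable against both diagonal measures is false. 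The arithmetic-mean bound on the total variation is simply too weak.

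\textbf{The missing ingredient is the geometric-mean bound, which the paper's proof uses.} Set $\nu:=\E_{\eta_i,\eta_i}+\E_{\eta_j,\eta_j}$, and let $h_i,h_j,h_{ij}$ be the Radon--Nikodym densities of $\E_{\eta_i,\eta_i},\E_{\eta_j,\eta_j},\E_{\eta_i,\eta_j}$ with respect to $\nu$. The $2\times 2$ density matrix is positive semidefinite $\nu$-a.e., because $(\langle\eta_a,\E(B)\eta_b\rangle)_{a,b\in\{i,j\}}\ge 0$ for every $B\in\F$. Hence $|h_{ij}|\le\sqrt{h_ih_j}$, that is, $d|\E_{\eta_i,\eta_j}|\le\sqrt{h_ih_j}\,d\nu$. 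Cauchy--Schwarz in $L^2(\nu)$ then pairs each diagonal entry of $F$ with its own diagonal measure:
\[
\int_\Sigma\sqrt{(F_{ii}+1)(F_{jj}+1)}\,d|\E_{\eta_i,\eta_j}|\le\Big(\int_\Sigma(F_{ii}+1)\,d\E_{\eta_i,\eta_i}\Big)^{1/2}\Big(\int_\Sigma(F_{jj}+1)\,d\E_{\eta_j,\eta_j}\Big)^{1/2}<\infty .
\]
With this, your envelope is integrable and dominated convergence finishes (ii). The paper applies the same bound to its envelope $|\overline{G_{ki}}G_{kj}|$, using $G_{ki}\in L^2(\E_{\eta_i,\eta_i})$.

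Incidentally, the ``shrinking'' modification is unnecessary. If the net is taken in operator norm, $|P_{kk}-A_{kk}|\le\|P-A\|\le 2^{-N}$ already gives $F^{(N)}_{kk}\le F_{kk}+2^{-N}$, and truncation to $\zero$ only helps.
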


\begin{proof}
(i) Since the square root function on $M_n(\Cn)_+$ is continuous, $G := \sqrt{F}: \Sigma \to M_n(\Cn)_+$ is $\F$-measurable. For each $i,j$, choose a sequence of $\F$-measurable simple functions $G^{(N)}_{ij}$ with $|G^{(N)}_{ij}(x)| \leq |G_{ij}(x)|$ and $G^{(N)}_{ij}(x) \to G_{ij}(x)$ for all $x \in \Sigma$. Define
\[
    F^{(N)}(x) := G^{(N)}(x)^\dagger G^{(N)}(x), \; \text{i.e.,} \; F^{(N)}_{ij}(x) = \sum_{k=1}^n \overline{G^{(N)}_{ki}(x)} G^{(N)}_{kj}(x).
\]
Each $F^{(N)}$ is manifestly positive semidefinite, $\F$-measurable, and simple. Since $G^{(N)}(x) \to G(x)$ entrywise, we get $F^{(N)}(x) \to G(x)^\dagger G(x) = F(x)$ pointwise.

(ii) For the $L^1$-convergence, we use the bound
\begin{equation}\label{eq:totalvarbound}
    |\E_{\eta_i,\eta_j}|(B) \leq \int_B \sqrt{d\E_{\eta_i,\eta_i}\,d\E_{\eta_j,\eta_j}} \; \text{for all } B \in \F,
\end{equation}
which follows from $|\E_{\eta_i,\eta_j}(B_k)| = \left|\braket{\sqrt{\E(B_k)}\eta_i}{\sqrt{\E(B_k)}\eta_j}\right| \leq \sqrt{\E_{\eta_i,\eta_i}(B_k)\E_{\eta_j,\eta_j}(B_k)}$ and taking the supremum over finite partitions. From $\sum_j |G_{ji}|^2 = F_{ii} \in L^1(\E_{\eta_i,\eta_i})$ we get $G_{ki} \in L^2(\E_{\eta_i,\eta_i})$, and by the Cauchy--Schwarz inequality and \eqref{eq:totalvarbound}, $\overline{G_{ki}}G_{kj} \in L^1(|\E_{\eta_i,\eta_j}|)$. Since $\left|\overline{G^{(N)}_{ki}}G^{(N)}_{kj}\right| \leq \left|\overline{G_{ki}}G_{kj}\right|$, the dominated convergence theorem gives $F^{(N)}_{ij} \to F_{ij}$ in $L^1(|\E_{\eta_i,\eta_j}|)$.
\end{proof}

\begin{lemma}[$(f,\E)$-sesquilinear form]\label{lem:semiip}
Let $\hi_\S,\hir$ be Hilbert spaces, $\E: (\Sigma,\F) \to \Eff(\hir)$ a POVM, and $f: \Sigma \to B(\hi_\S)_+$ a positive uniformly bounded $\F$-measurable function. Then the formula
\[
    \ip{\xi \otimes \eta}{\xi' \otimes \eta'}_{(f,\E)} := \int_\Sigma \braket{\xi}{f(x)\xi'} \, d\E_{\eta,\eta'}(x),
\]
where $\E_{\eta,\eta'}(X) := \braket{\eta}{\E(X)\eta'}$, defines a positive semidefinite sesquilinear form on $\hi_\S \otimes_{\rm alg} \hir$ satisfying
\begin{equation}\label{eq:semiipbound}
    \ip{\psi}{\psi}_{(f,\E)} \leq \|f\|_\infty \braket{\psi}{\psi} \; \text{for all } \psi \in \hi_\S \otimes_{\rm alg} \hir,
\end{equation}
and therefore extending to the full tensor product $\hi_\S \otimes \hir$.
\end{lemma}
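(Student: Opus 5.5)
The plan is to first make the form well defined, then reduce positivity and the bound \eqref{eq:semiipbound} to simple functions via Lemma~\ref{lem:posmatapprox}, and finally extend by continuity.

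\textbf{Well-definedness.} For fixed $\xi,\xi'$ the function $x\mapsto\braket{\xi}{f(x)\xi'}$ is bounded and $\F$-measurable, and $\E_{\eta,\eta'}$ is a finite complex measure. Hence the integral exists. It is sesquilinear separately in $\xi,\xi'$ (linearity of the integrand) and in $\eta,\eta'$ (linearity of the measure in its arguments). So it descends to a sesquilinear form on the algebraic tensor product by the universal property.

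\textbf{Reduction to a matrix.} Fix $\psi=\sum_{i=1}^n \xi_i\otimes\eta_i$. Then
\[
\ip{\psi}{\psi}_{(f,\E)}=\sum_{i,j}\int_\Sigma F_{ij}(x)\,d\E_{\eta_i,\eta_j}(x),\qquad F_{ij}(x):=\braket{\xi_i}{f(x)\xi_j}.
\]
Here $F(x)$ is the Gram matrix of the vectors $f(x)^{1/2}\xi_i$, so $F:\Sigma\to M_n(\Cn)_+$ is measurable and positive. Moreover $\|f\|_\infty G-F(x)\ge0$, where $G_{ij}=\braket{\xi_i}{\xi_j}$, since $\|f\|_\infty\id-f(x)\ge0$.

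\textbf{Simple functions.} Take $F=\sum_k \chi_{X_k}P_k$ with $P_k\ge0$. Then the expression equals
\[
\sum_k\sum_{i,j}(P_k)_{ij}\braket{\eta_i}{\E(X_k)\eta_j}=\sum_k \tr\!\big[P_k^{T}\,Q_k\big],
\]
where $Q_k:=\big(\braket{\eta_i}{\E(X_k)\eta_j}\big)_{ij}$ is a positive Gram matrix of the vectors $\E(X_k)^{1/2}\eta_i$, and $P_k^T\ge0$. The trace of a product of two positive matrices is $\ge0$, so the expression is nonnegative.

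\textbf{General $F$.} Lemma~\ref{lem:posmatapprox} supplies simple $F^{(N)}\ge0$ with $F^{(N)}_{ij}\to F_{ij}$ in $L^1(|\E_{\eta_i,\eta_j}|)$. Integrability holds by boundedness together with \eqref{eq:totalvarbound}. Passing to the limit gives $\ip{\psi}{\psi}_{(f,\E)}\ge0$.

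\textbf{The bound.} Apply the same positivity argument to the positive measurable function $\|f\|_\infty G-F(x)$, equivalently to $\|f\|_\infty\id-f$, which is positive and bounded. The form is linear in $f$, and for a constant function the integral gives $\ip{\psi}{\psi}_{(\id,\E)}=\sum_{i,j}\braket{\xi_i}{\xi_j}\braket{\eta_i}{\E(\Sigma)\eta_j}=\braket{\psi}{\psi}$, using $\E(\Sigma)=\id$. Combining these yields \eqref{eq:semiipbound}.

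\textbf{Extension.} The form is positive semidefinite, so Cauchy--Schwarz gives $|\ip{\psi}{\phi}_{(f,\E)}|\le\|f\|_\infty\|\psi\|\|\phi\|$. The form is therefore bounded on the dense subspace $\hi_\S\otimes_{\rm alg}\hir$ and extends uniquely by continuity to $\hi_\S\otimes\hir$.

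The main obstacle I expect is the positivity step for non-simple $F$, namely justifying the limit interchange. This is exactly what Lemma~\ref{lem:posmatapprox}(ii) was designed for, and it is the reason the approximants must preserve positivity pointwise.
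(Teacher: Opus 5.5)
Your proposal is correct and follows essentially the same route as the paper. Both reduce $\ip{\psi}{\psi}_{(f,\E)}$ to $\sum_{i,j}\int F_{ij}\,d\E_{\eta_i,\eta_j}$, settle simple positive-matrix-valued approximants by nonnegativity of the trace of a product of positive matrices, pass to the limit via Lemma~\ref{lem:posmatapprox}(ii), and get the bound by applying positivity to $\|f\|_\infty\id - f$; your well-definedness argument, the transpose in $\tr[P_k^T Q_k]$, and the Cauchy--Schwarz extension to $\hi_\S\otimes\hir$ just make explicit details the paper leaves implicit.
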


\begin{proof}
Sesquilinearity and conjugate symmetry follow from the corresponding properties of the integral and $\overline{\E_{\eta,\eta'}(X)} = \E_{\eta',\eta}(X)$. For positive semidefiniteness, let $\psi = \sum_{i=1}^n \xi_i \otimes \eta_i$ and set $f_{ij}(x) := \braket{\xi_i}{f(x)\xi_j}$ and $\mu_{ij} := \E_{\eta_i,\eta_j}$. We need to show
\[
    \ip{\psi}{\psi}_{(f,\E)} = \sum_{i,j=1}^n \int_\Sigma f_{ij}\,d\mu_{ij} \geq 0.
\]
The matrix $F(x) := (f_{ij}(x))_{i,j=1}^n$ is positive semidefinite for each $x$ (since $f \geq 0$), and $F$ is $\F$-measurable. By Lemma~\ref{lem:posmatapprox}, there exists a sequence $(F^{(N)})$ of $M_n(\Cn)_+$-valued simple functions with $F^{(N)}_{ij} \to f_{ij}$ in $L^1(|\mu_{ij}|)$. Writing each simple function as $F^{(N)} = \sum_{m=1}^{M_N} C^{(N,m)} \chi_{Y_m}$ where $C^{(N,m)} \in M_n(\Cn)_+$, we have
\[
    \sum_{i,j=1}^n \int_\Sigma F^{(N)}_{ij}\,d\mu_{ij} = \sum_{m=1}^{M_N} \sum_{i,j=1}^n C^{(N,m)}_{ij}\mu_{ij}(Y_m) = \sum_{m=1}^{M_N} \tr[C^{(N,m)} \cdot M_m],
\]
where $M_m := (\mu_{ij}(Y_m))_{i,j} = (\braket{\eta_i}{\E(Y_m)\eta_j})_{i,j} \in M_n(\Cn)_+$ (positive semidefinite since $\E(Y_m) \geq 0$). Since $\tr[PQ] \geq 0$ for $P,Q \geq 0$, each summand is non-negative. Passing to the limit $N \to \infty$ using $L^1$-convergence gives $\ip{\psi}{\psi}_{(f,\E)} \geq 0$.

For the bound \eqref{eq:semiipbound}, notice that $g := \|f\|_\infty\id - f \geq 0$ pointwise, so $\ip{\psi}{\psi}_{(g,\E)} \geq 0$ by the above. Since the form is additive in $f$
\begin{align*}
    \ip{\psi}{\psi}_{(f,\E)} &\leq \ip{\psi}{\psi}_{(\|f\|_\infty\id,\E)} = \|f\|_\infty \sum_{i,j=1}^n \braket{\xi_i}{\xi_j}\braket{\eta_i}{\E(\Sigma)\eta_j} \\
    &= \|f\|_\infty \sum_{i,j=1}^n \braket{\xi_i}{\xi_j}\braket{\eta_i}{\eta_j} = \|f\|_\infty \braket{\psi}{\psi}\hspace{-2pt},
\end{align*}
where we used $\E(\Sigma) = \id$.
\end{proof}

Before we show existence and uniqueness of integral operators, we need one more approximation lemma, now on the abstract algebraic level.

\begin{lemma}[Point-wise approximation by simple functions]\label{lem:simpleapprox}
Assume $\MSp$ separable. Then for each $f \in \Bb$ there exists a sequence $(f_n)_{n \in \Nn}$ of $\MS$-valued simple functions such that $f_n(x) \xrightarrow{\rm uw} f(x)$ for all $x \in \Sigma$ and $\|f_n\|_\infty \leq \|f\|_\infty$ for all $n$.
\end{lemma}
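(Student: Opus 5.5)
The idea is to use metrizability of the ball $B_R := \{a \in \MS : \|a\| \leq R\}$, where $R := \|f\|_\infty$, in the ultraweak topology, together with its compactness. By Banach--Alaoglu and separability of $\MSp$, $B_R$ is a compact metric space under the metric $d$ of \eqref{eq:metric} built from a countable norm-dense family $(\omega_k)_k \subset \MSp$. Hence $B_R$ is separable, and we fix a countable dense subset $\{b_1, b_2, \ldots\} \subset B_R$. The approximants will then take values only in $\{b_1, b_2, \ldots\}$, which automatically gives $\|f_n\|_\infty \leq R$.

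Fix $n$. By compactness, finitely many balls $U_{n,i} := \{a \in B_R : d(a,b_i) < 1/n\}$, $i = 1,\ldots,N_n$, cover $B_R$. Put $X_{n,i} := f^{-1}(U_{n,i}) \setminus \bigcup_{j<i} f^{-1}(U_{n,j})$ and $f_n := \sum_{i=1}^{N_n} b_i \chi_{X_{n,i}}$. The sets $X_{n,i}$ partition $\Sigma$, since $f(\Sigma) \subset B_R$ is covered. By construction $d(f_n(x), f(x)) < 1/n$ for every $x$, so $f_n(x) \to f(x)$ in $d$, hence ultraweakly, since $d$ induces the ultraweak topology on bounded sets. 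Here we apply $d$ on $B_R$ by rescaling from $\M_1$, or we simply note that the formula for $d$ works on any ball.

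The main point to check is that the $X_{n,i}$ belong to $\F$, i.e.\ that $f^{-1}(U) \in \F$ for each open ball $U$. For fixed $b$, the map $x \mapsto d(f(x), b) = \sum_k 2^{-k}\, \phi(|\omega_k(f(x)) - \omega_k(b)|)$, with $\phi(t) = t/(1+t)$, is a pointwise convergent series of measurable functions. Each term is measurable because $x \mapsto \omega_k(f(x))$ is measurable by ultraweak measurability of $f$ (stated for all of $\MSp$ in Definition~\ref{def:measurable}). The series is uniformly convergent, so its sum is measurable. Hence $f^{-1}(U_{n,i}) = \{x : d(f(x), b_i) < 1/n\} \in \F$, which is the key step.

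The only subtlety I anticipate is confirming that $d$ is a metric generating the ultraweak topology on $B_R$ for general $R$, which follows by scaling from the $\M_1$ case. Compactness is not even essential, since a countable cover by balls would do: take $X_{n,i}$ over all $i \in \Nn$, which gives a countable partition. Finite partitions are required for simple functions, however, and compactness is exactly what supplies them.
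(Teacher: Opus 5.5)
Your proof is correct and takes essentially the same route as the paper: compactness of the ultraweak ball, a finite cover at scale $1/n$ built from the dense sequence $(\omega_k)$, disjointified preimages under $f$ (measurable because the countably many scalar functions $\omega_k \circ f$ are measurable), and pointwise ultraweak convergence at each $x$. The only difference is cosmetic: you cover by $d$-balls (measurable since $x \mapsto d(f(x),b)$ is a pointwise limit of measurable functions) and conclude via metrizability, whereas the paper covers by the basic neighbourhoods $U_n(a)$ cut out by $\omega_1,\ldots,\omega_n$ and finishes with a direct density estimate.
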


\begin{proof}
It suffices to prove the claim when $\|f\|_\infty = 1$. By separability of $\MSp$, take a sequence $(\omega_k)_{k \in \Nn}$ dense in $\MSp$. For each $n \in \Nn$ and $a \in \MS$ define the ultraweakly open set
\[
    U_n(a) := \big\{b \in \MS : |\omega_k(b-a)| < \tfrac{1}{n} \text{ for } k=1,\ldots,n \big\}.
\]
The unit ball $(\MS)_1$ is ultraweakly compact by the Banach--Alaoglu theorem, and the ultraweak topology on $(\MS)_1$ is metrizable since $\MSp$ is separable (see \eqref{eq:metric}). For each $n$, the sets $\{U_n(a) \cap (\MS)_1\}_{a \in (\MS)_1}$ form an open cover of $(\MS)_1$, and by compactness there exists a finite subcover indexed by $a_{n,1},\ldots,a_{n,M_n} \in (\MS)_1$. Define
\[
    Y_{n,1} := U_n(a_{n,1}), \; Y_{n,j} := U_n(a_{n,j}) \setminus \bigcup_{k=1}^{j-1} U_n(a_{n,k}) \; (j=2,\ldots,M_n)
\]
and $X_{n,j} := f^{-1}(Y_{n,j})$. Since $f$ is ultraweakly measurable, we have
\[
    f^{-1}(U_n(a_{n,j})) = \big\{x \in \Sigma : |\omega_k(f(x)-a_{n,j})| < \tfrac{1}{n} \text{ for } k=1,\ldots,n \big\} \in \F
\]
(as a finite intersection of measurable sets), and hence $X_{n,j} \in \F$ (as a difference of measurable sets). Since $f(\Sigma) \subseteq (\MS)_1$, the family $(X_{n,j})_{j=1}^{M_n}$ is a finite $\F$-measurable partition of $\Sigma$. Define
\[
f_n := \sum_{j=1}^{M_n} a_{n,j}\chi_{X_{n,j}}.
\]
From $a_{n,j} \in (\MS)_1$ we have $\|f_n\|_\infty \leq 1$, and \emph{each $f_n$ takes values in $\MS$}.

We show $f_n(x) \xrightarrow{\rm uw} f(x)$ for each $x \in \Sigma$. Fix $x$, $\omega \in \MSp$, and $\epsilon > 0$. By density of $(\omega_k)$ there is $k_0$ with $\|\omega_{k_0} - \omega\| < \epsilon$. Take $N_0 \geq \max(\epsilon^{-1}, k_0)$ and fix $n \geq N_0$. The unique $j$ with $x \in X_{n,j}$ satisfies $|\omega_{k_0}(f(x)-a_{n,j})| < 1/n \leq \epsilon$ (since $f(x) \in U_n(a_{n,j})$ and $k_0 \leq n$). Hence
\begin{align*}
    |\omega(f(x)-f_n(x))| &\leq |(\omega-\omega_{k_0})(f(x)-f_n(x))| + |\omega_{k_0}(f(x)-f_n(x))| \\
    &\leq 2\|\omega - \omega_{k_0}\| + \epsilon < 3\epsilon,
\end{align*}
where the factor $2$ comes from $\|f(x)-f_n(x)\| \leq \|f(x)\| + \|f_n(x)\| \leq 2$. This establishes $\omega(f_n(x)) \to \omega(f(x))$ for each $\omega \in \MSp$.
\end{proof}

\subsection{Integral operators}

We are now equipped to prove the central result of $\mathrm{W}^*$-integration theory developed in this work.

\begin{theorem}[Existence and uniqueness of integral operators]\label{thm:intop}
    Let $\MS$ and $\MR$ be $\mathrm{W}^*$-algebras and $(\Sigma,\F)$ a measurable space, assume $\MSp$ separable. Then for any POVM $\E:\F \to \Eff(\MR)$ and uniformly bounded $\F$-measurable function $f: \Sigma \to \MS$ there exists a unique element
    \[
        \int_\Sigma f \otimes d\E \in \MSR,
    \]
    satisfying, for all $\rho \in \S(\MS)$ and $\omega \in \S(\MR)$, the characterizing formula:
    \begin{equation}\label{eq:characformula}
        (\rho \otimes \omega)\Big(\int_\Sigma f \otimes d\E\Big) = \int_\Sigma f_\rho \, d\E_\omega.
    \end{equation}
        In particular, for simple functions $f = \sum_j a_j \chi_{X_j}$ we have
    \begin{equation}\label{eq:intsimple}
        \int_\Sigma f \otimes d\E = \sum_j a_j \otimes \E(X_j).
    \end{equation}
    Moreover, this assignment is functorial in the following sense: for any measurable $\alpha: (\Sigma,\F) \to (\Sigma',\F')$, any uniformly bounded $\F'$-measurable function $f: \Sigma' \to \MS$, and normal channels $\Psi: \MS \to \N_\S$ and $\Phi: \MR \to \N_\R$ we have\footnote{Extending this result by including nontrivial $\Psi: \MS \to \N_\S$ is due to Samuel Fedida.}
    \begin{equation}\label{eq:pushfwd}
        \int_\Sigma (\Psi \circ f \circ \alpha) \otimes d(\Phi \circ \E) = (\Psi \otimes \Phi)\Big(\int_{\alpha(\Sigma)} f \otimes d(\alpha_*\E)\Big).
    \end{equation}
\end{theorem}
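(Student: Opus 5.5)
Fix faithful normal representations $\pi_\S: \MS \hookrightarrow B(\his)$ and $\pi_\R: \MR \hookrightarrow B(\hir)$, and work inside $B(\hisr)$. Uniqueness is immediate. By Lemma~\ref{lem:extfromstates} (tensor-product variant, item~1, relying on Lemma~\ref{lem:proddense}), an element of $\MSR$ is determined by its values on $\S(\MSR)_{\rm prod}$.

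For existence, first take $f \geq 0$ pointwise. Lemma~\ref{lem:semiip} applied to $\pi_\S \circ f$ and $\pi_\R \circ \E$ gives a bounded positive sesquilinear form on $\hisr$ with norm at most $\|f\|_\infty$. By Riesz this yields a unique positive operator $T_f \in B(\hisr)$ with $\|T_f\| \leq \|f\|_\infty$ and
\[
\braket{\xi\otimes\eta}{T_f(\xi'\otimes\eta')} = \int_\Sigma \tilde f_{\xi,\xi'}\,d\tilde\E_{\eta,\eta'} .
\]
A general $f \in \Bb$ decomposes as $f = \sum_{k=0}^3 i^k f_k$ with $0 \leq f_k$ and $\|f_k\|_\infty \leq \|f\|_\infty$. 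Take the self-adjoint parts $(f\pm f^*)/2$ and then positive/negative parts via $g = \tfrac12(\|g\|_\infty + g) - \tfrac12(\|g\|_\infty - g)$. This avoids measurability issues with functional calculus, and ultraweak measurability is preserved. Set $T_f := \sum_k i^k T_{f_k}$; it is bounded and satisfies the displayed matrix-element formula by linearity.

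Passing from rank-one vectors to trace-class densities (as in the proof of Lemma~\ref{lem:Hilbertlevel}) gives $\tr[(\tilde\rho\otimes\tilde\omega)T_f] = \int f_\rho\,d\E_\omega$. Exchanging the countable sums with the integral is justified by dominated convergence, since $|\tilde f_{\xi,\xi}| \leq \|f\|_\infty\|\xi\|^2$. This is \eqref{eq:characformula}.

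The main obstacle is showing $T_f \in \MSR$ and not merely in $B(\hisr)$. Here I would use Lemma~\ref{lem:simpleapprox}, which requires separability of $\MSp$. Take simple $f_n \to f$ pointwise ultraweakly with $\|f_n\|_\infty \leq \|f\|_\infty$. For simple $f_n = \sum_j a_j\chi_{X_j}$, the characterizing formula and uniqueness in $B(\hisr)$ give $T_{f_n} = \sum_j \pi_\S(a_j)\otimes\pi_\R(\E(X_j)) \in \MSR$, which also proves \eqref{eq:intsimple}. By dominated convergence, $\int (f_n)_\rho\,d\E_\omega \to \int f_\rho\,d\E_\omega$, so $\tr[(\tilde\rho\otimes\tilde\omega)(T_{f_n}-T_f)]\to 0$ for product densities. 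Since $(T_{f_n})$ is uniformly bounded and the span of product trace-class operators is trace-norm dense in $\T(\hisr)$, we get $T_{f_n}\to T_f$ ultraweakly in $B(\hisr)$. As $\MSR$ is ultraweakly closed, $T_f \in \MSR$. Independence of the representation then follows from uniqueness via the product-state characterization.

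For functoriality \eqref{eq:pushfwd}, I would test both sides on product states $\rho'\otimes\omega'$ with $\rho' \in \S(\N_\S)$ and $\omega' \in \S(\N_\R)$. On the right, normality of $\Psi\otimes\Phi$ gives $(\Psi\otimes\Phi)_*(\rho'\otimes\omega') = \Psi_*\rho'\otimes\Phi_*\omega'$; this identity holds on elementary tensors and extends by normality. Since these are states (the channels are unital), \eqref{eq:characformula} yields $\int_{\Sigma'} f_{\Psi_*\rho'}\,d(\alpha_*\E)_{\Phi_*\omega'}$. On the left, $(\Psi\circ f\circ\alpha)_{\rho'} = f_{\Psi_*\rho'}\circ\alpha$, and by \eqref{eq:postcompre} $(\Phi\circ\E)_{\omega'} = \E_{\Phi_*\omega'}$. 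The scalar change-of-variables formula for push-forward measures then equates the two sides, where the integral over $\alpha(\Sigma)$ is read as the integral over $\Sigma'$. Measurability of $\Psi\circ f\circ\alpha$ holds since $\rho'\circ\Psi$ is normal. Uniqueness on product states finishes the proof.
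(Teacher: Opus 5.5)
Your proposal is correct and follows the paper's proof. It uses the bound of Lemma~\ref{lem:semiip} to build an operator in $B(\hisr)$, then simple-function approximation (Lemma~\ref{lem:simpleapprox}) together with ultraweak closedness of $\MSR$, then uniqueness via product states, and finally functoriality by testing on product states. The only difference is in Step~1: you obtain the operator directly from the bounded positive sesquilinear form (Riesz) after splitting $f$ into four positive parts, whereas the paper defines a functional on $\T(\hisr)$ and invokes $B(\hisr)=\T(\hisr)^*$ with a Jordan decomposition of the trace-class argument; your version is a slightly cleaner route to the same bounded operator.
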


\begin{proof}
Choose faithful normal representations $\pi_\S: \MS \hookrightarrow B(\his)$ (with $\his$ separable, possible since $\MSp$ is separable) and $\pi_\R: \MR \hookrightarrow B(\hir)$. Write $\tilde{f} := \pi_\S \circ f: \Sigma \to B(\his)$ and $\tilde{\E} := \pi_\R \circ \E: \F \to \Eff(\hir)$. We identify $\MSR$ with the ultraweak closure of $\pi_\S(\MS) \otimes_{\rm alg} \pi_\R(\MR)$ in $B(\hisr)$.

\emph{Step 1: Existence in $B(\hisr)$.} We construct a bounded linear functional on $\T(\hisr)$ and invoke the duality $B(\hisr) = \T(\hisr)^*$. For each $\tilde{\rho} \in \T(\his)$ and $\tilde{\omega} \in \T(\hir)$, define
\[
\tilde\phi_{(f,\E)}(\tilde{\rho} \otimes \tilde{\omega}) := \int_\Sigma \tr[\tilde{\rho}\,\tilde{f}(x)] \, d\tilde{\E}_{\tilde{\omega}}(x) < \infty.
\]
This extends by linearity to $\T(\his) \otimes_{\rm alg} \T(\hir)$. We establish the bound: for any positive $\tilde{\Omega} \in \T(\hisr)_+$ with spectral decomposition $\tilde{\Omega} = \sum_n s_n \dyad{\psi_n}{\psi_n}$ and any self-adjoint $f$,
\begin{equation}\label{eq:boundOnPositive}
    |\tilde\phi_{(f,\E)}(\tilde{\Omega})| \leq \|f\|_\infty \|\tilde{\Omega}\|_1,
\end{equation}
where $\tilde\phi_{(f,\E)}(\tilde{\Omega}) := \sum_n s_n \ip{\psi_n}{\psi_n}_{(\tilde{f},\tilde{\E})}$. Indeed, writing $g := \|f\|_\infty\id - f \geq 0$, Lemma~\ref{lem:semiip} gives $\tilde\phi_{(g,\E)}(\tilde{\Omega}) = \sum_n s_n \ip{\psi_n}{\psi_n}_{(\tilde{g},\tilde{\E})} \geq 0$, hence $\tilde\phi_{(f,\E)}(\tilde{\Omega}) \leq \|f\|_\infty \|\tilde{\Omega}\|_1$. The reverse bound follows from $h := f + \|f\|_\infty\id \geq 0$. For general $T \in \T(\hisr)$ and general $f$, write $f = f_1 + if_2$ with $f_j$ self-adjoint and $\|f_j\|_\infty \leq \|f\|_\infty$, and decompose $T = (T_1 - T_2) + i(T_3 - T_4)$ with $T_k \geq 0$ and $\sum_k\|T_k\|_1 \leq 2\|T\|_1$ (Jordan decomposition of the self-adjoint real and imaginary parts). Applying \eqref{eq:boundOnPositive} to each $T_k$ and each $f_j$ gives
\begin{equation}\label{eq:generalbound}
    |\tilde\phi_{(f,\E)}(T)| \leq 4\|f\|_\infty\|T\|_1.
\end{equation}
Since $\T(\his) \otimes_{\rm alg} \T(\hir)$ is norm-dense in $\T(\hisr)$ (product vectors span $\hisr$, hence rank-one product operators span a dense subspace of $\T(\hisr)$), the bound \eqref{eq:generalbound} shows that $\tilde\phi_{(f,\E)}$ extends uniquely to a bounded linear functional on $\T(\hisr)$. By the duality $B(\hisr) = \T(\hisr)^*$, there exists a unique element $A_{\tilde\phi_{(f,\E)}} \equiv \int_\Sigma \tilde{f} \otimes d\tilde{\E} \in B(\hisr)$ satisfying $\tr[T \int \tilde{f} \otimes d\tilde{\E}] = \tilde\phi_{(f,\E)}(T)$ for all $T \in \T(\hisr)$. In particular, $\tr[\tilde{\rho} \otimes \tilde{\omega} \int \tilde{f} \otimes d\tilde{\E}] = \int \tilde{f}_{\tilde{\rho}} \, d\tilde{\E}_{\tilde{\omega}}$ for all $\tilde{\rho} \in \T(\his)$, $\tilde{\omega} \in \T(\hir)$.

\emph{Step 2: Membership in $\MSR$ (via approximation).} For $\MS$-valued simple functions $f = \sum_j a_j \chi_{X_j}$ ($a_j \in \MS$), uniqueness gives $\int \tilde{f} \otimes d\tilde{\E} = \sum_j \pi_\S(a_j) \otimes \pi_\R(\E(X_j)) \in \pi_\S(\MS) \otimes_{\rm alg} \pi_\R(\MR) \subset \MSR$. For general $f \in \Bb$, Lemma~\ref{lem:simpleapprox} gives $\MS$-valued simple $f_n$ with $\|f_n\|_\infty \leq \|f\|_\infty$ and $f_n(x) \xrightarrow{\rm uw} f(x)$. We claim $\int \tilde{f}_n \otimes d\tilde{\E} \xrightarrow{\rm uw} \int \tilde{f} \otimes d\tilde{\E}$ in $B(\hisr)$: for each $\tau \in \T(\hisr)$ and $\epsilon > 0$, approximate $\tau$ by $\tau_0 \in \T(\his) \otimes_{\rm alg} \T(\hir)$ with $\|\tau - \tau_0\|_1 < \epsilon$, then
\begin{align*}
    \left|\tr[\tau \int (\tilde{f}_n-\tilde{f}) \otimes d\tilde{\E}]\right| &\leq \|\tau - \tau_0\|_1 \cdot \left|\left|\int (\tilde{f}_n-\tilde{f}) \otimes d\tilde{\E}\right|\right| + \left|\tr[\tau_0 \int (\tilde{f}_n-\tilde{f}) \otimes d\tilde{\E}]\right| \\
    &\leq 8\|f\|_\infty \cdot \epsilon + \left|\tr[\tau_0 \int (\tilde{f}_n-\tilde{f}) \otimes d\tilde{\E}]\right|,
\end{align*}
where the second term vanishes as $n \to \infty$ by the dominated convergence theorem (expanding $\tau_0 = \sum_k \tilde{\rho}_k \otimes \tilde{\omega}_k$ as a finite sum). Since each $\int \tilde{f}_n \otimes d\tilde{\E} \in \MSR$ and $\MSR$ is ultraweakly closed in $B(\hisr)$, the limit $\int \tilde{f} \otimes d\tilde{\E}$ lies in $\MSR$.

\emph{Step 3: Defining the abstract integral.} By Step~2, $\int \tilde{f} \otimes d\tilde{\E}$ lies in $\overline{\pi_\S(\MS) \otimes_{\rm alg} \pi_\R(\MR)}^{\rm uw} \subset B(\hisr)$. We define the abstract integral as \[\int_\Sigma f \otimes d\E := (\pi_\S \otimes \pi_\R)^{-1}(\int \tilde{f} \otimes d\tilde{\E}) \in \MSR.\] This element does not depend on the choice of faithful normal representations: the characterizing property~\eqref{eq:characformula} -- verified below -- is formulated entirely in terms of the abstract algebras $\MS$, $\MR$ and their states, and determines $\int_\Sigma f \otimes d\E$ uniquely by Lemma~\ref{lem:extfromstates}. Indeed, for $\rho \in \S(\MS)$ and $\omega \in \S(\MR)$, choose positive trace-class $\tilde{\rho} \in \T(\his)_+$ and $\tilde{\omega} \in \T(\hir)_+$ representing them (i.e., $\rho(\cdot) = \tr[\tilde{\rho}\,\pi_\S(\cdot)]$ and $\omega(\cdot) = \tr[\tilde{\omega}\,\pi_\R(\cdot)]$). Then \[(\rho \otimes \omega)(\int f \otimes d\E) = \tr[(\tilde{\rho} \otimes \tilde{\omega}) \int \tilde{f} \otimes d\tilde{\E}] = \int_\Sigma \tr[\tilde{\rho}\,\tilde{f}(x)] \, d\tilde{\E}_{\tilde{\omega}}(x) = \int_\Sigma f_\rho \, d\E_\omega.\]

\emph{Step 4: Functoriality.} For $\rho \in \S(\N_\S)$ and $\omega \in \S(\N_\R)$:
\begin{align*}
    (\rho \otimes \omega)\Big(\int_\Sigma (\Psi \circ f \circ \alpha) \otimes d(\Phi \circ \E)\Big)
    &= \int_\Sigma (\Psi \circ f \circ \alpha)_\rho \, d(\Phi \circ \E)_\omega
    = \int_\Sigma f_{\Psi_*\rho} \circ \alpha \, d\E_{\Phi_*\omega} \\
    &= \int_{\alpha(\Sigma)} f_{\Psi_*\rho} \, d(\alpha_*\E)_{\Phi_*\omega}
    = (\Psi_*\rho \otimes \Phi_*\omega)\Big(\int_{\alpha(\Sigma)} f \otimes d(\alpha_*\E)\Big) \\
    &= (\rho \otimes \omega)\Big((\Psi \otimes \Phi)\big(\int_{\alpha(\Sigma)} f \otimes d(\alpha_*\E)\big)\Big),
\end{align*}
where the second equality uses $(\Psi \circ f \circ \alpha)_\rho = \rho \circ \Psi \circ f \circ \alpha = f_{\Psi_*\rho} \circ \alpha$ and $(\Phi \circ \E)_\omega = \E_{\Phi_*\omega}$, and the third is the Lebesgue change of variables $\int_\Sigma g \circ \alpha \, d\mu = \int_{\alpha(\Sigma)} g \, d(\alpha_*\mu)$. Uniqueness (Lemma~\ref{lem:extfromstates}) gives \eqref{eq:pushfwd}.
\end{proof}

\section{The domains of integration}\label{sec:Bb}

We now identify natural domains for the integration maps. Two perspectives coexist and are both useful: before a POVM is fixed, one seeks a universal domain common to all integration maps --- this role is played by $\Bb$, the $\mathrm{C}^*$-algebra of uniformly bounded ultraweakly measurable functions, maximal among such common domains; once a POVM $\E$ is fixed, the natural domain is the quotient $\Linf := \Bb/\N_\E$ by the ideal of $\E$-null functions --- it is a $\mathrm{W}^*$-algebra matching the target $\MSR$, defined via an $\E$-essential supremum in complete analogy with the scalar Lebesgue $L^\infty_\mu(\Sigma)$ spaces. Defined there, the integration map acquires normality and faithfulness (Section~\ref{sec:propint}).

\subsection{The universal domain \texorpdfstring{$\Bb$}{Bb}}\label{sec:Bb-universal}

\begin{proposition}\label{prop:BbisCstar}
Assume $\MSp$ separable. Then $\Bb$ is a $\mathrm{C}^*$-algebra under the uniform norm $\|f\|_\infty = \sup_{x \in \Sigma} \|f(x)\|$ and the pointwise operations $f^*(x) := f(x)^*$, $(fg)(x) := f(x)g(x)$.
\end{proposition}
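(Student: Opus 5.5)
The plan is to view $\Bb$ inside the $\mathrm{C}^*$-algebra $\ell^\infty(\Sigma,\MS)$ of all uniformly bounded functions $\Sigma \to \MS$. With pointwise operations and the sup norm, this algebra is a $\mathrm{C}^*$-algebra: completeness holds because uniformly Cauchy sequences converge pointwise in norm. The $\mathrm{C}^*$-identity $\|f^*f\|_\infty = \sup_x\|f(x)\|^2 = \|f\|_\infty^2$ is inherited pointwise from $\MS$. It therefore suffices to show that $\Bb$ is a norm-closed $*$-subalgebra, containing the unit $x \mapsto \id$.

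Linearity and the unit are immediate. For the involution, note that $\rho(f(x)^*) = \overline{\rho^\dagger(f(x))}$ with $\rho^\dagger(a):=\overline{\rho(a^*)} \in \MSp$; for states $\rho^\dagger=\rho$. Hence $f^*_\rho = \overline{f_\rho}$ is measurable. For norm-closedness, take $f_n \to f$ uniformly with $f_n \in \Bb$. Then $|\rho(f_n(x)) - \rho(f(x))| \le \|\rho\|\,\|f_n-f\|_\infty$, so $f_\rho$ is a pointwise limit of measurable functions and is therefore measurable. The limit $f$ is bounded, so $f \in \Bb$.

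The main obstacle is closure under products. Measurability of $x\mapsto \rho(f(x)g(x))$ does not follow directly, because multiplication is not jointly ultraweakly continuous. To handle this I would use Lemma~\ref{lem:simpleapprox}. First choose simple $\MS$-valued $g_m$ with $\|g_m\|_\infty\le\|g\|_\infty$ and $g_m(x)\xrightarrow{\rm uw} g(x)$ for each $x$.

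For a simple $g_m=\sum_j b_j\chi_{X_j}$, we have $\rho(f(x)g_m(x))=\sum_j \chi_{X_j}(x)\,(b_j\cdot\rho)(f(x))$. Here $b_j\cdot\rho\in\MSp$ by the bimodule structure \eqref{eq:bimodule}, and $f$ is measurable against every element of $\MSp$. So each $x\mapsto\rho(f(x)g_m(x))$ is measurable.

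Next, for fixed $x$, left multiplication by $f(x)$ is ultraweakly continuous, since $\rho(f(x)\,\cdot\,) = \rho\cdot f(x)\in\MSp$. Hence $\rho(f(x)g_m(x))\to\rho(f(x)g(x))$ pointwise in $x$. So $(fg)_\rho$ is a pointwise limit of measurable functions and is measurable. Boundedness follows from $\|f(x)g(x)\|\le\|f\|_\infty\|g\|_\infty$.

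Separability of $\MSp$ enters precisely through Lemma~\ref{lem:simpleapprox}. Combining the closure properties above with the $\mathrm{C}^*$-properties inherited from $\ell^\infty(\Sigma,\MS)$ yields the claim.
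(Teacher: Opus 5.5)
Your proof is correct and follows essentially the same route as the paper. Completeness comes from pointwise limits of measurable scalar functions, and closure under products comes from approximating $g$ by simple functions via Lemma~\ref{lem:simpleapprox}, rewriting $\rho(f(x)g_m(x))$ through the bimodule action and using separate ultraweak continuity of multiplication; you also check explicitly that $f^*$ is ultraweakly measurable, which the paper leaves implicit, and your $b_j\cdot\rho$ matches the convention in \eqref{eq:bimodule}.
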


\begin{proof}
\emph{Completeness.} Given a uniformly Cauchy sequence $(f_m) \subset \Bb$ converging uniformly to $f: \Sigma \to \MS$, we have $f_m(x) \to f(x)$ in norm for each $x$. Since $\MS$ is norm-closed (being a $\mathrm{C}^*$-algebra), $f(x) \in \MS$. For each $\omega \in \MSp$, the functions $x \mapsto \omega(f_m(x))$ are measurable and converge pointwise to $x \mapsto \omega(f(x))$, so $f$ is ultraweakly measurable; it is clearly uniformly bounded, so $f \in \Bb$.

\emph{Closure under multiplication.} For $f,g \in \Bb$, the uniform bound $\|fg\|_\infty \leq \|f\|_\infty\|g\|_\infty$ is immediate, and $f(x)g(x) \in \MS$ since $\MS$ is closed under multiplication. It remains to show $fg$ is ultraweakly measurable. Fix $\omega \in \MSp$. By Lemma~\ref{lem:simpleapprox}, there exist $\MS$-valued simple functions $g_n$ with $g_n(x) \xrightarrow{\rm uw} g(x)$ for all $x$ and $\|g_n\|_\infty \leq \|g\|_\infty$. Writing $g_n = \sum_{j=1}^{M_n} b_{n,j}\chi_{X_{n,j}}$ with $b_{n,j} \in \MS$, we have for each $x \in \Sigma$:
\[
    \omega(f(x)g(x)) = \lim_{n \to \infty} \omega(f(x)g_n(x)) = \lim_{n \to \infty}\sum_{j=1}^{M_n} (\omega \cdot b_{n,j})(f(x))\chi_{X_{n,j}}(x),
\]
where $\omega \cdot b_{n,j} \in \MSp$ is the right-module action \eqref{eq:bimodule}. The convergence $\omega(f(x)g_n(x)) \to \omega(f(x)g(x))$ holds because $g_n(x) \xrightarrow{\rm uw} g(x)$ implies $f(x)g_n(x) \xrightarrow{\rm uw} f(x)g(x)$ (multiplication is separately ultraweakly continuous on bounded sets). Since $x \mapsto (\omega \cdot b_{n,j})(f(x))$ is measurable by definition of $\Bb$, the right-hand side is a pointwise limit of measurable functions and hence measurable (see e.g.~\cite[Thm.~2.9]{folland_real_1999}).

The algebraic properties (associativity, distributivity) are inherited pointwise from $\MS$. The $\mathrm{C}^*$-identity $\|f^*f\|_\infty = \|f\|_\infty^2$ follows from $\|f(x)^*f(x)\| = \|f(x)\|^2$ for each $x$.
\end{proof}

The $\mathrm{C}^*$-algebra $\Bb$ is the largest common domain for all integration maps in the following sense:

\begin{proposition}\label{prop:integrability}
Consider 
$f: \Sigma \to \MS$ is ultraweakly measurable but unbounded. Then there
exist a $\mathrm{W}^*$-algebra $\MR$ and a POVM $\E: \F \to \Eff(\MR)$ for which
$\int_\Sigma f \otimes d\E$ does not exist as a bounded operator in $\MSR$ satisfying the characterizing formula.
\end{proposition}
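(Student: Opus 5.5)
The idea is to choose the simplest possible partner, $\MR = \Cn$, together with a scalar probability measure adapted to the growth of $f$. Then $\MSR \cong \MS$, product states reduce to $\rho \otimes 1$, and the characterizing formula \eqref{eq:characformula} requires an element $A \in \MS$ with $\rho(A) = \int_\Sigma f_\rho \, d\E$ for all $\rho \in \S(\MS)$. The task is therefore to choose $\E$ so that the right-hand side is undefined (non-integrable) for some normal state $\rho$, or at least unbounded in $\rho$.

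\emph{Step 1: extract a scalar witness of unboundedness.} Since $\sup_x \|f(x)\| = \infty$, pick points $x_n \in \Sigma$ with $\|f(x_n)\| \geq 4^n$. I want a single normal functional that sees this growth. For each $n$ choose $\rho_n \in (\MSp)_1$ with $|\rho_n(f(x_n))| \geq \tfrac12 \|f(x_n)\|$, and decompose it by Jordan into at most four states. After passing to a subsequence we may take $\rho_n$ to be a state $\sigma_n$ with $|\sigma_n(f(x_n))| \geq \tfrac18 4^n$. Define the normal state $\rho := \sum_n 2^{-n}\sigma_n$, which is norm convergent. The difficulty is that contributions from different $\sigma_m$ at the point $x_n$ could cancel. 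To avoid this, I work with the family $\{\sigma_n\}$ directly rather than relying on a single $\rho$.

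\emph{Step 2: choose the measure.} Let $\E := \sum_n p_n \delta_{x_n}$ with $p_n = 2^{-n}$, normalized if necessary. Atoms are measurable provided singletons lie in $\F$. If they do not, I instead use the measurable sets $\{x : |\sigma_n(f(x))| > \tfrac18 4^n\}$, which are nonempty and measurable by ultraweak measurability of $f$, and take $\E$ to be a sum of probability measures concentrated there. This fallback requires some care, since a measure on an arbitrary measurable set is needed; a point mass $\delta_{x_n}$ viewed as a measure on $\F$ always exists ($\delta_{x}(X) = \chi_X(x)$), so atoms are fine even without singletons in $\F$. Then $\int f_{\sigma_n} \, d\E = \sum_m p_m \sigma_n(f(x_m))$ whenever this is integrable.

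\emph{Step 3: contradiction.} Suppose $A \in \MS$ satisfies the characterizing formula for all states. Restricting to finite-dimensional information is not sufficient, so I use the following approach. The map $\rho \mapsto \int f_\rho \, d\E$ must be defined, meaning each $f_\rho \in L^1(\E)$, and must be bounded by $\|A\|$ on states. Either some $f_{\sigma}$ fails to be $\E$-integrable, in which case the formula is meaningless and we are done, or all are integrable. In the latter case, a uniform boundedness argument is needed to force $\sup_x$ control. The cleanest route is to choose the weights adaptively: pick $p_n = 2^{-n}$ and $x_n$ with $\|f(x_n)\| \geq 4^n$. Then testing against $\sigma_n$ gives $|p_n \sigma_n(f(x_n))| \geq 2^n/8$, but the other terms $\sum_{m\neq n} p_m \sigma_n(f(x_m))$ might cancel this. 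The main obstacle is exactly this cancellation. I would handle it by choosing a single point $x_n$ for each $n$ and using a \emph{separate} measure $\E^{(n)} = \delta_{x_n}$. Since the proposition allows choosing $\E$, however, one measure must work. The fix is to take $\MR = \ell^\infty(\Nn)$ and the POVM $\E(X) := (\chi_X(x_n))_{n}$, which is a PVM. Product states $\rho \otimes \delta_n$ then give $(\rho\otimes\delta_n)(A) = \rho(f(x_n))$, so $\|A\| \geq \sup_n |\sigma_n(f(x_n))| = \infty$, which is a contradiction. Normalization holds since $\E(\Sigma) = \id$, and countable additivity is checked coordinatewise.
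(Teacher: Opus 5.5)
Your final argument is correct and is essentially the paper's proof. The paper uses the evaluation PVM $\E(X)=\chi_X$ into $\ell^\infty(\Sigma)$ and tests against product states $\rho\otimes\delta_x$; you restrict this to countably many points $x_n$ with $\|f(x_n)\|\to\infty$, so $\MR=\ell^\infty(\Nn)$ even has separable predual, and choose states with $|\sigma_n(f(x_n))|\geq\|f(x_n)\|/8$, which handles non-self-adjoint values more carefully than the paper's passage from states to $\|f(x)\|$. The exploratory $\MR=\Cn$ route of your Steps 1--2 is unnecessary and can be dropped.
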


\begin{proof}
Suppose $f: \Sigma \to \MS$ is ultraweakly measurable
with $\|f\|_\infty = \infty$. Take $\MR := \ell^\infty(\Sigma)$ and define the
PVM $\E: \F \to \Eff(\ell^\infty(\Sigma))$ by $\E(X) := \chi_X$. For each
$x \in \Sigma$ the evaluation functional $\delta_x$ is a normal state on
$\ell^\infty(\Sigma)$ with $\E_{\delta_x} = \delta_x$, so $\E$ is localizable.
If $\int_\Sigma f \otimes d\E$ existed as a bounded operator satisfying the
characterizing formula \eqref{eq:characformula}, then for every $\rho \in \S(\MS)$
and $x \in \Sigma$,
\[
    \left|(\rho \otimes \delta_x)\Big(\int_\Sigma f \otimes d\E\Big)\right|
    = \left|\int_\Sigma f_\rho \, d\delta_x\right| = \left|\rho(f(x))\right|,
\]
giving $\|\int_\Sigma f \otimes d\E\| \geq \|f(x)\|$ for each $x$, and hence
$\|\int_\Sigma f \otimes d\E\| \geq \|f\|_\infty = \infty$, a contradiction.
\end{proof}

\subsection{The specific algebraic domain \texorpdfstring{$\Linf$}{L-infty-E}}\label{sec:LinfE}

Once a POVM $\E$ is fixed, the universal domain $\Bb$ can be refined to the quotient space $\Linf$, constructed in complete analogy with the scalar Lebesgue case: one starts from all ultraweakly measurable functions, imposes an $\E$-essential boundedness condition, and quotients by $\E$-null equality. Under the assumption that both $\MSp$ and $\MRp$ are separable, the resulting space is a $\mathrm{W}^*$-algebra carrying a precise predual description.

\begin{definition}\label{def:E-equiv}
A set $N \in \F$ is $\E$-\emph{null} if $\E(N) = 0$ in $\MR$, equivalently $\E_\omega(N) = 0$ for every $\omega \in \S(\MR)$. Two functions $f, g \in \Bb$ are $\E$-\emph{equivalent}, written $f \sim_\E g$, if $\{x \in \Sigma : f(x) \neq g(x)\}$ is contained in an $\E$-null set.
\end{definition}

The collection of $\E$-null sets is a $\sigma$-ideal in $\F$. The following is a simple consequence of the above definition.

\begin{lemma}\label{lem:singlenull}
Assume $\MSp$ separable. We have $f \sim_\E g$ if and only if $f_\rho = g_\rho$ holds $\E_\omega$-a.e.\ for every $\rho \in \S(\MS)$ and every $\omega \in \S(\MR)$.
\end{lemma}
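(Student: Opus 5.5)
The forward direction is immediate. If $\{f \neq g\} \subseteq N$ with $\E(N)=0$, then for every $\rho \in \S(\MS)$ we have $\{f_\rho \neq g_\rho\} \subseteq \{f \neq g\} \subseteq N$. Moreover $\E_\omega(N) = \omega(\E(N)) = 0$ for every $\omega \in \S(\MR)$. Hence $f_\rho = g_\rho$ holds $\E_\omega$-a.e.

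For the converse, I will use separability of $\MSp$ to reduce the uncountably many conditions to countably many, so that the exceptional sets can be combined into a single null set. Fix a countable norm-dense subset $(\rho_k)_{k\in\Nn}$ of $\S(\MS)$. By the Jordan decomposition, $\S(\MS)$ is $\Cn$-linearly dense in $\MSp$, so the $\rho_k$ separate points of $\MS$. Set
\[
N_k := \{x \in \Sigma : f_{\rho_k}(x) \neq g_{\rho_k}(x)\},
\]
which lies in $\F$ because $f_{\rho_k}$ and $g_{\rho_k}$ are measurable. Then define $N := \bigcup_k N_k \in \F$. Because the $\rho_k$ separate points, $\{x : f(x) \neq g(x)\} = N$.

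It remains to show $\E(N) = 0$. By hypothesis, $\E_\omega(N_k) = 0$ for every $k$ and every $\omega \in \S(\MR)$. Countable subadditivity then gives $\E_\omega(N) = 0$ for every normal state $\omega$. Since $\E(N) \geq 0$, Lemma~\ref{lem:extfromstates} (item 1, or item 3 since $\E(N)$ is self-adjoint) yields $\E(N) = 0$. Thus $N$ is an $\E$-null set containing $\{f \neq g\}$, so $f \sim_\E g$.

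The only delicate point is the choice of $N$. A naive construction would take an exceptional set for each pair $(\rho,\omega)$, giving an uncountable union that need not be measurable. The construction above avoids this: each $N_k$ is defined independently of $\omega$, and the countable index comes from separability of $\MSp$. This is where that hypothesis is used; no separability of $\MRp$ is needed.
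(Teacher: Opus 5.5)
Your proof is correct and takes essentially the same route as the paper: a countable norm-dense family $(\rho_k)$ in $\S(\MS)$ reduces the hypothesis to countably many exceptional sets, whose union is $\E$-null, and density then forces $f(x)=g(x)$ off that union. The differences are cosmetic. You note that $\{f\neq g\}$ equals the union exactly, and you pass from $\E_\omega(N)=0$ for all $\omega$ to $\E(N)=0$ via Lemma~\ref{lem:extfromstates}, whereas the paper simply uses the $\sigma$-ideal property of $\E$-null sets.
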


\begin{proof}
The forward direction is immediate. For the converse, let $\{\rho_n\}_{n \in \Nn}$ be a countable norm-dense subset of $\S(\MS)$ (existing by separability of $\MSp$). For each $n$ the set $N_n := \{x : \rho_n(f(x) - g(x)) \neq 0\}$ is $\E$-null by assumption. Then $N := \bigcup_n N_n$ is $\E$-null, and for $x \notin N$ we have $\rho_n(f(x) - g(x)) = 0$ for all $n$, hence $\rho(f(x)-g(x)) = 0$ for every $\rho \in \MSp$ by norm-density, hence $f(x) = g(x)$.
\end{proof}

We now give the definition of $\Linf$ space, straightforwardly generalizing the one of Lebesgue theory.

\begin{definition}
	Assume $\MSp$ separable. Then for $f \in \Bmes$, the scalar function $x \mapsto \|f(x)\|$ is $\F$-measurable: $\|f(x)\| = \sup_n |\rho_n(f(x))|$ for any countable norm-dense $\{\rho_n\} \subset (\MS)_{*,1}$, a pointwise supremum of measurable functions. The \emph{$\E$-essential supremum norm} is given by
\begin{equation}\label{eq:essup}
    \|f\|_\E := \inf\{C \geq 0 : \E(\{x \in \Sigma : \|f(x)\| > C\}) = 0\} \; \in [0, \infty],
\end{equation}
which equals $\sup_{\omega \in \S(\MR)} \operatorname{ess\text{-}sup}_{\E_\omega}\|f(\cdot)\|$. We define
\begin{equation}\label{eq:Linf-def}
    \Linf := \{f \in \Bmes : \|f\|_\E < \infty\} \big/\!\sim_\E.
\end{equation}
\end{definition}

As it turns out, the classes in $\Linf$ always admit bounded representatives.

\begin{lemma}[Bounded representatives]\label{lem:canonical}
Assume $\MSp$ separable and consider a~closed two-sided $*$-ideal of $\Bb$ given by $\N_\E := \{f \in \Bb : f(x) = 0\ \E\text{-a.e.}\}$. We have an isometric $*$-isomorphism
\[
	\Linf \cong \Bb / \N_\E.
\]
In particular, $\Linf$ is a unital $\mathrm{C}^*$-algebra under the quotient norm $\|[f]\|_\E = \inf_{g \sim_\E f}\|g\|_\infty$.
\end{lemma}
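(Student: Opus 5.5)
The plan is to construct a map $\Theta: \Bb/\N_\E \to \Linf$, $f + \N_\E \mapsto [f]_{\sim_\E}$, and show that it is a well-defined isometric $*$-isomorphism of normed $*$-algebras. The $\mathrm{C}^*$-structure on $\Linf$ is then transported from the quotient $\Bb/\N_\E$.

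First, $\N_\E$ is a closed two-sided $*$-ideal of the $\mathrm{C}^*$-algebra $\Bb$ (Proposition~\ref{prop:BbisCstar}). Fix a countable norm-dense set $\{\rho_n\} \subset (\MS)_{*,1}$. Then $\{x : f(x)\neq 0\} = \bigcup_n\{x:\rho_n(f(x))\neq 0\}$ is measurable, and $\{x : (fg)(x) \neq 0\} \subseteq \{x : f(x) \neq 0\}$ gives the ideal property on both sides. Stability under $*$ is clear. For closedness, if $f_m \to f$ uniformly with each $f_m \in \N_\E$, then $f$ vanishes off the countable union of the $\E$-null sets of the $f_m$, which is again $\E$-null. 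Hence $\Bb/\N_\E$ is a $\mathrm{C}^*$-algebra with quotient norm $\inf_{h\in\N_\E}\|f+h\|_\infty$. Since $g \sim_\E f$ with $g \in \Bb$ is the same as $g - f \in \N_\E$, this norm equals $\inf_{g\sim_\E f}\|g\|_\infty$. The map $\Theta$ is well defined and injective because, on $\Bb$, the relation $\sim_\E$ is exactly congruence modulo $\N_\E$. It is a $*$-homomorphism because the operations on both sides are pointwise and $\sim_\E$ is compatible with them.

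\textbf{Surjectivity (bounded representatives).} This is the main step. Given $f \in \Bmes$ with $C := \|f\|_\E < \infty$, the set $N := \{x : \|f(x)\| > C\}$ is measurable by the remark in the definition. Writing $N = \bigcup_k \{\|f\| > C + 1/k\}$, each piece is $\E$-null by the definition of the infimum in \eqref{eq:essup}, so $N$ is $\E$-null by the $\sigma$-ideal property. Now set $g := f\chi_{\Sigma\setminus N}$. Then $g_\rho = f_\rho\chi_{\Sigma\setminus N}$ is measurable, so $g \in \Bb$ with $\|g\|_\infty \le C$, and $g \sim_\E f$. This proves surjectivity. Moreover, every representative satisfies $\|f\|_\E \le \|g\|_\infty$, since $\{\|f\|>\|g\|_\infty\}\subseteq\{f\neq g\}$ is $\E$-null. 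Together with $\|g\|_\infty\le\|f\|_\E$ for the truncation $g$ constructed above, this gives
\[
\inf_{g\sim_\E f}\|g\|_\infty = \|f\|_\E ,
\]
so the quotient norm coincides with the $\E$-essential supremum norm.

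\textbf{Isometry.} By the previous paragraph, $\Theta$ carries the quotient norm to $\|\cdot\|_\E$, so $\Theta$ is an isometric $*$-isomorphism. Hence $\Linf$ with this norm is a unital $\mathrm{C}^*$-algebra, with unit the class of the constant function $\id$. The only delicate point is the measurability of $N$ and of $\{f\neq 0\}$, and this uses separability of $\MSp$ exactly as in Lemma~\ref{lem:singlenull}.
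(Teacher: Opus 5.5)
Your proof is correct and takes essentially the same route as the paper. Both arguments truncate $f$ off the set $N=\{x:\|f(x)\|>\|f\|_\E\}$ to obtain a bounded representative, check that $\N_\E$ is a closed two-sided $*$-ideal, and identify the quotient norm with the $\E$-essential supremum. You also make explicit two points the paper leaves implicit: that $N$ is $\E$-null via the countable union $\bigcup_k\{\|f\|>\|f\|_\E+1/k\}$, and the reverse inequality $\|f\|_\E\le\|g\|_\infty$ for every representative $g$.
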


\begin{proof}
Every equivalence class $[f] \in \Linf$ has a bounded representative: setting $N := \{x : \|f(x)\| > \|f\|_\E\}$ (which lies in $\F$ by measurability of $\|f(\cdot)\|$ and is $\E$-null by \eqref{eq:essup}) and $f' := f \cdot \chi_{\Sigma \setminus N}$ gives $f' \in \Bb$ with $\|f'\|_\infty = \|f\|_\E$ and $[f'] = [f]$ in $\Linf$. Conversely, $\N_\E$ is closed in $\Bb$ under the uniform norm (if $f_n \in \N_\E$ and $f_n \to f$ uniformly, then $\|f(x)\| \leq \|f_n(x)\| + \|f_n-f\|_\infty$ gives $\|f(x)\| = 0$ off a countable union of $\E$-null sets, hence $\E$-a.e.) and is a two-sided $*$-ideal (if $f = 0$ $\E$-a.e.\ and $g \in \Bb$, then $fg = gf = 0$ and $f^* = 0$ hold on the same co-null set). The quotient norm on $\Bb / \N_\E$ therefore coincides with the $\E$-essential supremum, and the bounded-representative construction furnishes an isometric $*$-isomorphism $\Bb/\N_\E \to \Linf$. Since a quotient of a $\mathrm{C}^*$-algebra with respect to a closed two-sided $*$-ideal is a $\mathrm{C}^*$-algebra, $\Linf$ inherits this structure from $\Bb$.
\end{proof}

\begin{remark}[Localizable case]\label{rem:LinfE-loc}
For localizable $\E$ the only $\E$-null set is $\emptyset$: were some $X \in \F$ both non-empty and $\E$-null, picking $x \in X$ and applying the weak convergence $\E_{\omega_n^x} \to \delta_x$ to $\chi_X$ would give
\[
    0 \;=\; \lim_{n \to \infty} \E_{\omega_n^x}(X) \;=\; \chi_X(x) \;=\; 1,
\]
a contradiction. Consequently $\E$-a.e.\ equality reduces to pointwise equality, $\N_\E = \{0\}$, and $\Linf = \Bb$ isometrically; the quotient is trivial.
\end{remark}

When $\MRp$ is separable, $\Linf$ acquires a $\mathrm{W}^*$-structure and a concrete predual description. Fix, once and for all, a norm-dense sequence $\{\omega_n\}_{n \in \Nn} \subset \S(\MR)$ (possible by separability of $\MRp$) and define the \emph{dominating measure}
\begin{equation}\label{eq:muE}
    \mu_\E := \sum_{n=1}^\infty 2^{-n} \E_{\omega_n},
\end{equation}
a probability measure on $(\Sigma,\F)$. It is the measure corresponding to the faithful state $\omega := \sum_{n=1}^\infty 2^{-n} \omega_n$.

\begin{theorem}[$\mathrm{W}^*$-structure of $\Linf$]\label{thm:Linfstructure}
Assume $\MSp$ and $\MRp$ separable. Then:
\begin{enumerate}[label=(\roman*)]
    \item The measure $\mu_\E$ dominates $\E$ in the sense that for every $N \in \F$, $\E(N) = 0 \Leftrightarrow \mu_\E(N) = 0$.
    \item There is an isometric $*$-isomorphism
    \[
        \Linf \;\cong\; \MS \, \bar\otimes \, L^\infty_\E(\Sigma),
    \]
    where $L^\infty_\E(\Sigma):=L^\infty_{\mu_\E}(\Sigma)$, under which $\Linf$ is a $\mathrm{W}^*$-algebra.
    \item The predual is given by
    \[
        \Linf_* \;\cong\; \MSp \,\hat\otimes_\pi\, L^1_\E(\Sigma),
    \]
    where $\hat\otimes_\pi$ denotes the projective (completed) Banach-space tensor product \cite[Ch.~VIII]{diestel_vector_1977}, with the duality pairing
    \[
        \big\langle [f] , \, \rho \otimes h \big\rangle = \int_\Sigma \rho(f(x)) \, h(x) \, d\mu_\E(x)
    \]
    for $\rho \in \MSp$, $h \in L^1_\E(\Sigma)$.
\end{enumerate}
\end{theorem}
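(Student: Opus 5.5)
Part (i) is immediate from the definitions. If $\E(N)=0$ then every $\E_{\omega_n}(N)=0$, so $\mu_\E(N)=0$. Conversely, if $\mu_\E(N)=0$ then $\omega_n(\E(N))=0$ for all $n$. Since the $\omega_n$ are norm-dense in $\S(\MR)$ and $\E(N)\ge 0$, we get $\omega(\E(N))=0$ for all normal states, hence $\E(N)=0$. Consequently the $\E$-null sets coincide with the $\mu_\E$-null sets, $\|\cdot\|_\E$ equals the $\mu_\E$-essential supremum of $\|f(\cdot)\|$, and $\Linf=L^\infty_{\mu_\E}(\Sigma,\MS)$ in the weak-$*$ measurable sense. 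From this point on $\E$ enters only through the finite measure $\mu_\E$, so (ii) and (iii) are statements about the scalar measure space $(\Sigma,\F,\mu_\E)$.

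For (ii) and (iii) I would first build the predual pairing and then identify the dual. For $[f]\in\Linf$ and an elementary tensor $\rho\otimes h\in\MSp\otimes_{\rm alg}L^1_{\mu_\E}$, set $\langle [f],\rho\otimes h\rangle=\int\rho(f)h\,d\mu_\E$. This is well defined on classes by Lemma~\ref{lem:singlenull}, and it is bounded by $\|f\|_\E\|\rho\|\|h\|_1$, so it extends to the projective completion with norm at most $\|[f]\|_\E$. Hence we get a contractive map $J:\Linf\to(\MSp\hat\otimes_\pi L^1)^*$. The standard identification gives $(\MSp\hat\otimes_\pi L^1)^*\cong B(L^1,\MS)$ via $T(h)(\rho)=\Lambda(\rho\otimes h)$. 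I expect injectivity of $J$ and its isometry to be easy: take $\rho_n$ dense and $h=\chi_A\,\overline{\mathrm{sgn}}\,\rho_n(f)$ on sets where $|\rho_n(f)|>\|f\|_\E-\epsilon$. Both arguments use separability of $\MSp$, in the same way as in the definition of $\|\cdot\|_\E$.

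The main obstacle is surjectivity of $J$. Concretely, every bounded operator $T:L^1_{\mu_\E}\to\MS=(\MSp)^*$ must be represented by a weak-$*$ measurable bounded $f$ with $T(h)=\int f h\,d\mu_\E$, i.e.\ a Dunford--Pettis-type theorem for weak-$*$ densities. I would prove it directly. For each $\rho_n$ in a countable $\Cn$-rational span of a dense sequence, the functional $h\mapsto T(h)(\rho_n)$ is given by some $g_n\in L^\infty$. Off a single null set, $\rho\mapsto g_\rho(x)$ is then $\mathbb{Q}[i]$-linear and bounded by $\|T\|\|\rho\|$; modifying on null sets handles each of the countably many relations, so separability is essential here. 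It therefore extends to a bounded functional on $\MSp$, i.e.\ an element $f(x)\in\MS$ with $\|f(x)\|\le\|T\|$. Ultraweak measurability of $f$ follows by norm approximation of $\rho$ by the $\rho_n$. This gives $\Linf\cong(\MSp\hat\otimes_\pi L^1_\E)^*$ isometrically, which is (iii). In particular $\Linf$, already a $\mathrm{C}^*$-algebra by Lemma~\ref{lem:canonical}, is a dual space and hence a $\mathrm{W}^*$-algebra.

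Finally, for (ii) I would define $\iota:\MS\otimes_{\rm alg}L^\infty_{\mu_\E}\to\Linf$ by $a\otimes\varphi\mapsto[x\mapsto\varphi(x)a]$. This map is a $*$-homomorphism, and it is isometric, e.g.\ by representing $L^\infty$ faithfully on $L^2(\mu_\E)$ and $\Linf$ as decomposable operators on $\his\otimes L^2(\mu_\E)\cong L^2(\mu_\E,\his)$. In that picture $\Linf$ is identified with the algebra of decomposable operators with fibres in $\pi_\S(\MS)$, which is the standard description $\MS\bar\otimes L^\infty=(\MS'\otimes\id)'\cap L^\infty{}'$ (commutation theorem for tensor products). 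Alternatively, I would show that $\iota$ is weak-$*$ continuous for the pairing of (iii) and that its image is weak-$*$ dense. Density holds because simple functions are weak-$*$ dense in $\Linf$ by Lemma~\ref{lem:simpleapprox} and dominated convergence. Since $\MS\bar\otimes L^\infty$ is a $\mathrm{W}^*$-algebra, the normal extension of $\iota$ has weak-$*$ closed range and is injective, since its preduals match under $\MSp\hat\otimes_\pi L^1\to(\MS\bar\otimes L^\infty)_*$, which has dense range. So the extension is a $*$-isomorphism.
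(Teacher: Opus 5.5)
Your proposal is correct, but it runs the argument in the opposite direction from the paper. The paper proves (i) exactly as you do. It then obtains (ii) by identifying $\Linf$ with $L^\infty_{\mu_\E}(\Sigma,\MS)$ and citing Takesaki (Ch.~IV, Sec.~7) for $L^\infty_{\mu_\E}(\Sigma,\MS)\cong\MS\bar\otimes L^\infty_{\mu_\E}(\Sigma)$, and reads the $\mathrm{W}^*$-structure off that isomorphism. Finally it deduces (iii) from the cited fact that the predual of $\MS\bar\otimes L^\infty_\mu$ is $\MSp\hat\otimes_\pi L^1_\mu$.

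You instead prove (iii) first and from scratch. You use $(\MSp\hat\otimes_\pi L^1)^*\cong B(L^1,\MS)$ and $(L^1)^*=L^\infty$, and glue the scalar densities $\mathbb{Q}[i]$-linearly off a single null set; in effect this is the proof of $L^1(\mu,X)^*\cong L^\infty_{w^*}(\mu,X^*)$ for separable $X$. You then get the $\mathrm{W}^*$-structure intrinsically from Sakai's characterization, which is the paper's own definition; uniqueness of the predual identifies $\MSp\hat\otimes_\pi L^1_\E(\Sigma)$ as $\Linf_*$. Only after that do you identify $\Linf$ with the tensor product.

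What this buys you: the argument is self-contained and works directly with the ultraweakly measurable functions that $\Linf$ actually consists of. The paper's phrase ``Bochner-type space'' is loose here, since norm-measurable functions form a strictly smaller class in general. Your second route for (ii) also yields the predual formula $(\MS\bar\otimes L^\infty)_*\cong\MSp\hat\otimes_\pi L^1$ as a byproduct rather than assuming it. The paper's route is much shorter but rests entirely on the cited results. Your first route for (ii), via decomposable operators and the commutation theorem, is essentially the proof behind the paper's citation.

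In the second route, one step needs a word. Normality and injectivity of the extension of $\iota$ (the adjoint of the canonical map $\MSp\hat\otimes_\pi L^1\to(\MS\bar\otimes L^\infty)_*$) do not by themselves give a weak-$*$ closed range. You should first note that this extension is a $*$-homomorphism: multiplicativity on $\MS\otimes_{\rm alg}L^\infty$ extends by separate ultraweak continuity of multiplication in both algebras. After that, the range of an injective normal $*$-homomorphism is a weak-$*$ closed $\mathrm{W}^*$-subalgebra, and the resulting isomorphism is automatically isometric.
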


\begin{proof}
(i) For $N \in \F$: $\mu_\E(N) = 0 \Leftrightarrow \E_{\omega_n}(N) = 0$ for every $n$; since $\omega \mapsto \E_\omega(N) = \omega(\E(N))$ is norm-continuous on $\MRp$ and $\{\omega_n\}$ is dense, this is equivalent to $\E_\omega(N) = 0$ for every $\omega \in \MRp$, i.e.\ $\E(N) = 0$. Any two such measures $\mu_\E, \mu_\E'$ share this property, hence are mutually absolutely continuous; we write $L^\infty_\E(\Sigma)=L^\infty_{\mu_\E}(\Sigma)$.

(ii) By (i), the $\sigma$-ideal of $\mu_\E$-null sets coincides with that of $\E$-null sets, so the quotient of $\F$-measurable essentially bounded $\MS$-valued functions by $\mu_\E$-a.e.\ equality is $\Linf$. This identifies $\Linf$ with the Bochner-type space $L^\infty_{\mu_\E}(\Sigma,\MS)$ of essentially bounded $\MS$-valued functions. The standard $\mathrm{W}^*$-tensor-product description \cite[Ch.~IV, Sec.~7]{takesaki2001theory} gives $L^\infty_{\mu_\E}(\Sigma,\MS) \cong \MS \, \bar\otimes \, L^\infty_{\mu_\E}(\Sigma)$.

(iii) The predual of $L^\infty_{\mu_\E}(\Sigma) \bar\otimes \MS$ is the projective Banach-space tensor product of the preduals \cite[Ch.~IV, Prop.~7.5]{takesaki2001theory}. Elements of $\MSp \hat\otimes_\pi L^1_{\mu_\E}(\Sigma)$ are represented by absolutely summable series $\sum_k \rho_k \otimes h_k$; the pairing with $[f] \in \Linf$ is $\sum_k \int_\Sigma \rho_k(f(x))\, h_k(x)\,d\mu_\E(x)$. Well-definedness on classes follows from $\N_\E = \N_{\mu_\E}$.
\end{proof}

\begin{remark}[Classical case]\label{rem:LinfE-classical}
When $\MS = \MR = \Cn$ and $\E = \mu$ is a probability measure on $(\Sigma,\F)$, $\Bmes = \mathcal{B}(\Sigma,\F)$ is the space of complex-valued $\F$-measurable functions, $\Bb = B_b(\Sigma,\F)$, and $\Linf = L^\infty_\mu(\Sigma)$: the construction recovers the Lebesgue $L^\infty_\mu$-space, with its standard predual $L^1_\mu(\Sigma)$.\end{remark}

\section{Properties of the integration maps}\label{sec:propint}

We now investigate the properties of integration with respect to a fixed POVM, as a map from the domains developed in Section~\ref{sec:Bb}. We first treat the universal domain $\Bb$ --- where the integration map is a pointwise-normal unital CP map, isometric for localizable POVMs and multiplicative for sharp ones (Sec.~\ref{sec:propint-univ}). Next we show that this map descends to the quotient $\Linf$, where it acquires unconditional faithfulness and becomes isometric for localizable and sharp $\E$; under separability of $\MRp$, the descended map is a $\mathrm{W}^*$-morphism (Sec.~\ref{sec:propint-desc}).

\subsection{Integration on the universal domain \texorpdfstring{$\Bb$}{Bb}}\label{sec:propint-univ}

\begin{theorem}[Integration map on $\Bb$]\label{thm:integrationmaps2}
Assume $\MSp$ separable. The \emph{integration map}
    \[
        \int_\Sigma d\E: \Bb \ni f \longmapsto \int_\Sigma f \otimes d\E \in \MSR
    \]
is positive, unital and adjoint-preserving.
It is moreover pointwise-normal: if $(f_n) \subset \Bb$ is a bounded sequence with $f_n(x) \xrightarrow{\rm uw} f(x)$ for each $x \in \Sigma$, then
\begin{equation}\label{eq:normality}
        \int_\Sigma f_n \otimes d\E \xrightarrow{\rm uw} \int_\Sigma f \otimes d\E \in \MSR.
    \end{equation}
The map is isometric (hence injective) if $\E$ is localizable, and multiplicative if $\E$ is sharp.
\end{theorem}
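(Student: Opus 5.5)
The plan is to derive every algebraic property from the characterizing formula \eqref{eq:characformula} together with the uniqueness part of Lemma~\ref{lem:extfromstates}, which applies to product states. Positivity is the one property that needs a separate argument, because product states do not detect positivity. I therefore return for it to the Hilbert-space construction in the proof of Theorem~\ref{thm:intop}.

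\emph{Linearity, unitality, adjoints.} Both sides of \eqref{eq:characformula} are linear in $f$, so uniqueness on $\S(\MSR)_{\rm prod}$ gives linearity. Unitality is \eqref{eq:intsimple} with $f=\id_{\MS}\chi_\Sigma$:
\[
\int_\Sigma \id_{\MS}\chi_\Sigma \otimes d\E = \id_{\MS}\otimes\E(\Sigma)=\id .
\]
For adjoints, note that $(f^*)_\rho=\overline{f_\rho}$ and $\E_\omega$ is a positive measure, so
\[
(\rho\otimes\omega)\Big(\int_\Sigma f^*\otimes d\E\Big)=\overline{(\rho\otimes\omega)\Big(\int_\Sigma f\otimes d\E\Big)} .
\]
Item~4 of Lemma~\ref{lem:extfromstates}, in its product-state version, then gives $\int f^*\otimes d\E=(\int f\otimes d\E)^*$.

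\emph{Positivity (the main obstacle).} Let $f\ge 0$ pointwise. The tempting route is to write $f=g^*g$, but $\sqrt{f}$ need not obviously be ultraweakly measurable, so I avoid it. Instead I work in the faithful representation of Theorem~\ref{thm:intop}. There the functional $\tilde\phi_{(f,\E)}$ on $\T(\hisr)$ was defined on positive operators via the sesquilinear form of Lemma~\ref{lem:semiip}. Consequently, for $\psi\in\his\otimes_{\rm alg}\hir$,
\[
\braket{\psi}{\Big(\int\tilde f\otimes d\tilde\E\Big)\psi}=\ip{\psi}{\psi}_{(\tilde f,\tilde\E)}\ \ge\ 0 .
\]
Both sides are continuous in $\psi$, by boundedness of the operator and by \eqref{eq:semiipbound}. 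Hence the inequality extends to all of $\hisr$, and the operator is positive in $B(\hisr)$. Pulling back through the faithful normal $*$-isomorphism $\pi_\S\otimes\pi_\R$ gives positivity in $\MSR$. The point to verify carefully is that the extension of $\tilde\phi$ by continuity agrees with the form on rank-one projections $\dyad{\psi}{\psi}$ for $\psi$ in the algebraic tensor product. This holds because such $\dyad{\psi}{\psi}$ already lie in $\T(\his)\otimes_{\rm alg}\T(\hir)$, where both are defined by the same integrals.

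\emph{Pointwise-normality.} Since $\Bb$ is a unital $\mathrm{C}^*$-algebra (Proposition~\ref{prop:BbisCstar}) and the map is positive and unital, the Russo--Dye theorem gives $\|\int f\otimes d\E\|\le\|f\|_\infty$. (The cruder bound \eqref{eq:generalbound} would also suffice.) So if $\sup_n\|f_n\|_\infty<\infty$, the integrals $\int f_n\otimes d\E$ form a bounded sequence. On each product state $\rho\otimes\omega$, the dominated convergence theorem applied to $(f_n)_\rho\to f_\rho$ in $L^1(\E_\omega)$ gives convergence to $(\rho\otimes\omega)(\int f\otimes d\E)$. For a bounded sequence, convergence on the norm-dense span of product states (Lemma~\ref{lem:proddense}) upgrades to ultraweak convergence by a standard $\epsilon/3$ argument.

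\emph{Isometry for localizable $\E$.} The upper bound $\le\|f\|_\infty$ is Russo--Dye. For the lower bound, fix $x\in\Sigma$, $\rho\in\MSp$ with $\|\rho\|\le1$, and the localizing sequence $\omega^x_n$. Since $\|\rho\otimes\omega^x_n\|\le1$,
\[
\Big\|\int_\Sigma f\otimes d\E\Big\|\ \ge\ \big|(\rho\otimes\omega_n^x)\Big(\int_\Sigma f\otimes d\E\Big)\big| = \Big|\int_\Sigma f_\rho\, d\E_{\omega_n^x}\Big|\ \longrightarrow\ |\rho(f(x))| .
\]
Here the characterizing formula is extended from states to all of $\MSp$ by linearity. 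Taking the supremum over $\rho$ and then over $x$ yields $\|\int f\otimes d\E\|\ge\|f\|_\infty$.

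\emph{Multiplicativity for sharp $\E$.} I proceed in three steps.
\begin{itemize}
\item \emph{Simple $f,g$.} Pass to a common refinement of the two partitions. Then \eqref{eq:intsimple} and the PVM identity $\E(X)\E(Y)=\E(X\cap Y)$ give the claim directly.
\item \emph{Simple $f$, general $g$.} Take simple $g_n\to g$ pointwise ultraweakly with $\|g_n\|_\infty\le\|g\|_\infty$ (Lemma~\ref{lem:simpleapprox}). Then $fg_n\to fg$ pointwise ultraweakly and boundedly, so pointwise-normality gives $\int fg_n\otimes d\E\to\int fg\otimes d\E$. On the other side, $\int f\otimes d\E\cdot\int g_n\otimes d\E\to\int f\otimes d\E\cdot\int g\otimes d\E$ by pointwise-normality and separate ultraweak continuity of multiplication.
\item \emph{General $f,g$.} Repeat the argument, now approximating $f$ by simple functions with $g$ held fixed.
\end{itemize}
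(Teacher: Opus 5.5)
Your proposal is correct. Linearity, unitality, adjoints, positivity (via the form of Lemma~\ref{lem:semiip} on $\his\otimes_{\rm alg}\hir$, then continuity) and the isometry argument (localizing sequences for the lower bound, positivity plus unitality for the upper bound) coincide with the paper's proof. Two steps take a different route.

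\emph{Pointwise-normality.} You argue intrinsically in $\MSR$: a uniform bound on $\int f_n\otimes d\E$, plus convergence on the norm-dense span of product states (Lemma~\ref{lem:proddense}). The paper runs the same $\epsilon$-argument inside a faithful representation. It approximates a trace-class representative by elements of $\T(\his)\otimes_{\rm alg}\T(\hir)$ and uses the crude bound \eqref{eq:generalbound}. The paper's version is therefore independent of positivity, whereas yours (via Russo--Dye) is not. Given your ordering this is harmless, and your fallback to \eqref{eq:generalbound} removes the dependence anyway.

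\emph{Multiplicativity.} The paper treats $f=a\chi_X$ against an \emph{arbitrary} $g$ in one stroke, using the predual bimodule structure:
$(\rho\otimes\omega)\big((a\otimes\E(X))\int g\otimes d\E\big)=\int g_{\rho\cdot a}\,d\E_{\omega\cdot\E(X)}$.
For a PVM, $\E_{\omega\cdot\E(X)}$ is the restriction of $\E_\omega$ to $X$, so a single approximation in $f$ then finishes the proof. You instead check the identity purely algebraically for two simple functions, via a common refinement and $\E(X)\E(Y)=\E(X\cap Y)$. You then pay with two successive approximations, each relying on pointwise-normality, separate ultraweak continuity of multiplication, and closure of $\Bb$ under products (Proposition~\ref{prop:BbisCstar}). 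Your route avoids extending the characterizing formula to non-positive functionals such as $\rho\cdot a$ and $\omega\cdot\E(X)$. It also avoids the factorization $(\rho\otimes\omega)((a\otimes b)B)=((\rho\cdot a)\otimes(\omega\cdot b))(B)$. The paper's route is shorter.

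A side remark: your worry that $\sqrt{f}$ might not be ultraweakly measurable is unfounded. A pointwise positive $f$ satisfies $\|\,\|f\|_\infty\id-f\|_\infty\le\|f\|_\infty$, hence is positive in the $\mathrm{C}^*$-algebra $\Bb$. So $f^{1/2}\in\Bb$, and evaluation at $x$, being a $*$-homomorphism, commutes with functional calculus. Still, writing $f=g^*g$ would not by itself give positivity of the integral for non-sharp $\E$, so bypassing it is the right call.
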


\begin{proof}
Since product normal states determine elements of $\MSR$ uniquely (Lemma~\ref{lem:extfromstates}, items 1--2 for product states), linearity, adjoint-preservation and unitality can be verified by testing against $\rho \otimes \omega$ with $\rho \in \S(\MS)$, $\omega \in \S(\MR)$. Positivity requires a separate argument at the Hilbert-space level.

\emph{Normality.} Choose faithful normal representations $\pi_\S: \MS \hookrightarrow B(\his)$ (with $\his$ separable) and $\pi_\R: \MR \hookrightarrow B(\hir)$. Let $(f_n) \subset \Bb$ with $\|f_n\|_\infty \leq C$ and $f_n(x) \xrightarrow{\rm uw} f(x)$ for each $x$. For each $\Omega \in \MSRp$ and $\epsilon > 0$, take $\tilde{\Omega} \in \T(\hisr)$ with $\tr[\tilde{\Omega}\,\cdot\,]|_{\MSR} = \Omega$ and $\|\tilde{\Omega}\|_1 \leq \|\Omega\| + \epsilon$. Approximate $\tilde{\Omega}$ by $\tilde{\Omega}_0 \in \T(\his) \otimes_{\rm alg} \T(\hir)$ with $\|\tilde{\Omega} - \tilde{\Omega}_0\|_1 < \epsilon$. Writing $\tilde{f}_n := \pi_\S \circ f_n$, $\tilde{f} := \pi_\S \circ f$, $\tilde{\E} := \pi_\R \circ \E$:
\begin{align*}
    \big|\Omega\big(\int (f_n-f) \otimes d\E\big)\big|
    &\leq \|\tilde{\Omega} - \tilde{\Omega}_0\|_1 \cdot \Big\|\int (\tilde{f}_n-\tilde{f}) \otimes d\tilde{\E}\Big\| + \Big|\tr\Big[\tilde{\Omega}_0 \int (\tilde{f}_n-\tilde{f}) \otimes d\tilde{\E}\Big]\Big| \\
    &\leq 8C \epsilon + \Big|\tr\Big[\tilde{\Omega}_0 \int (\tilde{f}_n-\tilde{f}) \otimes d\tilde{\E}\Big]\Big|.
\end{align*}
Writing $\tilde{\Omega}_0 = \sum_{k=1}^K \tilde{\rho}_k \otimes \tilde{\omega}_k$, the second term equals $\sum_k \int_\Sigma \tr[\tilde{\rho}_k(\tilde{f}_n(x)-\tilde{f}(x))] \, d\tilde{\E}_{\tilde{\omega}_k}(x)$, which vanishes by the dominated convergence theorem. Hence $\limsup_n |\Omega(\int (f_n-f) \otimes d\E)| \leq 8C\epsilon$. Since $\epsilon$ is arbitrary, \eqref{eq:normality} holds.

\emph{Linearity.} For $f,g \in \Bb$ and $\lambda \in \Cn$:
\[
    \int_\Sigma (f + \lambda g)_\rho \, d\E_\omega = \int_\Sigma f_\rho \, d\E_\omega + \lambda \int_\Sigma g_\rho \, d\E_\omega,
\]
so by Lemma~\ref{lem:extfromstates}, $\int_\Sigma (f + \lambda g) \otimes d\E = \int_\Sigma f \otimes d\E + \lambda \int_\Sigma g \otimes d\E$.

\emph{Adjoint-preservation.} For states $\rho \in \S(\MS)$ and $\omega \in \S(\MR)$ (which are self-adjoint as elements of $\MSp$ and $\MRp$ respectively):
\[
    (\rho \otimes \omega)\Big(\big(\int_\Sigma f \otimes d\E\big)^*\Big)
= \overline{(\rho \otimes \omega)\Big(\int_\Sigma f \otimes d\E\Big)}
= \overline{\int_\Sigma f_\rho \, d{\E}_{\omega}}
= \int_\Sigma \overline{f_\rho}\,d{\E}_{\omega}
= \int_\Sigma (f^*)_\rho\,d{\E}_{\omega},
\]
where we used that $\E_\omega$ is a positive (hence real) measure and that $\overline{\rho(f(x))} = \rho(f(x)^*)$ for a state $\rho$. By Lemma~\ref{lem:extfromstates}, $(\int f \otimes d\E)^* = \int f^* \otimes d\E$.

\emph{Positivity.} If $f \geq 0$ pointwise, i.e., $f(x) \in (\MS)_+$ for all $x$, choose faithful normal representations $\pi_\S, \pi_\R$ as in the proof of Theorem~\ref{thm:intop} and set $\tilde{f} := \pi_\S \circ f$, $\tilde{\E} := \pi_\R \circ \E$. Then $\tilde{f}(x) \geq 0$ in $B(\his)$ for all $x$. By Lemma~\ref{lem:semiip}, the sesquilinear form satisfies $\ip{\psi}{\psi}_{(\tilde{f},\tilde{\E})} \geq 0$ for all $\psi \in \his \otimes_{\rm alg} \hir$. On product vectors, the characterizing formula gives $\braket{\xi \otimes \eta}{(\int \tilde{f} \otimes d\tilde{\E})\,\xi' \otimes \eta'} = \ip{\xi \otimes \eta}{\xi' \otimes \eta'}_{(\tilde{f},\tilde{\E})}$, so by linearity $\braket{\psi}{(\int \tilde{f} \otimes d\tilde{\E})\psi} = \ip{\psi}{\psi}_{(\tilde{f},\tilde{\E})} \geq 0$ for all $\psi \in \his \otimes_{\rm alg} \hir$, and by continuity (both sides are bounded quadratic forms) for all $\psi \in \hisr$. Hence $\int \tilde{f} \otimes d\tilde{\E} \geq 0$ in $B(\hisr)$, and since it lies in $\MSR$ (Thm.~\ref{thm:intop}), $\int f \otimes d\E \geq 0$ in $\MSR$.

\emph{Unitality.} Taking $f = \id_{\MS}$: $(\rho \otimes \omega)(\int \id \otimes d\E) = \int_\Sigma \rho(\id_{\MS}) \, d\E_\omega = 1 = (\rho \otimes \omega)(\id_{\MSR})$ for all $\rho \in \S(\MS)$, $\omega \in \S(\MR)$. Since product states determine identity uniquely (Lemma~\ref{lem:extfromstates}, item 7), $\int \id \otimes d\E = \id_{\MSR}$.

\emph{Isometry (localizable $\E$).} We prove the lower bound $\|\int f \otimes d\E\| \geq \|f\|_\infty$; the matching upper bound (contractivity) follows from unitality and positivity. Let $\omega_n^x \in \S(\MR)$ be localizing sequences at $x \in \Sigma$. For each $\varphi \in (\MS)_*$ with $\|\varphi\| \leq 1$:
\[
    |(\varphi \otimes \omega_n^x)\big(\int_\Sigma f \otimes d\E\big)| = \Big|\int_\Sigma f_\varphi \, d\E_{\omega_n^x}\Big| \xrightarrow{n \to \infty} |f_\varphi(x)| = |\varphi(f(x))|,
\]
so $\|\int f \otimes d\E\| \geq |\varphi(f(x))|$ for all $\varphi \in (\MS)_{*,1}$ and $x \in \Sigma$. Since $\|A\| = \sup_{\varphi \in (\MS)_{*,1}} |\varphi(A)|$ by duality, the supremum over $\varphi$ gives $\|\int f \otimes d\E\| \geq \|f(x)\|$ for each $x$, and then the supremum over $x$ gives $\|\int f \otimes d\E\| \geq \|f\|_\infty$.

\emph{Multiplicativity (sharp $\E$).} We first establish
\begin{equation}\label{eq:mult}
    \Big(\int_\Sigma f \otimes d\E\Big)\Big(\int_\Sigma g \otimes d\E\Big) = \int_\Sigma fg \otimes d\E
\end{equation}
for simple $f \in \Bb$ and arbitrary $g \in \Bb$. By linearity in $f$, it suffices to prove \eqref{eq:mult} when $f = a\chi_X$ for $a \in \MS$ and $X \in \F$. In this case $\int f \otimes d\E = a \otimes \E(X)$. For each $\rho \in \MSp$ and $\omega \in \MRp$:
\begin{align*}
    (\rho \otimes \omega)\Big((a \otimes \E(X))\int_\Sigma g \otimes d\E\Big)
    &= \int_\Sigma g_{\rho \cdot a} \, d\E_{\omega \cdot \E(X)},
\end{align*}
where $\rho \cdot a \in \MSp$ is the right-module action \eqref{eq:bimodule}: $g_{\rho \cdot a}(x) = (\rho \cdot a)(g(x)) = \rho(a\,g(x))$. For a PVM, $\E(X)\E(Y) = \E(X \cap Y)$, so
\[
\E_{\omega \cdot \E(X)}(Y) = (\omega \cdot \E(X))(\E(Y)) = \omega(\E(X)\E(Y)) = \omega(\E(X \cap Y)) = \E_\omega(X \cap Y),
\]
i.e., $\E_{\omega \cdot \E(X)}$ is the restriction of $\E_\omega$ to $X$. This gives
\[
    \int_\Sigma g_{\rho \cdot a} \, d\E_{\omega \cdot \E(X)} = \int_X g_{\rho \cdot a} \, d\E_\omega = \int_\Sigma (fg)_\rho \, d\E_\omega,
\]
establishing \eqref{eq:mult} for simple $f$.

We now extend to arbitrary $f,g \in \Bb$. By Lemma~\ref{lem:simpleapprox}, take $\MS$-valued simple $f_n$ with $\|f_n\|_\infty \leq \|f\|_\infty$ and $f_n(x) \xrightarrow{\rm uw} f(x)$. By the pointwise-normality of the integration map \eqref{eq:normality},
\begin{equation}\label{eq:uwconv}
    \int_\Sigma f_n \otimes d\E \xrightarrow{\rm uw} \int_\Sigma f \otimes d\E \; \in \MSR.
\end{equation}
Combining \eqref{eq:mult} for simple $f_n$ with ultraweak convergence:
\begin{align*}
    (\rho \otimes \omega)\Big(\big(\int f \otimes d\E\big)\big(\int g \otimes d\E\big)\Big)
    &= \lim_{n \to \infty} (\rho \otimes \omega)\Big(\big(\int f_n \otimes d\E\big)\big(\int g \otimes d\E\big)\Big) \\
    &= \lim_{n \to \infty} \int_\Sigma (f_ng)_\rho \, d\E_\omega = \int_\Sigma (fg)_\rho \, d\E_\omega,
\end{align*}
where the first equality uses that multiplication in $\MSR$ is separately ultraweakly continuous (the sequence $\int f_n \otimes d\E$ converges ultraweakly and the factor $\int g \otimes d\E$ is fixed), and the last uses the dominated convergence theorem: $(f_n g)_\rho(x) = \rho(f_n(x)g(x)) \to \rho(f(x)g(x)) = (fg)_\rho(x)$ pointwise (since $f_n(x) \xrightarrow{\rm uw} f(x)$ and multiplication is separately ultraweakly continuous on bounded sets) and $|(f_n g)_\rho(x)| \leq \|f\|_\infty \|g\|_\infty$. This establishes \eqref{eq:mult} for all $f,g \in \Bb$.
\end{proof}

\begin{remark}[Monotone convergence]\label{rem:convergence}
The pointwise-normality of the integration map \eqref{eq:normality} immediately yields the operator-valued analogue of \emph{monotone convergence}: if $(f_n) \subset \Bb$ satisfies $0 \leq f_1(x) \leq f_2(x) \leq \ldots$ with $\sup_n \|f_n\|_\infty < \infty$, then $f(x) := \sup_n f_n(x)$ exists for each $x$ (every bounded increasing sequence in a $\mathrm{W}^*$-algebra has a supremum and converges to it ultraweakly \cite[Prop.~1.7.4]{sakai_Cstar_1971}), $f \in \Bb$, and $\int f_n \otimes d\E \xrightarrow{\rm uw} \int f \otimes d\E$.
\end{remark}

\subsection{Descent to \texorpdfstring{$\Linf$}{L-infty-E}}\label{sec:propint-desc}

With both preduals separable, the integration map factors through the quotient $\Linf = \Bb / \N_\E$ as a faithful unital $\mathrm{W}^*$-morphism.

\begin{theorem}[Descent to $\Linf$]\label{thm:descent}
Assume both $\MSp$ and $\MRp$ separable. The integration map descends to a faithful normal unital CP map between $\mathrm{W}^*$-algebras:\footnote{We write $\int_\Sigma d\E$ also for the descended map, with the domain clear from context.}

\[
\int_\Sigma d\E \;:\; \Linf \;\longrightarrow\; \MSR, \qquad [f] \longmapsto \int_\Sigma f \otimes d\E;
\]
it is isometric if $\E$ is localizable and an isometric $*$-homomorphism if $\E$ is sharp.\end{theorem}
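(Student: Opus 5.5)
The plan is to first check that the map descends to $\Linf$, then obtain faithfulness, normality, complete positivity and the isometry statements from it. Throughout I work with the description $\Linf \cong \Bb/\N_\E$ of Lemma~\ref{lem:canonical} and the $\mathrm{W}^*$-structure of Theorem~\ref{thm:Linfstructure}.

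\emph{Well-definedness.} Let $f \in \N_\E$. Then $f_\rho = 0$ holds $\E_\omega$-a.e.\ for every $\rho \in \S(\MS)$ and $\omega \in \S(\MR)$. The characterizing formula \eqref{eq:characformula} therefore gives $(\rho\otimes\omega)(\int f\otimes d\E)=0$ on all product states, so $\int f \otimes d\E = 0$ by Lemma~\ref{lem:extfromstates}. By linearity (Theorem~\ref{thm:integrationmaps2}) the map factors through the quotient. The descended map inherits linearity, unitality, positivity and $*$-preservation, since positive classes admit pointwise positive bounded representatives (the truncation in Lemma~\ref{lem:canonical} preserves positivity).

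\emph{Faithfulness.} Take $[f] \geq 0$ with a representative satisfying $f(x) \geq 0$ for all $x$, and suppose $\int f \otimes d\E = 0$. Then $\int f_\rho\, d\E_\omega = 0$ for all states $\rho,\omega$, with $f_\rho \geq 0$. Hence $f_\rho = 0$ holds $\E_\omega$-a.e., and Lemma~\ref{lem:singlenull} gives $f \sim_\E 0$.

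\emph{Complete positivity.} I would avoid dilations and use a matrix-amplification trick. The algebra $M_n(\Cn)\otimes\Linf$ is identified with $L^\infty_\E(\Sigma, M_n(\MS))$, and $M_n(\MS)$ again has separable predual. Testing on product states of $M_n(\MS)\bar\otimes\MR$, the amplification $\id_n \otimes \int_\Sigma d\E$ coincides with the integration map for $M_n(\MS)$-valued functions. Positivity of the latter, from Theorem~\ref{thm:integrationmaps2}, then gives $n$-positivity. The identification of product functionals on matrix units is routine.

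\emph{Normality.} This is the step I expect to be the main obstacle. The plan is to show that every $\Omega \in \MSRp$ pulls back to an element of $\Linf_*$. For a product functional $\Omega = \rho \otimes \omega$ we have $\E_\omega \ll \mu_\E$ by Theorem~\ref{thm:Linfstructure}(i), so Radon--Nikodym gives $h_\omega := d\E_\omega/d\mu_\E \in L^1_\E(\Sigma)$. Then
\[
(\rho\otimes\omega)\Big(\int f\otimes d\E\Big) = \int_\Sigma \rho(f(x))\,h_\omega(x)\,d\mu_\E(x) = \langle [f],\rho\otimes h_\omega\rangle,
\]
which is normal by Theorem~\ref{thm:Linfstructure}(iii). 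Linear combinations of product functionals are norm-dense in $\MSRp$ by Lemma~\ref{lem:proddense}. The map $\Omega \mapsto \Omega\circ\int_\Sigma d\E$ is norm-continuous into $(\Linf)^*$, because the integration map is contractive (unital positive). Finally, $\Linf_*$ is norm-closed in $(\Linf)^*$. Together these give normality.

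\emph{Isometry statements.} In the localizable case, Remark~\ref{rem:LinfE-loc} gives $\Linf = \Bb$, and isometry is already part of Theorem~\ref{thm:integrationmaps2}. In the sharp case, multiplicativity descends from Theorem~\ref{thm:integrationmaps2}, so the map is a $*$-homomorphism. It is injective because faithfulness gives: if $\int [f]\otimes d\E=0$ then $\int [f^*f]\otimes d\E = 0$, hence $[f^*f]=0$ and so $[f]=0$. An injective $*$-homomorphism between $\mathrm{C}^*$-algebras is isometric.
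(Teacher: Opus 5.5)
Your proof is correct. On well-definedness, faithfulness and the isometry/$*$-homomorphism statements it matches the paper; the only difference there is that you apply Lemma~\ref{lem:singlenull} directly, where the paper detours through $\mu_\E$.

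It departs from the paper in two places, complete positivity and normality.

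For complete positivity, the paper lifts a positive class in $M_n(\Linf)$ to some $L^*L\in M_n(\Bb)_+$. It then invokes Corollary~\ref{cor:CP}, i.e.\ the Stinespring factorization through a Naimark dilation together with multiplicativity of PVM integrals. You instead observe that $\id_n\otimes\int_\Sigma d\E$ \emph{is} the integration map for $M_n(\MS)$-valued functions. On product functionals $\rho'\otimes\omega$ this is a short computation using $\rho'_{ij}(a):=\rho'(e_{ij}\otimes a)$ and the characterizing formula extended linearly to all of $\MSp$. So $n$-positivity reduces to plain positivity in Theorem~\ref{thm:integrationmaps2}, which rests only on the sesquilinear-form argument of Lemma~\ref{lem:semiip}. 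This is more elementary and keeps the descent theorem independent of Section~\ref{sec:CP}. The paper's route costs more but yields the dilation picture as a by-product.

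For normality, the paper checks preservation of suprema of bounded increasing sequences via pointwise-normality on $\Bb$. That requires arranging pointwise monotone representatives and identifying the supremum in $\Linf$ with the class of the pointwise supremum, which the paper asserts without proof. Strictly speaking, the sequential criterion also needs $\sigma$-finiteness of $\Linf$ rather than separability of its predual, since $L^1_\E(\Sigma)$ need not be separable. You instead show that the pullback of $\rho\otimes\omega$ is $\rho\otimes h_\omega\in\Linf_*$. You then conclude by norm-density of product functionals, boundedness of the map, and norm-closedness of $\Linf_*$ in $(\Linf)^*$. This avoids the order-theoretic steps and produces the predual map of Corollary~\ref{cor:predual} along the way, reversing the paper's order of deduction. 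The price is reliance on the predual description in Theorem~\ref{thm:Linfstructure}(iii) and on Radon--Nikodym.

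One small repair is needed. Justify that a positive class has a pointwise positive representative by lifting $[f]^{1/2}$ to some $g\in\Bb$ and taking $g^*g$, as the paper does in its CP step. The truncation of Lemma~\ref{lem:canonical} only preserves positivity that is already known to hold a.e., so it does not establish this on its own.
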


\begin{proof}
The integration map $\int_\Sigma d\E: \Bb \to \MSR$ satisfies $\N_\E \subseteq \ker(\int_\Sigma d\E)$: if $f = 0$ $\E$-a.e., then $\int_\Sigma f_\rho \, d\E_\omega = 0$ for all $\rho, \omega$, so $\int_\Sigma f \otimes d\E = 0$ (product states are $\Cn$-linearly dense in $\MSRp$). Hence the map factors through the quotient $*$-homomorphism $q_\E: \Bb \twoheadrightarrow \Linf$:
\[
\Bb \;\xrightarrow{\;q_\E\;}\; \Linf \;\xrightarrow{\;\int_\Sigma d\E\;}\; \MSR.
\]

Any $[f] \in \Linf$ has a representative $f' \in \Bb$ with $\|f'\|_\infty = \|[f]\|_\E$, so $\|\int_\Sigma f' \otimes d\E\| \leq \|f'\|_\infty = \|[f]\|_\E$, assuring contractivity. Unitality is immediate. If $\E$ is localizable, then $\N_\E = \{0\}$, so $\Linf = \Bb$ isometrically assuring the last claim.

\emph{Complete positivity.} For any $[F] \in M_n(\Linf)_+$, the functional calculus in $M_n(\Linf)$ provides $[F]^{1/2} \in M_n(\Linf)_+$. Lift $[F]^{1/2}$ to any $L \in M_n(\Bb)$ with $q_\E^{(n)}(L) = [F]^{1/2}$ (surjectivity of $q_\E^{(n)}$), and set $G := L^*L \in M_n(\Bb)_+$. Then $q_\E^{(n)}(G) = [F]^{1/2*}\,[F]^{1/2} = [F]$, so $G$ is a positive representative of $[F]$. Complete positivity of the universal-domain map (Corollary~\ref{cor:CP}) yields $\int_\Sigma G \otimes d\E \in M_n(\MSR)_+$, and well-definedness of the descent on equivalence classes gives $\int_\Sigma d\E^{(n)}([F]) = \int_\Sigma G \otimes d\E \in M_n(\MSR)_+$.

\emph{Faithfulness.} Suppose $[f] \geq 0$ in $\Linf$ and $\int_\Sigma f \otimes d\E = 0$. Then for every $\rho \in \S(\MS)$ and $\omega \in \S(\MR)$, $\int_\Sigma f_\rho \, d\E_\omega = 0$ with $f_\rho \geq 0$ and $\E_\omega \geq 0$. A non-negative function integrating to zero against a positive measure vanishes a.e., so $f_\rho = 0$ $\E_{\omega_n}$-a.e for each $n$ in any dense sequence. Since $\mu_\E = \sum_n 2^{-n}\E_{\omega_n}$, we get $f_\rho = 0$ $\mu_\E$-a.e.\ for each $\rho$. By Lemma~\ref{lem:singlenull}, $f = 0$ $\E$-a.e., so $[f] = 0$.

\emph{Isometry and multiplicativity for sharp $\E$.} For a PVM, the map $\int_\Sigma d\E: \Bb \to \MSR$ is a $*$-homomorphism. Since $q_\E$ is a surjective $*$-homomorphism, the descended map $\int_\Sigma d\E: \Linf \to \MSR$ is also a $*$-homomorphism. Faithfulness then gives injectivity: if $\int_\Sigma f \otimes d\E = 0$, then $\int_\Sigma f^*f \otimes d\E = (\int_\Sigma f \otimes d\E)^*(\int_\Sigma f \otimes d\E) = 0$ with $[f^*f] \geq 0$, so $[f^*f] = 0$, hence $[f] = 0$. An injective $*$-homomorphism between $\mathrm{C}^*$-algebras is isometric.
\emph{Normality.} For a positive unital map between $\mathrm{W}^*$-algebras with separable preduals, normality is equivalent to preservation of suprema of bounded increasing sequences. Let $0 \leq [f_1] \leq [f_2] \leq \ldots$ be bounded in $\Linf$ with supremum $[f]$. Choose positive representatives $f_n \in \Bb$; since $[f_n] \leq [f_m]$ holds $\E$-a.e.\ for $n \leq m$ and a countable union of $\E$-null sets is $\E$-null, we may modify representatives on a single $\E$-null set to arrange $f_n(x) \leq f_m(x)$ for all $x \in \Sigma$ and $n \leq m$. Then $g(x) := \sup_n f_n(x)$ exists in $\MS$ (monotone completeness of $\mathrm{W}^*$-algebras), $g \in \Bb$, $[g] = [f]$, and pointwise-normality on $\Bb$ gives $\int_\Sigma f_n \otimes d\E \xrightarrow{\rm uw} \int_\Sigma g \otimes d\E = \int_\Sigma f \otimes d\E$.
\end{proof}

\begin{corollary}[Functoriality of the descended map]\label{cor:descent-functorial}
The descended integration map of Theorem~\ref{thm:descent} remains functorial: for any measurable $\alpha: (\Sigma,\F) \to (\Sigma',\F')$, normal channels $\Psi: \MS \to \N_\S$ and $\Phi: \MR \to \N_\R$, and any $f \in \Bb(\Sigma',\F',\MS)$,
\begin{equation}\label{eq:pushfwd-descent}
    \int_\Sigma (\Psi \circ f \circ \alpha) \otimes d(\Phi \circ \E) \;=\; (\Psi \otimes \Phi)\Big(\int_{\alpha(\Sigma)} f \otimes d(\alpha_*\E)\Big),
\end{equation}
where $[\,\cdot\,]$ denotes the appropriate $\E$-equivalence class on each side.
\end{corollary}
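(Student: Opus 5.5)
The plan is to reduce Corollary~\ref{cor:descent-functorial} to the undescended identity \eqref{eq:pushfwd} of Theorem~\ref{thm:intop}. The only new work is to check that both sides are well defined on equivalence classes.

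\emph{Step 1: the right-hand side.} The descended map $\int_{\Sigma'} d(\alpha_*\E)$ is defined on $L^\infty_{\alpha_*\E}(\Sigma',\MS)$. Let $f \sim_{\alpha_*\E} f'$, with $N' \in \F'$ an $\alpha_*\E$-null set containing $\{f \neq f'\}$. By Theorem~\ref{thm:descent} we have $\int f \otimes d(\alpha_*\E) = \int f' \otimes d(\alpha_*\E)$, since $\N_{\alpha_*\E}$ lies in the kernel. Applying $\Psi \otimes \Phi$ then shows the right-hand side depends only on $[f]_{\alpha_*\E}$.

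\emph{Step 2: the left-hand side.} We check that $f \mapsto \Psi \circ f \circ \alpha$ maps $\alpha_*\E$-classes to $(\Phi\circ\E)$-classes.
\begin{itemize}
\item Measurability and boundedness. $\Psi \circ f \circ \alpha$ is ultraweakly measurable because $\rho\circ\Psi\circ f = f_{\Psi_*\rho}$ and $\alpha$ is measurable. It is bounded by $\|f\|_\infty$ because a unital CP map is contractive. So it lies in $B_b(\Sigma,\F,\N_\S)$.
\item Transfer of null sets. If $f = f'$ off $N'$, then $\Psi\circ f\circ\alpha = \Psi\circ f'\circ\alpha$ off $\alpha^{-1}(N')$. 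We have $\E(\alpha^{-1}(N')) = (\alpha_*\E)(N') = 0$, hence $(\Phi\circ\E)(\alpha^{-1}(N')) = \Phi(0) = 0$. Thus $\alpha^{-1}(N')$ is $\Phi\circ\E$-null.
\end{itemize}
This also shows that $\N_{\Phi\circ\E}$ may be strictly larger than the pullback of $\N_{\alpha_*\E}$, since $\Phi$ need not be faithful. This causes no problem: we only need the forward inclusion of null sets, not equality.

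\emph{Step 3: conclude.} Having checked well-definedness on both sides, the identity of elements in $\N_\S \bar\otimes \N_\R$ follows from \eqref{eq:pushfwd} applied to any bounded representative $f \in \Bb(\Sigma',\F',\MS)$; such a representative exists by Lemma~\ref{lem:canonical}. In diagram form, the square formed by the quotient maps $q_{\alpha_*\E}$, $q_{\Phi\circ\E}$, the pullback $f \mapsto \Psi\circ f\circ\alpha$, and $\Psi\otimes\Phi$ commutes at the level of $\Bb$. Since the quotient maps are surjective, it descends.

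\emph{Main obstacle.} The only real subtlety is the direction of the null-set inclusion in Step 2, together with keeping track of which POVM each equivalence class refers to. I would state this explicitly: pullback along $\alpha$ composed with $\Phi$ sends $\alpha_*\E$-null sets to $\Phi\circ\E$-null sets. The separability hypotheses needed for the descended map on both sides are inherited from the standing assumptions of Theorem~\ref{thm:descent}.
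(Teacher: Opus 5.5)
Your proposal is correct and follows essentially the same route as the paper: you show that $\alpha_*\E$-null sets pull back to $\E$-null, hence $(\Phi\circ\E)$-null, sets so that both sides are well defined on equivalence classes, and then you descend \eqref{eq:pushfwd} through the surjective quotient maps. One minor point: separability of $(\N_\S)_*$ and $(\N_\R)_*$ is not literally inherited from the hypotheses on $\MS,\MR$, as your last sentence claims, but the identity only needs well-definedness on classes, and the paper is equally silent on this.
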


\begin{proof}
The pullback $f \mapsto f \circ \alpha$ along $\alpha$ pulls $(\alpha_*\E)$-null sets in $\Sigma'$ back to $\E$-null sets in $\Sigma$ (since $\alpha_*\E(N) = \E(\alpha^{-1}(N))$), and any $\E$-null set is a fortiori $(\Phi \circ \E)$-null (since $\Phi$ is normal positive, sending zero to zero); hence the $(\alpha,\Phi)$-pullback descends to a well-defined map $L^\infty_{\alpha_*\E}(\Sigma',\MS) \to L^\infty_{\Phi \circ \E}(\Sigma,\N_\S)$. Since post-composition with $\Psi$ preserves the $\E$-equivalence relation, both sides of \eqref{eq:pushfwd-descent} are well-defined on the relevant $\Linf$-spaces, and \eqref{eq:pushfwd-descent} follows by descent of \eqref{eq:pushfwd} from $\Bb$ to $\Linf$ via the quotient $*$-homomorphisms.
\end{proof}

\begin{corollary}[Predual of the descended integration map]\label{cor:predual}
Assume $\MSp$ and $\MRp$ separable. Let $\mu_\E$ be a dominating probability measure on $(\Sigma,\F)$ and identify $\Linf \cong \MS \, \bar\otimes \, L^\infty_\E(\Sigma)$ with predual $\Linf_* \cong \MSp \,\hat\otimes_\pi\, L^1_\E(\Sigma)$ (Thm.~\ref{thm:Linfstructure}). For each $\omega \in \MRp$, let $h_\omega := d\E_\omega/d\mu_\E \in L^1_\E(\Sigma)$ denote the Radon--Nikodym density of $\E_\omega = \omega \circ \E$ with respect to $\mu_\E$. Then the predual of the descended integration map $\int_\Sigma d\E: \Linf \to \MSR$ is the unique bounded linear map
\[
    \Big(\int_\Sigma d\E\Big)_*: \MSp \,\hat\otimes_\pi\, \MRp \;\longrightarrow\; \MSp \,\hat\otimes_\pi\, L^1_\E(\Sigma)
\]
satisfying
\begin{equation}\label{eq:predual}
    \Big(\int_\Sigma d\E\Big)_*\!(\rho \otimes \omega) \;=\; \rho \otimes h_\omega, \qquad \rho \in \S(\MS), \;\omega \in \S(\MR).
\end{equation}
\end{corollary}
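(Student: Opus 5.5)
The approach is to verify the duality directly on elementary tensors and then extend by density, using the facts that the descended map is normal (Theorem~\ref{thm:descent}) and that $\MSR$ has predual $\MSRp$. First, I would justify the domain. By Lemma~\ref{lem:proddense} and the standard identification of $\MSRp$ with a completion of $\MSp \otimes_{\rm alg} \MRp$, the span of the product states $\rho \otimes \omega$ is norm-dense in $\MSRp$. So a bounded linear map out of $\MSRp$ is determined by its values on $\rho\otimes\omega$ with $\rho \in \S(\MS)$, $\omega \in \S(\MR)$; this is the uniqueness claim.

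The main step is the computation on product states. Since $\E_\omega \ll \mu_\E$ by Theorem~\ref{thm:Linfstructure}(i), the density $h_\omega = d\E_\omega/d\mu_\E$ exists in $L^1_\E(\Sigma)$, and $\|h_\omega\|_1 = \E_\omega(\Sigma) = 1$. For $[f] \in \Linf$ with a bounded representative $f \in \Bb$ (Lemma~\ref{lem:canonical}), the characterizing formula \eqref{eq:characformula} and the pairing of Theorem~\ref{thm:Linfstructure}(iii) give
\[
(\rho\otimes\omega)\Big(\int_\Sigma f \otimes d\E\Big) = \int_\Sigma f_\rho\, d\E_\omega = \int_\Sigma \rho(f(x))\,h_\omega(x)\,d\mu_\E(x) = \big\langle [f],\, \rho \otimes h_\omega\big\rangle .
\]
This is independent of the representative because $\N_\E = \N_{\mu_\E}$. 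Hence the predual map sends $\rho\otimes\omega$ to $\rho\otimes h_\omega$.

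It remains to check existence and boundedness. Because the descended map is normal, its Banach adjoint restricts to a bounded map $(\int_\Sigma d\E)_*: \MSRp \to \Linf_*$ with norm at most $1$, since the descended map is contractive. Combined with the previous paragraph, this map satisfies \eqref{eq:predual}. Extending from states to arbitrary $\rho$, $\omega$ by bilinearity is immediate, since $\omega \mapsto h_\omega$ is linear and norm-contractive from $\MRp$ to $L^1_\E(\Sigma)$. The one subtle point is that $\MSRp$ is in general not equal to $\MSp \hat\otimes_\pi \MRp$; it is a quotient of it, or a completion in the operator-space projective norm. I would handle this by precomposing with the canonical contraction $\MSp \hat\otimes_\pi \MRp \to \MSRp$, which is well-defined because the predual map is bounded on $\MSRp$. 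The resulting map is then the unique bounded map on the stated domain satisfying \eqref{eq:predual}, since elementary tensors are dense there as well.
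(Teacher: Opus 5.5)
Your proposal is correct and follows essentially the same route as the paper. Normality of the descended map (Theorem~\ref{thm:descent}) supplies the predual, density of product functionals gives uniqueness, and the characterizing formula combined with $d\E_\omega = h_\omega\,d\mu_\E$ and the pairing of Theorem~\ref{thm:Linfstructure}(iii) identifies its value on $\rho\otimes\omega$.

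Your precomposition with the canonical contraction $\MSp\hat\otimes_\pi\MRp\to\MSRp$ is more careful than the paper, which simply identifies $\MSRp$ with $\MSp\hat\otimes_\pi\MRp$; that identification is false in general. Your aside that $\MSRp$ is a quotient of the projective tensor product is also inaccurate in general: for $\MS=\MR=B(\hi)$ with $\hi$ infinite-dimensional, the contraction has dense but non-closed range. This is harmless, since your argument only uses the contraction's boundedness and dense range.
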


\begin{proof}
Normality of the integration map (Thm.~\ref{thm:descent}) gives a unique bounded predual $(\int_\Sigma d\E)_*: (\MSR)_* \to \Linf_*$ characterized by $\sigma\big(\int_\Sigma f \otimes d\E\big) = (\int_\Sigma d\E)_*(\sigma)([f])$ for all $\sigma \in (\MSR)_*$ and $[f] \in \Linf$. Since the algebraic tensor product $\MSp \odot \MRp$ is norm-dense in the projective tensor product $(\MSR)_* = \MSp \,\hat\otimes_\pi\, \MRp$, the predual is determined by its values on product states. For $\rho \in \S(\MS)$, $\omega \in \S(\MR)$, and $[f] \in \Linf$, the defining identity \eqref{eq:masterchar} and the Radon--Nikodym identity $d\E_\omega = h_\omega \, d\mu_\E$ give
\[
    (\rho \otimes \omega)\Big(\int_\Sigma f \otimes d\E\Big) = \int_\Sigma \rho(f(x)) \,d\E_\omega(x) = \int_\Sigma  \rho(f(x)) \, h_\omega(x)\, d\mu_\E(x) = \langle \rho \otimes h_\omega, [f]\rangle,
\]
where the last pairing is that of $L^1_\E(\Sigma) \,\hat\otimes_\pi\, \MSp$ with $\MS \, \bar\otimes \, L^\infty_\E(\Sigma) \cong \Linf$. This proves \eqref{eq:predual}.
\end{proof}

\section{Integration and monoidal structure}\label{sec:monoidal}

Theorem \ref{thm:descent} asserts that, under separability of preduals, the descended integration map is a morphism in the category of $\mathrm{W}^*$-algebras and normal CP subunital maps ($\mathrm{W}^*$-morphisms). In this section we  show that normalized POVMs $\E: (\Sigma,\F) \to \Eff(\MR)$ are in bijection with faithful normal unital CP maps $\Phi_\E: L^\infty_\E(\Sigma) \to \MR$ and identify the descended integration maps $L^\infty_\E(\Sigma,\MS) \to \MSR$ with the spatial tensor products $\id_{\MS} \otimes \Phi_\E$.

\subsection{POVMs as $\mathrm{W}^*$-morphisms}
 
For $\MR$ with separable predual and $\MS = \Cn$ the space $L^\infty_\E(\Sigma, \MS)$ specializes to the classical Lebesgue space $L^\infty_\E(\Sigma) $ (Rem.~\ref{rem:LinfE-classical}). It is a commutative $\mathrm{W}^*$-algebra under the $\E$-essential supremum norm, naturally associated with the POVM $\E$. The following proposition identifies normalized POVMs on $(\Sigma,\F)$ sharing the $\sigma$-ideal $\N_\E$ with faithful normal unital CP maps out of this commutative domain.

\begin{proposition}[POVM--morphism bijection]\label{prop:povm-bijection}
Assume $\MSp$ and $\MRp$ separable, fix a measurable space $(\Sigma,\F)$ and a $\sigma$-ideal of subsets $\N \subset \F$. There is a bijection between
\begin{itemize}
	\item normalized POVMs $\E:(\Sigma,\F) \to \Eff(\MR)$ such that $\N_\E=\N$ and
	\item faithful normal unital positive linear maps $\Phi: L^\infty_{\N}(\Sigma) \to \MR$,
\end{itemize}
where $L^\infty_{\N}(\Sigma) \cong B_b(\Sigma)/\!\sim_\N$ is the space of $\N$-equivalence classes of essentially bounded scalar functions.\footnote{The statement and proof of this proposition were developed during the Stromboli Summer Intensive 2026 workshop organized by Andrea di Biaggio and Guilherme Franzmann and endorsed by the Basic Research Community for Physics (BRCP).}
\end{proposition}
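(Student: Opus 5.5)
The plan is to construct the two assignments explicitly and check that they are mutually inverse. In the forward direction, $\E \mapsto \Phi_\E$, I will specialize the descended integration map to $\MS = \Cn$. For a POVM $\E$ with $\N_\E = \N$, Theorem~\ref{thm:descent} (with $\MS = \Cn$, so $\Cn\bar\otimes\MR \cong \MR$) gives a faithful normal unital positive (indeed CP) map
\[
\Phi_\E := \int_\Sigma d\E : L^\infty_\E(\Sigma) \to \MR, \qquad \Phi_\E([h]) = \int_\Sigma h\, d\E .
\]
Since $\N_\E = \N$, the domain is exactly $L^\infty_\N(\Sigma) = B_b(\Sigma,\F)/\!\sim_\N$, using Lemma~\ref{lem:canonical} for bounded representatives. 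No further work is needed here beyond noting $\Phi_\E([\chi_X]) = \E(X)$, which is \eqref{eq:intsimple}.

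In the reverse direction, $\Phi \mapsto \E_\Phi$, I set $\E_\Phi(X) := \Phi([\chi_X])$. The POVM axioms are checked one at a time:
\begin{itemize}
\item Since $0 \le [\chi_X] \le 1$, positivity and unitality give $\E_\Phi(X) \in \Eff(\MR)$, together with $\E_\Phi(\emptyset)=0$ and $\E_\Phi(\Sigma)=\id$.
\item For $\sigma$-additivity, take disjoint $X_n$. The partial sums $[\chi_{\cup_{n\le N}X_n}]$ increase to $[\chi_{\cup X_n}]$ in $L^\infty_\N(\Sigma)$ (the supremum in the $\mathrm{W}^*$-algebra $L^\infty_\mu$ for a dominating measure). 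Normality of $\Phi$ then gives ultraweak convergence of $\sum_n \E_\Phi(X_n)$ to $\E_\Phi(\cup_n X_n)$.
\item For the null ideal, $\E_\Phi(X)=0$ iff $\Phi([\chi_X])=0$. Faithfulness (with $[\chi_X]\ge0$) turns this into $[\chi_X]=0$, i.e.\ $X\in\N$. Hence $\N_{\E_\Phi}=\N$.
\end{itemize}

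To show the two maps are mutually inverse, first note that $\E_{\Phi_\E}(X) = \Phi_\E([\chi_X]) = \E(X)$, so $\E_{\Phi_\E} = \E$. For the other composition, $\Phi_{\E_\Phi}$ and $\Phi$ agree on classes of simple functions by linearity. Given $[h]$, pick simple $h_n \to h$ uniformly; this is possible for bounded scalar functions. Both maps are positive and unital, hence contractive (bounded by $\|\cdot\|_\E$), so they agree on $[h]$. Uniform approximation avoids any appeal to normality in this step.

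The main obstacle is the bookkeeping around $L^\infty_\N(\Sigma)$. It must be a $\mathrm{W}^*$-algebra, so that "normal" makes sense. One route is Theorem~\ref{thm:Linfstructure}(i) applied to any POVM with null ideal $\N$. Intrinsically, one can take a faithful normal state $\omega$ on $\MR$ (available by separability) and the measure $\omega\circ\E_\Phi$; this measure has exactly $\N$ as its null sets, by faithfulness of $\Phi$ and of $\omega$. It must also be checked that the $\sigma$-additivity argument uses the order-theoretic supremum in $L^\infty_\N$ and not a pointwise one. The remaining subtlety is to confirm that the "$\E$-essential sup" norm of Lemma~\ref{lem:canonical} coincides with the quotient norm of $L^\infty_\N$ when $\N_\E=\N$. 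This is immediate from the definitions, since both are determined by the same null sets.
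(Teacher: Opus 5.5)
Your proof is correct, but your forward direction takes a different route from the paper's. The reverse construction $\E_\Phi(X):=\Phi([\chi_X])$, the $\sigma$-additivity argument and the null-ideal check coincide with the paper's.

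The paper does not invoke Theorem~\ref{thm:descent} for $\E\mapsto\Phi_\E$. Instead it:
\begin{itemize}
\item defines $\Phi_\E$ on classes of simple functions;
\item bounds $\|\Phi_\E([f])\|\le 2\|f\|_\E$ by testing real and imaginary parts against normal states, then extends by norm density;
\item proves normality via complete additivity on orthogonal projections (Takesaki's criterion, reduced to countable families by $\sigma$-finiteness);
\item proves faithfulness from the estimate $C\,\E(\{f\ge C\})\le\Phi_\E([f])$.
\end{itemize}
This is deliberate: the footnote to the corollary right after the proposition mentions your route and sets it aside. The direct construction keeps the proposition independent of the operator-valued machinery (Lemma~\ref{lem:semiip}, Theorem~\ref{thm:intop}, and the Stinespring argument behind Corollary~\ref{cor:CP}), all of which feed into Theorem~\ref{thm:descent}. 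That independence is what lets the corollary after Theorem~\ref{thm:int-categorical} claim that existence of the descended integration map follows from bifunctoriality of $\bar\otimes$ applied to $\Phi_\E$. With your definition $\Phi_\E:=\int_\Sigma d\E$ (taking $\MS=\Cn$), that claim would be circular. In exchange, your route is much shorter and gets normality, faithfulness and unitality of $\Phi_\E$ directly from Theorem~\ref{thm:descent}.

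For $\Phi_{\E_\Phi}=\Phi$, you use uniform approximation by simple functions plus automatic contractivity of positive unital maps. The paper uses normality and ultraweak density. Yours is more elementary and equally valid, and it matches the paper's own use of norm density in the forward direction.

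On the bookkeeping point you raise, which the paper passes over: your \emph{intrinsic} argument is circular as written. You derive $\sigma$-additivity of $\E_\Phi$ from normality of $\Phi$, which already presupposes a $\mathrm{W}^*$-structure on $L^\infty_\N(\Sigma)$. The argument becomes sound if normality is read as preservation of suprema of bounded increasing sequences. That notion makes sense in $B_b(\Sigma)/\!\sim_\N$ for any $\sigma$-ideal $\N$, because the class of the pointwise supremum is the order supremum. With this reading:
\begin{itemize}
\item $\E_\Phi$ is a POVM, and for a faithful normal state $\omega$ on $\MR$, $\omega\circ\E_\Phi$ is a probability measure whose null ideal is exactly $\N$;
\item hence $L^\infty_\N(\Sigma)$ is a $\sigma$-finite $\mathrm{W}^*$-algebra;
\item on such an algebra, sequential normality of positive maps coincides with normality.
\end{itemize}
It also follows that when $\N$ is not the null ideal of any probability measure, both sides of the bijection are empty.
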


\begin{proof}
Take a POVM $\E$ as above so that we have $L^\infty_\E(\Sigma)=L^\infty_{\N_\E}(\Sigma)=L^\infty_{\N}(\Sigma)$. For each normal state $\omega \in (\MR)_*$, the map $X \mapsto \E_\omega(X) := \omega(\E(X))$ is a probability measure on $(\Sigma,\F)$, and every $\E$-null set is $\E_\omega$-null. On simple functions $f = \sum_j \lambda_j \chi_{X_j}$ (with $\lambda_j \in \Cn$ and $X_j$ disjoint) the assignment
\[
    \Phi_\E([f]) := \sum_j \lambda_j \E(X_j)
\]
is well defined on $\N$-equivalence classes and satisfies $\omega(\Phi_\E([f])) = \int_\Sigma f\, d\E_\omega$ for every normal state $\omega$. Writing $f = \mathrm{Re}\,f + i\,\mathrm{Im}\,f$, the operators $\Phi_\E([\mathrm{Re}\,f])$ and $\Phi_\E([\mathrm{Im}\,f])$ are self-adjoint, and for each of them the norm in $\MR$ equals the supremum of $|\omega(\,\cdot\,)|$ over normal states. Hence, using $|\omega(\Phi_\E([g]))| \leq \|g\|_{L^\infty(\E_\omega)} \leq \|g\|_\E$ for any real-valued simple $g$, we have
\[
    \|\Phi_\E([f])\| \leq \|\Phi_\E([\mathrm{Re}\,f])\| + \|\Phi_\E([\mathrm{Im}\,f])\| \leq \|\mathrm{Re}\,f\|_\E + \|\mathrm{Im}\,f\|_\E \leq 2\,\|f\|_\E.
\]
This bound suffices for the simple-function assignment $[f] \mapsto \Phi_\E([f])$ to be norm-continuous, and since simple functions are norm dense in $L^\infty_{\N}(\Sigma)$, this extends uniquely to a linear map $\Phi_\E: L^\infty_{\N}(\Sigma) \to \MR$. Unitality of $\Phi_\E$ follows from normalization of $\E$. To establish normality of $\Phi_\E$, it suffices to show the normality of the functional $\omega \circ \Phi_E$ for any normal state $\omega \in \S(\MR)$. By Cor.III.3.11 of \cite{takesaki2001theory}, this reduces to
\begin{equation}\label{eq:normalityproj}
\omega \circ \Phi_E\Big(\sum_j p_j\Big) = \sum_j \omega \circ \Phi_E(p_j)
\end{equation}
for any mutually orthogonal projections $(p_j) \in L^\infty_E(\Sigma)$. Since $L^\infty_E(\Sigma)$ is $\sigma$-finite we may assume that $(p_j)$ is countable; they have representatives in the form
$p_j = [\chi_{U_j}]$ with $(U_j)$ disjoint and \eqref{eq:normalityproj} follows from the $\sigma$-additivity of $\E$. Indeed, we have
\[
	\omega \circ \Phi_E\Big(\sum_j p_j\Big) = 
	\omega \circ \Phi_E\Big(\sum_j \chi_{U_j}\Big) = 
	\omega \circ \E\Big(\sum_j U_j\Big) = 
	 \omega \circ \sum_j E(U_j) = 
	 \sum_j \omega \circ E(U_j) =
	 \sum_j \omega \circ \Phi_E(p_j).
\]
To show faithfulness of $\Phi_\E$, suppose $[f] \geq 0$ and $\Phi_\E([f]) = 0$; then for every $C > 0$ the bound $C\,\E(\{f \geq C\}) \leq \Phi_\E([f]) = 0$ forces $\E(\{f \geq C\}) = 0$, so $\{f \geq C\}$ is $\N$-null for every $C > 0$ and $[f] = 0$.

Conversely, take a faithful normal unital positive linear map $\Phi: L^\infty_{\N}(\Sigma) \to \MR$ and set $\E_\Phi(X) := \Phi([\chi_X])$ for any $X \in \F$. Positivity and unitality of $\Phi$ with $0 \leq [\chi_X] \leq \id$ give $\E_\Phi(X) \in \Eff(\MR)$. $\E_\Phi(\emptyset) = 0$ follows from linearity and $\E_\Phi(\Sigma) = \id_{\MR}$ from unitality of $\Phi$. For $\sigma$-additivity, let $(X_n)$ be a disjoint sequence in $\F$ and set $Y_N := \bigsqcup_{n\leq N} X_n$, $Y := \bigsqcup_n X_n$. Then $[\chi_{Y_N}] \nearrow [\chi_Y]$ as a bounded increasing sequence in $L^\infty_\E(\Sigma) $, and normality of $\Phi$ gives $\E_\Phi(Y_N) \to \E_\Phi(Y)$ ultraweakly. Since $\E_\Phi(Y_N) = \sum_{n\leq N} \E_\Phi(X_n)$ by linearity, this is $\sigma$-additivity, and hence $\E_\Phi$ is a normalized POVM. A set $X \in \F$ is $\E_\Phi$-null iff $\Phi([\chi_X]) = 0$, iff $[\chi_X] = 0$ in $L^\infty_\E(\Sigma) $ by faithfulness of $\Phi$ so that we have $\N_{\E_\Phi}=\N$.

\emph{Inverses.} For any $X \in \F$ we have
\[
    \E_{\Phi_\E}(X) = \Phi_\E([\chi_X]) = \E(X), \qquad \Phi_{\E_\Phi}([\chi_X]) = \E_\Phi(X) = \Phi([\chi_X]).
\]
The second equality extends by linearity to (classes of) simple functions and then, since both maps are normal and simple functions are ultraweakly dense, to all of $L^\infty_{\N}(\Sigma) $.
\end{proof}

\begin{corollary}
	Comparing both maps on simple functions immediately gives\footnote{We could have defined $\Phi_\E$ via Theorem~\ref{thm:descent} applied with $\MS = \Cn$. We established an independent construction for the purpose of clarifying the relation between $\mathrm{W}^*$-algebraic integration theory and monoidal structure on the category of $\mathrm{W}^*$-algebras and morphisms. The presented construction can be seen as an extension of integration theory of scalar functions with POVMs developed in~\cite{busch_quantum_2016}.}
\[
    \Phi_\E \;\cong\; \int_\Sigma d\E: L^\infty_{\N}(\Sigma) \cong L^\infty_\E(\Sigma,\Cn) \longrightarrow \Cn \,\bar\otimes\, \MR \cong \MR,
\]
which allows to conclude (Thm.~\ref{thm:descent}) that $\E$ is localizable/sharp iff the corresponding map  $\Phi_\E$ is multiplicative/injective.
\end{corollary}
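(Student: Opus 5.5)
The plan is to first establish the identification $\Phi_\E \cong \int_\Sigma d\E$ and then transport the properties of the descended integration map of Theorem~\ref{thm:descent} across it, reading the two equivalences off in the pairing the statement gives: localizable $\Leftrightarrow$ multiplicative and sharp $\Leftrightarrow$ injective.

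\emph{Step 1 (identification).} Take $\MS = \Cn$. Then $\Cn \bar\otimes \MR \cong \MR$ via $1 \otimes b \mapsto b$, and $L^\infty_\E(\Sigma,\Cn) = L^\infty_\N(\Sigma)$ by Remark~\ref{rem:LinfE-classical}. For a simple $f = \sum_j \lambda_j \chi_{X_j}$, formula \eqref{eq:intsimple} gives $\int_\Sigma f \otimes d\E = \sum_j \lambda_j \otimes \E(X_j)$, which corresponds to $\sum_j \lambda_j \E(X_j) = \Phi_\E([f])$. Both maps are normal: $\Phi_\E$ by Proposition~\ref{prop:povm-bijection}, and the integral by Theorem~\ref{thm:descent}. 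Both are contractive on simple classes, and simple classes are norm-dense in $L^\infty_\N(\Sigma)$. Hence the two maps agree everywhere. I will check this through the characterizing formula, $\omega(\Phi_\E([f])) = \int f\,d\E_\omega = (1\otimes\omega)(\int f\otimes d\E)$, together with Lemma~\ref{lem:extfromstates}.

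\emph{Step 2 (sharp $\Leftrightarrow$ injective).} For the direction ($\Rightarrow$), Theorem~\ref{thm:descent} says that for sharp $\E$ the descended map is an isometric $*$-homomorphism, so it is injective. For ($\Leftarrow$), suppose $\Phi_\E$ is injective, but this alone does not force projections. I would first try a Kadison--Schwarz defect argument. Set $D := \Phi_\E([\chi_X]) - \Phi_\E([\chi_X])^2 \geq 0$. The aim is to show that $D \neq 0$ produces a nonzero class in the kernel; the candidate is a combination of $[\chi_X]$ and $[\chi_{X^c}]$ built from the Schwarz defect. I expect this to fail in general, since faithful normal unital positive maps on $L^\infty$ are automatically injective on self-adjoint elements and unsharp POVMs exist (e.g.\ $\E(X) = \tfrac12\chi_X + \tfrac12\mu(X)$). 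I will therefore check early whether this direction needs an extra hypothesis, such as requiring injectivity of $\int d\E$ on $\Bb$ rather than on the quotient.

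\emph{Step 3 (localizable $\Leftrightarrow$ multiplicative), the main obstacle.} For ($\Leftarrow$), assume $\Phi_\E$ is multiplicative. Then $\E(X)^2 = \Phi_\E([\chi_X]^2) = \E(X)$, so $\E$ is a PVM. From there I would build localizing states $\omega_n^x$ supported under the projections $\E(X)$ for shrinking neighbourhoods of $x$, using a countable generating family of $\F$. The difficulty is that sharp POVMs need not be localizable at every point, as the footnote on localizability already notes; this is exactly where I expect an extra standing assumption, such as countably generated $\F$ with no nonempty $\E$-null sets. For ($\Rightarrow$), localizability gives $\N_\E = \{0\}$ (Remark~\ref{rem:LinfE-loc}) and an isometric $\Phi_\E$. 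Multiplicativity would then have to come from the localizing sequences, by evaluating $\omega_n^x(\Phi(fg)) \to f(x)g(x)$. Matching this with $\omega_n^x(\Phi(f)\Phi(g))$ requires the $\omega_n^x$ to be asymptotically multiplicative on the range of $\Phi$, and I do not see why this should hold. This is the step I would scrutinize first: if a coherent-state-type counterexample appears, the honest statement to prove is the one Theorem~\ref{thm:descent} supports, namely localizable $\Rightarrow$ isometric and sharp $\Leftrightarrow$ multiplicative.
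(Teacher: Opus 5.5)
Your Step 1 is the paper's own argument. With $\MS=\Cn$, both maps send $\sum_j\lambda_j[\chi_{X_j}]$ to $\sum_j\lambda_j\E(X_j)$; both are bounded, and simple classes are norm-dense, so $\Phi_\E$ is the descended integration map.

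Your reluctance about Steps 2--3 is justified. The paper's appeal to Theorem~\ref{thm:descent} supplies only forward implications: sharp $\Rightarrow$ isometric $*$-homomorphism, and localizable $\Rightarrow$ isometric. The converses you doubted are false, so the equivalences cannot be proved as written. Small examples settle the cases you left open.

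\emph{Injective implies neither sharp nor localizable.} Take $\Sigma=\{1,2\}$, $\MR=\Cn^2$, $\E(\{1\})=\mathrm{diag}(\tfrac34,\tfrac14)$, $\E(\{2\})=\mathrm{diag}(\tfrac14,\tfrac34)$. Then $\Phi_\E$ is invertible, but $\E(\{1\})$ is not a projection. Also $\|\E(\{1\})\|=\tfrac34$, so no sequence of states localizes at $1$.

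\emph{Localizable does not imply multiplicative.} Take $\Sigma=\{1,2\}$, $\MR=\Cn^3$, $\E(\{1\})=\mathrm{diag}(1,0,\tfrac12)$, $\E(\{2\})=\mathrm{diag}(0,1,\tfrac12)$. This POVM is localized by the first two coordinate states but is not sharp. Coherent-state POVMs are the wrong place to look: they lack the norm-1 property, hence are not localizable.

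\emph{Multiplicative does not imply localizable.} Take $\MR=L^\infty[0,1]$ and $\E(X)=\chi_X$. Then $\Phi_\E=\id$, but singletons are $\E$-null, which Remark~\ref{rem:LinfE-loc} forbids. Since $\Phi_\E$ only sees $\N$-classes, no property of $\Phi_\E$ can detect this pointwise condition.

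Your suggested extra hypothesis does repair that last direction. If $\F$ is countably generated with no nonempty $\E$-null sets, the atom $A(x)\ni x$ is measurable and $\E(A(x))$ is a nonzero projection. Any normal state supported under it gives $\E_\omega=\delta_x$ on $\F$.

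What survives is exactly your ``honest statement''. Sharp $\Leftrightarrow$ multiplicative holds: Theorem~\ref{thm:descent} gives one direction, and your computation $\E(X)^2=\Phi_\E([\chi_X]^2)=\E(X)$ gives the other. Localizable $\Rightarrow$ isometric $\Rightarrow$ injective also holds. The converse that does hold on the $\Phi_\E$ side is isometric $\Leftrightarrow$ norm-1 property ($\|\E(X)\|=1$ whenever $\E(X)\neq 0$).

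Note also that the introduction and Theorem~\ref{thm:master}(vi) pair the properties opposite to the corollary's literal slash order: sharp with multiplicative, localizable with injective. Under that intended reading the first equivalence is fully correct. Only ``injective $\Rightarrow$ localizable'' fails, by the first example above.

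One correction to Step 2: the claim that faithful normal unital positive maps on $L^\infty$ are automatically injective on self-adjoint elements is false. Take $\Sigma=\{1,2\}$, $\MR=\Cn$, $\E(\{1\})=\E(\{2\})=\tfrac12$. The map $\Phi_\E(f)=\tfrac12(f(1)+f(2))$ is faithful, but it kills the real function $\chi_{\{1\}}-\chi_{\{2\}}$. This is also why faithfulness in Theorem~\ref{thm:descent} yields no injectivity. Your conclusion there stands only because your particular example, $\Phi_\E(f)=\tfrac12 f+\tfrac12\mu(f)\id$, is injective by direct check: if it vanishes, $f$ is a constant $c$ with $c=-c$.
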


\begin{remark}[Related work]\label{POVM morphism int}
The identification of POVMs with quantum-classical channels appears in various restricted settings in the literature, perhaps the closest result due to Roumen \cite{roumen_categorical_2014} who established a bijection between continuous POVMs on compact Hausdorff spaces and normal positive unital maps from commutative to type I von Neumann algebras. Kuramochi \cite{kuramochi_quantum_2018} treats the corresponding quantum-classical channels for POVMs with general outcome von Neumann algebras in the context of channel compatibility, working downstream of the present-style identification. Proposition~\ref{prop:povm-bijection} provides the bijection in the $\mathrm{W}^*$-algebraic setting of this paper.
\end{remark}

\begin{remark}[Operational interpretation]\label{rem:povm-morphism-physics}
Under the bijection, the morphism $\Phi_\E: L^\infty_{\N}(\Sigma) \to \MR$ is the (Heisenberg-picture) classical-quantum channel implementing the measurement $\E$: a bounded measurable classical observable $f \in L^\infty_\E(\Sigma) $ is mapped to its quantum counterpart $\int f\,d\E \in \MR$. The predual map $(\Phi_\E)_*: \MRp \to L^1_{\N}(\Sigma)$ sends normal states $\omega$ to outcome probability distributions $\E_\omega = \omega \circ \E: \F/\N \to [0,1]$.
\end{remark}

\subsection{Integration as monoidal lifting}
 
The spatial $\mathrm{W}^*$-tensor product $\bar\otimes$ is bifunctorial on normal CP subunital maps ($\mathrm{W}^*$-morphisms): for any pair $\Psi: \MS \to \N_\S$ and $\Phi: \MR \to \N_\R$ of such maps, there exists a unique $\mathrm{W}^*$-morphism
\[
    \Psi \,\bar\otimes\, \Phi: \M_\S \,\bar\otimes\, \M_\R  \longrightarrow \N_\S \,\bar\otimes\, \N_\R
\]
satisfying $(\Psi \,\bar\otimes\, \Phi)(a \otimes b) = \Psi(a) \otimes \Phi(b)$ on elementary tensors, and this assignment endows the category of $\mathrm{W}^*$-algebras and morphisms with a symmetric monoidal structure with tensor unit $\Cn$ \cite[Sec.~4.2]{westerbaan_thesis_2018}. Applied to $\Psi = \id_{\MS}$ and $\Phi = \Phi_\E$, this yields a normal unital CP map
\[
    \id_{\MS} \,\bar\otimes\, \Phi_\E\;:\; \MS \,\bar\otimes\, L^\infty_\E(\Sigma) \longrightarrow \MS \,\bar\otimes\, \MR.
\]
Using the identification $\Linf \cong \MS \,\bar\otimes\, L^\infty_\E(\Sigma) $ of Theorem~\ref{thm:Linfstructure}(ii) (up to symmetry of $\bar\otimes$), this is a normal unital CP map $\Linf \to \MSR$. As shown below, this is exactly the descended integration map.
 
\begin{theorem}[Triviality of the integration map]\label{thm:int-categorical}
Assume $\MSp$ and $\MRp$ separable. Under the identification $\Linf \cong \MS \,\bar\otimes\, L^\infty_\E(\Sigma) $ of Theorem~\ref{thm:Linfstructure}(ii) and the bijection of Proposition~\ref{prop:povm-bijection} we have
\[
    \int_\Sigma d\E \;\cong\; \id_{\MS} \,\bar\otimes\, \Phi_\E\;:\; \MS \,\bar\otimes\, L^\infty_\E(\Sigma) \;\longrightarrow\; \MS \,\bar\otimes\, \MR.
\]
\end{theorem}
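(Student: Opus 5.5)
Both maps are normal, linear and bounded $\Linf \to \MSR$: the descended integration map is normal by Theorem~\ref{thm:descent}, and $\id_{\MS}\bar\otimes\Phi_\E$ is normal by bifunctoriality of $\bar\otimes$. So the plan is to show that the two maps agree on an ultraweakly dense subspace of $\MS\bar\otimes L^\infty_\E(\Sigma)$; normality then extends the equality to the whole algebra. The natural dense subspace is the algebraic tensor product $\MS\otimes_{\rm alg} L^\infty_\E(\Sigma)$, or more concretely the span of elementary tensors $a\otimes[\chi_X]$ with $a\in\MS$ and $X\in\F$. Under the identification of Theorem~\ref{thm:Linfstructure}(ii), $a\otimes[\chi_X]$ corresponds to the class of the simple function $a\chi_X$.

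\emph{Step 1: agreement on elementary tensors.} For $a\in\MS$ and $X\in\F$, formula \eqref{eq:intsimple} gives $\int_\Sigma a\chi_X\otimes d\E = a\otimes\E(X)$. On the other side, $(\id_{\MS}\bar\otimes\Phi_\E)(a\otimes[\chi_X]) = a\otimes\Phi_\E([\chi_X]) = a\otimes\E(X)$ by the construction in Proposition~\ref{prop:povm-bijection}. By linearity the two maps coincide on all classes of $\MS$-valued simple functions of the form $\sum_j a_j\chi_{X_j}$.

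\emph{Step 2: density.} Simple functions with values in $\MS$ and scalar simple functions in $L^\infty_\E(\Sigma)$ span a subspace whose image in $\MS\bar\otimes L^\infty_\E(\Sigma)$ is ultraweakly dense. Scalar simple functions are norm dense in $L^\infty_\E(\Sigma)$, so $\MS\otimes_{\rm alg}\{\text{simple}\}$ is ultraweakly dense in the spatial tensor product. An alternative, staying inside $\Linf$, is Lemma~\ref{lem:simpleapprox}: every $f\in\Bb$ is a bounded pointwise ultraweak limit of simple $f_n$. By pointwise-normality (Theorem~\ref{thm:integrationmaps2}), $\int f_n\otimes d\E\to\int f\otimes d\E$ ultraweakly. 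We would then also need $[f_n]\to[f]$ ultraweakly in $\MS\bar\otimes L^\infty_\E(\Sigma)$. Pairing with $\rho\otimes h$ via the predual description of Theorem~\ref{thm:Linfstructure}(iii) gives $\int\rho(f_n)h\,d\mu_\E\to\int\rho(f)h\,d\mu_\E$ by dominated convergence, and boundedness together with the density of such product functionals in the projective tensor product upgrades this to ultraweak convergence. Normality of $\id_{\MS}\bar\otimes\Phi_\E$ then transports the equality of Step 1 to every $[f]$.

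\emph{Anticipated obstacle.} The delicate point is that the identification $\Linf\cong\MS\bar\otimes L^\infty_\E(\Sigma)$ really sends $[a\chi_X]$ to $a\otimes[\chi_X]$, and is compatible with the predual pairing used above. This should follow from the pairing formula in Theorem~\ref{thm:Linfstructure}(iii): both elements pair with $\rho\otimes h$ to give $\rho(a)\int_X h\,d\mu_\E$, and functionals of this form separate points. A cleaner route that avoids density arguments altogether is to compare preduals directly. By Corollary~\ref{cor:predual}, $(\int_\Sigma d\E)_*(\rho\otimes\omega) = \rho\otimes h_\omega$. On the other side, $(\id\bar\otimes\Phi_\E)_* = \id\otimes(\Phi_\E)_*$ with $(\Phi_\E)_*\omega = h_\omega$, as in Remark~\ref{rem:povm-morphism-physics}. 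The two predual maps agree on product functionals, which are dense in $\MSp\hat\otimes_\pi\MRp$ (Lemma~\ref{lem:proddense}); since both are bounded they are equal, and dualizing gives the claim. I would present this predual argument as the main proof and keep Steps 1--2 as the concrete check on simple functions.
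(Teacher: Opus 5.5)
Your proposal is correct. Steps~1--2 are essentially the paper's proof: the paper also checks that the two normal maps agree on elementary tensors, then concludes by ultraweak density of $\MS\otimes_{\rm alg}L^\infty_\E(\Sigma)$ and normality.

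The only difference there is how agreement on elementary tensors is obtained. The paper takes arbitrary $a\otimes[f]$, writes the corresponding function as $\Psi_a\circ f$ with $\Psi_a:\Cn\ni\lambda\mapsto\lambda a\in\MS$, and appeals to the functoriality clause of Theorem~\ref{thm:intop}. You restrict to $a\otimes[\chi_X]$ and use only \eqref{eq:intsimple} together with $\Phi_\E([\chi_X])=\E(X)$, at the cost of the easy norm density of simple functions in $L^\infty_\E(\Sigma)$. Your version is slightly more economical. $\Psi_a$ is not a channel unless $a=\id_{\MS}$, so functoriality as stated does not literally apply; its proof does go through for any normal linear map, but that has to be said.

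Your preferred predual argument is a genuinely different packaging. Comparing $(\int_\Sigma d\E)_*$ with $(\id_{\MS}\bar\otimes\Phi_\E)_*$ on product functionals, via Corollary~\ref{cor:predual} and $(\Phi_\E)_*\omega=h_\omega$, makes transparent that both maps are determined by the scalar Radon--Nikodym data $\omega\mapsto h_\omega$. It does not, however, avoid density arguments or the identification issue you flag. Establishing $(\rho\otimes\omega)\circ(\id_{\MS}\bar\otimes\Phi_\E)=\rho\otimes h_\omega$ requires testing on elementary tensors and extending by normality and ultraweak density of the algebraic tensor product. Corollary~\ref{cor:predual} in turn rests on the pairing of Theorem~\ref{thm:Linfstructure}(iii), i.e.\ on the identification sending the class of $x\mapsto g(x)a$ to $a\otimes[g]$. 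So the two routes use the same ingredients, just arranged on opposite sides of the duality.

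Two small corrections. First, $(\MSR)_*$ is in general not the Banach projective tensor product $\MSp\hat\otimes_\pi\MRp$. That description is valid when one factor is commutative, as in Theorem~\ref{thm:Linfstructure}(iii), but fails e.g.\ for $B(\hi)\bar\otimes B(\hi)$ with $\hi$ infinite-dimensional. So state the density directly in $(\MSR)_*$ via Lemma~\ref{lem:proddense}, which is all you actually use. Second, forming $\id_{\MS}\bar\otimes\Phi_\E$ needs $\Phi_\E$ completely positive, not merely positive as Proposition~\ref{prop:povm-bijection} produces it. This holds automatically because the domain $L^\infty_\E(\Sigma)$ is commutative, but it should be noted.
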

 
\begin{proof}
Both maps are normal unital CP maps between the same $\mathrm{W}^*$-algebras. We will compare them on elementary tensor $a \otimes [f] \in \MS \,\bar\otimes\, L^\infty_\E(\Sigma) $. The bifunctorial map acts as (Rem.~\ref{POVM morphism int}):
\[
    (\id_{\MS} \,\bar\otimes\, \Phi_\E)(a \otimes [f]) \;=\; a \otimes \Phi_\E([f]) \;=\; a \otimes \int_\Sigma f\,d\E.
\]
Under Theorem~\ref{thm:Linfstructure}(ii) the element $a \otimes [f]$ corresponds to the class of the $\MS$-valued function $x \mapsto f(x)\,a$, which can be written as $\Psi_a \circ f$ with $\Psi_a: \Cn \ni \lambda \mapsto \lambda a \in \MS$. The descended integration map sends this class to
\[
    \int_\Sigma d\E \,(a \otimes [f]) = \int_\Sigma \Psi_a \circ f \otimes d\E = (\Psi_a \otimes \id_{\MR})\,\int_\Sigma f \otimes d\E \;=\; a \otimes \int_\Sigma f\,d\E,
\]
where we have used functoriality under $\Psi_a$ (Thm.~\ref{thm:intop}). The two maps thus agree on the algebraic tensor product $\MS \odot L^\infty_\E(\Sigma) $, which is ultraweakly dense in $\MS \,\bar\otimes\, L^\infty_\E(\Sigma) $; by normality, they agree everywhere.
\end{proof}
 
\begin{corollary}
Under the bijection of Proposition~\ref{prop:povm-bijection}, the existence of the descended operator-valued integration map as a normal unital CP map is thus a direct consequence of bifunctoriality of $\bar\otimes$ in the $\mathbf{W}^*$-category \cite{westerbaan_thesis_2018} applied to the morphism $\Phi_\E: L^\infty_\E(\Sigma) \to \MR$. Functoriality under $\mathrm{W}^*$-morphisms also follows readily from this characterization:
\[
	(\Psi \otimes \Phi) \int_\Sigma f \otimes d\E = (\Psi \otimes \Phi) (\id_{\MS} \otimes \Phi_\E)[f] = \Psi \otimes \Phi \circ \Phi_\E [f] = \Psi \otimes \Phi_{\Phi \circ \E} [f] = \int_\Sigma \Psi \circ f \otimes d(\Phi \circ \E).
\]
Thus, in the light of the categorical results of \cite{westerbaan_thesis_2018}, existence, uniqueness and functorial properties of the descended integration map follow from our Proposition~\ref{prop:povm-bijection}.
\end{corollary}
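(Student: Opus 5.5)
The plan is to derive both assertions of the corollary from three earlier results: Proposition~\ref{prop:povm-bijection}, the bifunctoriality of $\bar\otimes$ on $\mathrm{W}^*$-morphisms, and the identification $\Linf \cong \MS \,\bar\otimes\, L^\infty_\E(\Sigma)$ of Theorem~\ref{thm:Linfstructure}(ii). The only fact not yet in the text is that $\Phi_\E$ is a genuine $\mathrm{W}^*$-morphism, i.e.\ completely positive, since Proposition~\ref{prop:povm-bijection} only provides positivity. I would close this gap with Stinespring's theorem: a positive map whose domain is the commutative $\mathrm{C}^*$-algebra $L^\infty_\E(\Sigma)$ is automatically completely positive.

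Bifunctoriality then produces the normal unital CP map $\id_{\MS}\,\bar\otimes\,\Phi_\E$. Composing it with the isomorphism of Theorem~\ref{thm:Linfstructure}(ii) gives a map $\Linf \to \MSR$. To see that this map is the integration map, I would check the characterizing formula \eqref{eq:characformula} on elementary tensors $a\otimes[\chi_X]$. There it reduces to $\rho(a)\,\omega(\E(X))$, which is exactly the simple-function formula \eqref{eq:intsimple}. I would then extend to all of $\Linf$ using the ultraweak density of simple classes and normality on both sides, the same argument as in the proof of Theorem~\ref{thm:int-categorical}. This reproduces existence, and uniqueness is inherited from uniqueness of the bifunctorial extension.

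For the functoriality chain, I would justify each equality separately.
\begin{enumerate}
\item The first equality is Theorem~\ref{thm:int-categorical}.
\item The second is the interchange law $(\Psi\,\bar\otimes\,\Phi)\circ(\id\,\bar\otimes\,\Phi_\E) = \Psi\,\bar\otimes\,(\Phi\circ\Phi_\E)$, which is part of bifunctoriality.
\item For the third, $\Phi\circ\Phi_\E = \Phi_{\Phi\circ\E}$, I would compare both sides on indicator classes, where both give $\Phi(\E(X))$, and extend by normality.
\item The last equality reads $\Psi\,\bar\otimes\,\Phi_{\Phi\circ\E}$ as integration of $\Psi\circ f$ against $\Phi\circ\E$. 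This requires showing that $\Psi\,\bar\otimes\,\id_{L^\infty}$ corresponds, under Theorem~\ref{thm:Linfstructure}(ii), to pointwise post-composition $[f]\mapsto[\Psi\circ f]$. On elementary tensors $a\otimes[g]$ both sides give $\Psi(a)\,g$, and I would extend by normality.
\end{enumerate}

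The main obstacle I expect is the null-ideal mismatch in the third step. The $\sigma$-ideal $\N_{\Phi\circ\E}$ can be strictly larger than $\N_\E$, so $\Phi\circ\Phi_\E$ lives on $L^\infty_\E(\Sigma)$ while $\Phi_{\Phi\circ\E}$ lives on the quotient $L^\infty_{\Phi\circ\E}(\Sigma)$. I would handle this by inserting the canonical quotient map $L^\infty_\E(\Sigma)\to L^\infty_{\Phi\circ\E}(\Sigma)$ and showing it is a normal surjective $*$-homomorphism. It is induced by the inclusion $\N_\E\subseteq\N_{\Phi\circ\E}$ noted in Corollary~\ref{cor:descent-functorial}, and its normality follows by checking suprema of increasing sequences of indicator classes. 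The equality then holds after this insertion. A secondary point is the normality of post-composition by $\Psi$ on $\Linf$ used in the last step. I would obtain it from the pointwise-normality of Theorem~\ref{thm:integrationmaps2} together with the monotone-sequence criterion already used in the proof of Theorem~\ref{thm:descent}.
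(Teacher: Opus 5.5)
Your proposal is correct and follows the paper's own argument: bifunctoriality of $\bar\otimes$ applied to $\Phi_\E$, the identification of Theorem~\ref{thm:int-categorical}, and the same chain of equalities. It is more careful than the paper's sketch in three places.
\begin{enumerate}
\item The paper tacitly treats the merely positive map of Proposition~\ref{prop:povm-bijection} as a $\mathrm{W}^*$-morphism. Your appeal to Stinespring's theorem for commutative domains supplies the missing complete positivity.
\item The paper writes $\Phi\circ\Phi_\E = \Phi_{\Phi\circ\E}$ without the normal quotient $L^\infty_\E(\Sigma)\to L^\infty_{\Phi\circ\E}(\Sigma)$. You rightly insert it, since $\Phi$ need not be faithful.
\item You make explicit that $\Psi\,\bar\otimes\,\mathrm{id}$ corresponds to pointwise post-composition by $\Psi$.
\end{enumerate}
The only thing to add is that applying Theorem~\ref{thm:int-categorical} to $\Psi\circ f$ and $\Phi\circ\E$ also requires $(\N_\S)_*$ and $(\N_\R)_*$ to be separable.
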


\section{Stinespring factorization via Naimark dilation}\label{sec:CP}

In this section, we establish a Stinespring factorization \cite{stinespring_positive_1955} of the integration map through the Naimark dilation of the POVM, from which complete positivity follows as corollary. Since the Naimark dilation theorem (see e.g.~\cite[Thm.~4.6]{paulsen_completely_2002}) applies to POVMs on Hilbert spaces, we work with the concrete POVM $\tilde{\E} := \pi_\R \circ \E$ obtained from a faithful normal representation $\pi_\R: \MR \hookrightarrow B(\hir)$; we continue to use the tilde notation to distinguish such represented objects from the abstract ones. Recall that a \emph{minimal Naimark dilation} of a POVM $\tilde{\E}: \F \to \Eff(\hi)$ (on a Hilbert space) is a triple $(\hik, V, \hat{\E})$, where $\hik$ is a (not necessarily separable) Hilbert space, $\hat{\E}: \F \to B(\hik)$ is a PVM, and $V: \hi \to \hik$ is a linear isometry such that $\tilde{\E}(X) = V^*\hat{\E}(X)V$ for all $X \in \F$, and $\overline{\rm span}\{\hat{\E}(X)V\hi : X \in \F\} = \hik$. We first establish a representation-theoretic lemma.

\begin{lemma}\label{lem:naimarkrep}
Let $\tilde{\E}: \F \to \Eff(\hi)$ be a POVM with minimal Naimark dilation $(\hik,V,\hat{\E})$. Then there exists a unique normal representation $\pi: \tilde{\E}(\F)' \to B(\hik)$ such that for each $a \in \tilde{\E}(\F)'$, $X \in \F$ and $\eta \in \hi$:
\begin{equation}\label{eq:pidef}
    \pi(a)\hat{\E}(X)V\eta = \hat{\E}(X)Va\eta.
\end{equation}
Moreover, $\pi(a)V = Va$ and $\pi(a)\hat{\E}(X) = \hat{\E}(X)\pi(a)$ for all $X \in \F$.
\end{lemma}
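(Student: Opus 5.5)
The plan is to define $\pi(a)$ on the dense subspace $\mathcal{D} := \mathrm{span}\{\hat{\E}(X)V\eta : X \in \F,\ \eta \in \hi\}$ by \eqref{eq:pidef}, and show it is well defined and bounded via a norm estimate. For a finite sum $\psi = \sum_i \hat{\E}(X_i)V\eta_i$, the PVM property $\hat{\E}(X)\hat{\E}(Y) = \hat{\E}(X\cap Y)$ together with $V^*\hat{\E}(\cdot)V = \tilde{\E}(\cdot)$ gives
\[
\Big\|\sum_i \hat{\E}(X_i)Va\eta_i\Big\|^2 = \sum_{i,j}\braket{a\eta_i}{\tilde{\E}(X_i\cap X_j)a\eta_j} = \sum_{i,j}\braket{\eta_i}{a^*a\,\tilde{\E}(X_i\cap X_j)\eta_j},
\]
using $a \in \tilde{\E}(\F)'$. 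We first refine the sets $X_i$ to a common finite measurable partition $\{Y_m\}$, which is possible by additivity of $\hat{\E}$ and $\tilde{\E}$. This rewrites $\psi = \sum_m \hat{\E}(Y_m)V\zeta_m$, so the quantity becomes $\sum_m \braket{\zeta_m}{a^*a\,\tilde{\E}(Y_m)\zeta_m}$.

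Since $\tilde{\E}(Y_m)$ commutes with $a^*a$, each term equals $\|(a^*a)^{1/2}\tilde{\E}(Y_m)^{1/2}\zeta_m\|^2 \le \|a\|^2\braket{\zeta_m}{\tilde{\E}(Y_m)\zeta_m}$. Here we use that $(a^*a)^{1/2}$ commutes with $\tilde{\E}(Y_m)^{1/2}$, by the continuous functional calculus. The right-hand side sums to $\|a\|^2\|\psi\|^2$. This one estimate gives both well-definedness (if $\psi = 0$, its image has norm $0$) and boundedness. By minimality, $\mathcal{D}$ is dense, so $\pi(a)$ extends uniquely to $\hik$ with $\|\pi(a)\| \le \|a\|$. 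Uniqueness of $\pi$ follows from the same density.

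The algebraic properties are then checked on generators:
\begin{enumerate}
\item Linearity and multiplicativity $\pi(ab) = \pi(a)\pi(b)$ are immediate from \eqref{eq:pidef}.
\item Unitality holds because $\pi(\id)$ fixes every generator.
\item For $*$-preservation, compute $\braket{\hat{\E}(X)V\eta}{\pi(a)\hat{\E}(Y)V\eta'} = \braket{\eta}{\tilde{\E}(X\cap Y)a\eta'} = \braket{a^*\eta}{\tilde{\E}(X\cap Y)\eta'}$, which equals $\braket{\pi(a^*)\hat{\E}(X)V\eta}{\hat{\E}(Y)V\eta'}$.
\item Taking $X=\Sigma$ in \eqref{eq:pidef} gives $\pi(a)V = Va$.
\item The commutation relation follows from $\pi(a)\hat{\E}(X)\hat{\E}(Y)V\eta = \hat{\E}(X\cap Y)Va\eta = \hat{\E}(X)\pi(a)\hat{\E}(Y)V\eta$ together with density.
\end{enumerate}
Note that $\tilde{\E}(\F)'$ is a von Neumann algebra, because $\tilde{\E}(\F)$ is self-adjoint.

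The main obstacle is normality. Here the plan is to show that $a \mapsto \braket{\psi}{\pi(a)\psi'}$ is ultraweakly continuous for $\psi,\psi' \in \mathcal{D}$. Each such functional is a finite sum of maps $a \mapsto \braket{\eta}{\tilde{\E}(X\cap Y)a\eta'}$, and these are vector functionals on $B(\hi)$, hence normal. The bound $\|\pi(a)\|\le\|a\|$ and density of $\mathcal{D}$ make $\pi$ weak-operator continuous on bounded sets. For a $*$-homomorphism between von Neumann algebras, weak-operator continuity on the unit ball yields normality. Alternatively, a bounded increasing net $a_\lambda \nearrow a$ is mapped to a bounded increasing net converging weakly on a dense set, hence strongly to $\pi(a)$.
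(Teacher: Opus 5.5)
Your proof is correct. Its overall structure matches the paper's: define $\pi(a)$ on $\mathrm{span}\{\hat{\E}(X)V\eta\}$, prove one norm estimate giving both well-definedness and $\|\pi(a)\|\le\|a\|$, extend by minimality, check the algebraic identities on generators, and get normality from normal matrix coefficients plus the uniform bound. The key estimate, however, is obtained differently.

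The paper keeps the sets $X_j$ arbitrary. It notes that the operator matrix $A=(\tilde{\E}(X_j\cap X_k))_{j,k}$ is positive with entries in $\tilde{\E}(\F)''$, and writes $A=B^*B$ with $B=\sqrt{A}$ having entries $b_{\ell j}\in\tilde{\E}(\F)''$, so that $a$ can be moved past them. This is the standard Stinespring/Arveson commutant-lifting argument. It works for any generating family whose Gram matrix has entries commuting with $a$, with no measure-theoretic structure needed.

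You instead use the Boolean structure of $\F$. You pass to the atoms $Y_m$ of the finite algebra generated by $X_1,\dots,X_n$, with $\zeta_m=\sum_{i:\,Y_m\subseteq X_i}\eta_i$ rewriting input and image simultaneously, which is what lets the single estimate also give well-definedness. This makes the vectors $\hat{\E}(Y_m)V\zeta_m$ mutually orthogonal, so the Gram matrix is diagonal. Everything then reduces to the single-operator inequality $\langle\zeta,a^*\tilde{\E}(Y)a\zeta\rangle\le\|a\|^2\langle\zeta,\tilde{\E}(Y)\zeta\rangle$, which needs only the scalar functional calculus. Your route is more elementary but tied to PVMs over a $\sigma$-algebra; the paper's is more portable.

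For normality your argument is essentially the paper's, stated more carefully. You restrict explicitly to bounded nets (weak-operator continuity on the unit ball) and invoke the standard criterion that this suffices. The paper's appeal to ``uniform boundedness'' leaves that reduction implicit, even though an ultraweakly convergent net need not be bounded.

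Splitting the $*$-homomorphism property into multiplicativity and $\pi(a)^*=\pi(a^*)$, rather than the paper's single identity $\pi(a^*b)=\pi(a)^*\pi(b)$, is only a cosmetic difference.
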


\begin{proof}
We show that the right-hand side of \eqref{eq:pidef} determines a bounded operator. Set $\xi := \sum_{j=1}^n \hat{\E}(X_j)V\eta_j$ and $\xi_a := \sum_{j=1}^n \hat{\E}(X_j)Va\eta_j$. We compute
\[
    \|\xi_a\|^2 = \sum_{j,k=1}^n \braket{Va\eta_j}{\hat{\E}(X_j)\hat{\E}(X_k)Va\eta_k} = \sum_{j,k=1}^n \braket{a\eta_j}{\tilde{\E}(X_j \cap X_k)a\eta_k}.
\]
The matrix $A := (\tilde{\E}(X_j \cap X_k))_{j,k=1}^n$ (with entries in $\tilde{\E}(\F)'' \subseteq B(\hi)$) is positive semidefinite. Its square root $B = (b_{jk}) = \sqrt{A}$ has entries in $\tilde{\E}(\F)''$, hence commuting with $a \in \tilde{\E}(\F)'$. Thus
\begin{align*}
    \|\xi_a\|^2 &= \sum_{j,k,\ell=1}^n \braket{\eta_j}{b_{\ell j}^*a^*ab_{\ell k}\eta_k} = \sum_{\ell=1}^n \big\|a\sum_{j=1}^n b_{\ell j}\eta_j\big\|^2 \\
    &\leq \|a\|^2 \sum_{\ell=1}^n \big\|\sum_{j=1}^n b_{\ell j}\eta_j\big\|^2 = \|a\|^2 \sum_{j,k=1}^n \braket{\eta_j}{\tilde{\E}(X_j \cap X_k)\eta_k} = \|a\|^2\|\xi\|^2.
\end{align*}
This shows $\|\xi_a\| \leq \|a\| \cdot \|\xi\|$, so $\pi(a)$ is well-defined and bounded on ${\rm span}\{\hat{\E}(X)V\eta\}$, extending by continuity to $\hik$ by minimality. Uniqueness follows from minimality.

Setting $X = \Sigma$ in \eqref{eq:pidef} gives $\pi(a)V\eta = Va\eta$, i.e., $\pi(a)V = Va$. For commutativity with $\hat{\E}$:
\[
    \pi(a)\hat{\E}(X)\hat{\E}(Y)V\eta = \pi(a)\hat{\E}(X \cap Y)V\eta = \hat{\E}(X \cap Y)Va\eta = \hat{\E}(X)\pi(a)\hat{\E}(Y)V\eta,
\]
and minimality gives $\pi(a)\hat{\E}(X) = \hat{\E}(X)\pi(a)$. That $\pi$ is a $*$-representation: for $a,b \in \tilde{\E}(\F)'$:
\begin{align*}
    \braket{\hat{\E}(X)V\eta}{\pi(a^*b)\hat{\E}(Y)V\zeta} &= \braket{\eta}{\tilde{\E}(X \cap Y)a^*b\zeta} = \braket{a\eta}{\tilde{\E}(X \cap Y)b\zeta}\\
    &= \braket{\hat{\E}(X)Va\eta}{\hat{\E}(Y)Vb\zeta} = \braket{\pi(a)\hat{\E}(X)V\eta}{\pi(b)\hat{\E}(Y)V\zeta},
\end{align*}
so $\pi(a^*b) = \pi(a)^*\pi(b)$. Normality: if $a_\alpha \to a$ ultraweakly in $\tilde{\E}(\F)'$, then
\[
\braket{\hat{\E}(X)V\eta}{\pi(a_\alpha)\hat{\E}(Y)V\zeta} = \braket{\eta}{\tilde{\E}(X \cap Y)a_\alpha\zeta} \to \braket{\eta}{\tilde{\E}(X \cap Y)a\zeta},
\]
and the claim follows by minimality and uniform boundedness.
\end{proof}

\begin{theorem}[Stinespring factorization]\label{thm:stinespring}
Assume $\MSp$ separable and choose faithful normal representations $\pi_\S: \MS \hookrightarrow B(\his)$ $($with $\his$ separable$)$ and $\pi_\R: \MR \hookrightarrow B(\hir)$. Let $\tilde{\E} := \pi_\R \circ \E: \F \to \Eff(\hir)$ and let $(\hik,V,\hat{\E})$ be a minimal Naimark dilation of $\tilde{\E}$. Then for every $f \in \Bb$:
\begin{equation}\label{eq:stinespring}
    (\pi_\S \otimes \pi_\R)\Big(\int_\Sigma f \otimes d\E\Big) = (\id_{\his} \otimes V^*)\Big(\int_\Sigma (\pi_\S \circ f) \otimes d\hat{\E}\Big)(\id_{\his} \otimes V).
\end{equation}
\end{theorem}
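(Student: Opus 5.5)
Both sides of \eqref{eq:stinespring} are bounded operators on $\hisr$, so I will show they agree as bounded sesquilinear forms, testing them on product vectors $\xi\otimes\eta$, $\xi'\otimes\eta'$ with $\xi,\xi'\in\his$, $\eta,\eta'\in\hir$. Product vectors span a dense subspace, so agreement of the matrix elements there gives equality of the operators.

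First, some care is needed with the right-hand side. The integral $\int_\Sigma(\pi_\S\circ f)\otimes d\hat{\E}$ is taken with $\MS$ replaced by $\pi_\S(\MS)\cong\MS$, which has separable predual, and $\MR$ replaced by $B(\hik)$. The PVM $\hat{\E}$ takes values in $\Eff(B(\hik))$, so Theorem~\ref{thm:intop} applies even though $\hik$ may be non-separable; that theorem assumes separability only for the first factor. Its proof in Step~1 shows that the operator $\int\tilde f\otimes d\hat{\E}\in B(\his\otimes\hik)$ is characterized by
\[
\tr\Big[(\tilde\rho\otimes\tilde\omega)\int\tilde f\otimes d\hat{\E}\Big]=\int_\Sigma\tr[\tilde\rho\,\tilde f(x)]\,d\hat{\E}_{\tilde\omega}(x)
\]
for all trace-class $\tilde\rho,\tilde\omega$. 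Taking rank-one $\tilde\rho=\dyad{\xi'}{\xi}$ and $\tilde\omega=\dyad{\zeta'}{\zeta}$ gives the matrix elements
\[
\braket{\xi\otimes\zeta}{\textstyle\int\tilde f\otimes d\hat{\E}\;\xi'\otimes\zeta'}=\int_\Sigma\braket{\xi}{\tilde f(x)\xi'}\,d\hat{\E}_{\zeta,\zeta'}(x).
\]
The same identity holds for $\tilde{\E}$ on $\his\otimes\hir$.

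Next I evaluate the right-hand side of \eqref{eq:stinespring} between $\xi\otimes\eta$ and $\xi'\otimes\eta'$. Since $(\id\otimes V)(\xi'\otimes\eta')=\xi'\otimes V\eta'$, the matrix element equals
\[
\int_\Sigma\braket{\xi}{\tilde f(x)\xi'}\,d\hat{\E}_{V\eta,V\eta'}(x).
\]
The Naimark relation gives $\hat{\E}_{V\eta,V\eta'}(X)=\braket{\eta}{V^*\hat{\E}(X)V\eta'}=\tilde{\E}_{\eta,\eta'}(X)$, so the two complex measures coincide. The integral is therefore $\int\braket{\xi}{\tilde f(x)\xi'}\,d\tilde{\E}_{\eta,\eta'}$. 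By the characterizing formula and Step~3 of the proof of Theorem~\ref{thm:intop}, this is the matrix element of $(\pi_\S\otimes\pi_\R)(\int f\otimes d\E)=\int\tilde f\otimes d\tilde{\E}$. Sesquilinearity extends the agreement to the algebraic tensor product, and boundedness of both sides extends it to all of $\hisr$.

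The only delicate point is justifying the rank-one, non-state version of the characterizing formula on both sides. I will take it directly from the construction in Step~1, which is stated for arbitrary trace-class product operators and arbitrary complex measures $\tilde{\E}_{\tilde\omega}$, rather than from \eqref{eq:characformula}, which is stated only for states. Minimality of the dilation is not needed for this identity.
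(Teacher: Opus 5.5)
Your proposal is correct and is essentially the paper's own argument. The paper also tests both sides against product trace-class operators $\tilde\rho\otimes\tilde\omega$, identifies $\hat{\E}_{V\tilde\omega V^*}=\tilde{\E}_{\tilde\omega}$ via the Naimark relation, and concludes by the uniqueness in Theorem~\ref{thm:intop}; it moves the isometry across as $V\tilde\omega V^*$ for general trace-class $\tilde\omega$, where you take rank-one matrix elements. As you note, minimality of the dilation plays no role, and treating the right-hand side as an integral over $\pi_\S(\MS)$ rather than $B(\his)$ (the paper's choice) gives the same operator by the Step~1 characterization.
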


\begin{proof}
Since $\hat{\E}$ is a PVM, the integration map $\int_\Sigma d\hat{\E}: \BbBH \to B(\his \otimes \hik)$ is a pointwise-normal unital $*$-homomorphism by Theorem~\ref{thm:integrationmaps2}.\footnote{The multiplicativity proof uses separability of $\his$ $($via Lemma~\ref{lem:simpleapprox} applied to $B(\his)$$)$, but does not require separability of $\hik$.} Since $\pi_\S(\MS) \subseteq B(\his)$ and any $f \in \Bb$ gives $\pi_\S \circ f \in \BbBH$, the right-hand side of \eqref{eq:stinespring} is well-defined.

To verify \eqref{eq:stinespring}, test both sides against $\tilde{\rho} \otimes \tilde{\omega}$ with $\tilde{\rho} \in \T(\his)$, $\tilde{\omega} \in \T(\hir)$:
\begin{align*}
    &\tr\Big[\tilde{\rho} \otimes \tilde{\omega} \, (\id_{\his} \otimes V^*)\big(\int (\pi_\S \circ f) \otimes d\hat{\E}\big)(\id_{\his} \otimes V)\Big] \\
    &\; = \tr\Big[\tilde{\rho} \otimes V\tilde{\omega} V^* \int (\pi_\S \circ f) \otimes d\hat{\E}\Big]
    = \int_\Sigma \tr[\tilde{\rho}\,\pi_\S(f(x))] \, d\hat{\E}_{V\tilde{\omega} V^*}(x).
\end{align*}
Now $\hat{\E}_{V\tilde{\omega} V^*}(X) = \tr[V\tilde{\omega} V^*\hat{\E}(X)] = \tr[\tilde{\omega}\,V^*\hat{\E}(X)V] = \tr[\tilde{\omega}\,\tilde{\E}(X)] = \tilde{\E}_{\tilde{\omega}}(X)$, so the right-hand side equals $\int_\Sigma f_\rho \, d\E_\omega$ where $\rho(\cdot) = \tr[\tilde{\rho}\,\pi_\S(\cdot)] \in \MSp$ and $\omega(\cdot) = \tr[\tilde{\omega}\,\pi_\R(\cdot)] \in \MRp$. Theorem \ref{thm:intop} concludes.
\end{proof}

\begin{remark}\label{rem:alternative}
The factorization \eqref{eq:stinespring} also suggests an alternative, equivalent construction of the integral. Given a POVM $\tilde{\E}$ on a Hilbert space $\hir$ with Naimark dilation $(\hik, V, \hat{\E})$ and a uniformly bounded ultraweakly measurable $\tilde{f}: \Sigma \to B(\his)$ one may \emph{define} $\int_\Sigma \tilde{f} \otimes d\tilde{\E} := V^*(\int_\Sigma \tilde{f} \otimes d\hat{\E})V$. The integral is then pinned down by the requirement $\int a \chi_X  \otimes d\tilde{\E} = a  \otimes \tilde{\E}(X)$ and pointwise-normality. This provides an independent consistency check: the integral constructed via the Pettis-type approach of Theorem~\ref{thm:intop} necessarily agrees with the one obtained by dilating, integrating against a PVM, and compressing. One could also hope to use this to develop integration theory in the first place. However, the PVM integral on the right-hand side requires the same foundational work to generate integral operators on spatial tensor product---the dilation route reorganizes these arguments but does not seem to bypass them.
\end{remark}

\begin{corollary}\label{cor:CP}
The integration map $\int_\Sigma d\E: \Bb \to \MSR$ is completely positive.
\end{corollary}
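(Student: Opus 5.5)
We derive complete positivity from the Stinespring factorization of Theorem~\ref{thm:stinespring}. The factorization writes the represented integration map as a $*$-homomorphism followed by a compression. We therefore argue at the level of $n\times n$ matrices: amplify both sides of \eqref{eq:stinespring} and use that a $*$-homomorphism followed by a compression is completely positive. Fix faithful normal representations $\pi_\S,\pi_\R$ and the minimal Naimark dilation $(\hik,V,\hat\E)$ of $\tilde\E=\pi_\R\circ\E$. Since $\pi_\S\otimes\pi_\R$ is a faithful normal $*$-isomorphism onto its image, and its amplification $\id_n\otimes(\pi_\S\otimes\pi_\R)$ reflects positivity, it suffices to show that
\[
\Theta:\Bb\ni f\longmapsto (\id_{\his}\otimes V^*)\Big(\int_\Sigma(\pi_\S\circ f)\otimes d\hat\E\Big)(\id_{\his}\otimes V)\in B(\hisr)
\]
is completely positive.

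First I would make precise the matrix ordering on $\Bb$. Since $\Bb$ is a $\mathrm{C}^*$-algebra (Proposition~\ref{prop:BbisCstar}), $M_n(\Bb)$ is a $\mathrm{C}^*$-algebra, naturally identified with $B_b(\Sigma,\F,M_n(\MS))$. Positivity of $F=(f_{ij})\in M_n(\Bb)$ in this $\mathrm{C}^*$-algebra means $F=G^*G$ for some $G\in M_n(\Bb)$. In particular, positivity in $M_n(\Bb)$ implies $F(x)\ge0$ in $M_n(\MS)$ for every $x$; the abstract $\mathrm{C}^*$-notion is the one we need.

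Second, the map $\Phi_{\hat\E}:\BbBH\to B(\his\otimes\hik)$, $g\mapsto\int g\otimes d\hat\E$, is a $*$-homomorphism by Theorem~\ref{thm:integrationmaps2}: it is linear, adjoint-preserving and multiplicative for the PVM $\hat\E$. The map $f\mapsto\pi_\S\circ f$ is also a $*$-homomorphism $\Bb\to\BbBH$. Their composite is a $*$-homomorphism, and its $n$-fold amplification $\id_n\otimes(\cdot)$ is again a $*$-homomorphism $M_n(\Bb)\to M_n(B(\his\otimes\hik))$. Applied to $F=G^*G$, it gives $\Phi^{(n)}(G)^*\Phi^{(n)}(G)\ge0$.

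Third, compression by $W:=\id_n\otimes\id_{\his}\otimes V$ preserves positivity: $W^*PW\ge0$ whenever $P\ge0$. The amplification $\Theta^{(n)}(F)$ is exactly $W^*\Phi^{(n)}(F)W$, entrywise by \eqref{eq:stinespring}. Hence $\Theta^{(n)}(F)\ge0$ for all $n$, which gives complete positivity of the represented map and therefore of $\int_\Sigma d\E$ into $\MSR$.

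The main obstacle is a hidden point in the second step: applying Theorem~\ref{thm:integrationmaps2} with target $B(\his)$ and a PVM valued in $B(\hik)$ with $\hik$ possibly non-separable. The footnote to Theorem~\ref{thm:stinespring} addresses this, since only the separability of $\his$ enters through Lemma~\ref{lem:simpleapprox}. It also requires that the $\mathrm{C}^*$-positivity in $M_n(\Bb)$ be the intended notion. If one instead wanted pointwise positivity $F(x)\ge0$ as the hypothesis, I would take $G:=F^{1/2}$ pointwise. I would then check that $x\mapsto F(x)^{1/2}$ is ultraweakly measurable via polynomial approximation of the square root in norm on $[0,\|F\|_\infty]$, using the measurability of products from Proposition~\ref{prop:BbisCstar}. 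The remainder is routine.
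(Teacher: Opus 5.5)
Your proposal is correct and follows the paper's proof essentially verbatim. The paper also uses the Stinespring identity \eqref{eq:stinespring} to write the represented integration map as the $*$-homomorphism $f \mapsto \int_\Sigma (\pi_\S\circ f)\otimes d\hat\E$ (Theorem~\ref{thm:integrationmaps2} applied to the PVM $\hat\E$), followed by compression with the isometry $\id_{\his}\otimes V$, and then pulls complete positivity back through the faithful representation $\pi_\S\otimes\pi_\R$. Your extra care about the matrix order on $M_n(\Bb)$ and about the non-separability of $\hik$ only spells out what the paper leaves implicit; for the latter it relies on the footnote to Theorem~\ref{thm:stinespring}.
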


\begin{proof}
The factorization \eqref{eq:stinespring} writes $(\pi_\S \otimes \pi_\R) \circ \int d\E$ as the composition
\[
    f \longmapsto \pi_\S \circ f \longmapsto \int_\Sigma (\pi_\S \circ f) \otimes d\hat{\E} \longmapsto (\id_{\his} \otimes V^*)\Big(\int_\Sigma (\pi_\S \circ f) \otimes d\hat{\E}\Big)(\id_{\his} \otimes V).
\]
The first map $f \mapsto \pi_\S \circ f$ is a $*$-homomorphism $\Bb \to \BbBH$ (since $\pi_\S$ is a $*$-homomorphism), the second is a $*$-homomorphism (Thm. \ref{thm:integrationmaps2}), and the third is CP (conjugation by an isometry). Since $*$-homomorphisms between $\mathrm{C}^*$-algebras are CP and CP maps compose, $(\pi_\S \otimes \pi_\R) \circ \int d\E$ is CP. Further, since $\pi_\S \otimes \pi_\R: \MSR \to B(\hisr)$ is a faithful $*$-homomorphism, an element of $M_n(\MSR)$ is positive if and only if its image under $\id_n \otimes (\pi_\S \otimes \pi_\R)$ is positive. Hence $\int d\E: \Bb \to \MSR$ is CP.
\end{proof}

\section{Commuting subalgebras}\label{sec:commuting}

When the two algebras are not tensor factors but commuting subalgebras of a global algebra, the integral can often be embedded directly into the latter. This is the setting relevant to algebraic quantum field theory, where local algebras are subalgebras of a global algebra rather than tensor factors.

Let $\M$ be a $\mathrm{W}^*$-algebra, and let $\MS, \MR \subset \M$ be $\mathrm{W}^*$-subalgebras with $[a,b] = 0$ for all $a \in \MS$, $b \in \MR$. Throughout this section we assume that the algebraic multiplication $a \otimes b \mapsto ab$ extends to a normal unital $*$-homomorphism
\[
    \mu: \MSR \longrightarrow \M, \; a \otimes b \longmapsto ab,
\]
from the spatial tensor product into $\M$.\footnote{This is a mild assumption; see Sec. \ref{sec:prelim} and Rem. \ref{rem:AQFT} below.}

\begin{theorem}\label{thm:embeddedprops}
Consider a pair of commuting $\mathrm{W}^*$-algebras $\MS,\MR \subseteq \M$ such that $\mu$ $($see above$)$ exists and $\MSp$ is separable. The \emph{embedded integration map}
\[
    \int_\Sigma d\E: \Bb \longrightarrow \MS \vee \MR, \; f \longmapsto \int_\Sigma f \, d\E := \mu\Big(\int_\Sigma f \otimes d\E\Big)
\]
is a pointwise-normal unital CP map. It is isometric (hence injective) if $\E$ is localizable and $\mu$ is injective, and a $*$-homomorphism if $\E$ is sharp. For simple functions $f = \sum_j a_j \chi_{X_j}$ with $a_j \in \MS$,
\begin{equation}\label{eq:embeddedchar}
    \int_\Sigma f \, d\E = \sum_j a_j \E(X_j) \in \MS \vee \MR;
\end{equation}
together with pointwise-normality, this determines $\int_\Sigma f \, d\E$ uniquely for all $f \in \Bb$. For any normal state $\Omega \in \S(\M)$,
\begin{equation}\label{eq:globaleval}
    \Omega\Big(\int_\Sigma f \, d\E\Big) = (\Omega \circ \mu)\Big(\int_\Sigma f \otimes d\E\Big).
\end{equation}
\end{theorem}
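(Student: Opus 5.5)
The plan is to view the embedded map as the composite $\mu \circ \int_\Sigma d\E$, where $\int_\Sigma d\E: \Bb \to \MSR$ is the tensor-product integration map. Each claimed property then comes from the matching property of the two factors: Theorem~\ref{thm:integrationmaps2} and Corollary~\ref{cor:CP} for the integration map, and the standing assumption that $\mu$ is a normal unital $*$-homomorphism.

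\emph{Range, unitality, complete positivity.} $\mu$ sends $a \otimes b$ to $ab \in \MS \vee \MR$, which is ultraweakly closed, and $\mu$ is normal. So $\mu$ maps the ultraweak closure $\MSR$ of $\MS \otimes_{\rm alg} \MR$ into $\MS \vee \MR$. A $*$-homomorphism between $\mathrm{C}^*$-algebras is CP, CP maps compose, and both factors are unital, so the composite is unital and CP.

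\emph{Pointwise-normality.} Take a bounded sequence $f_n \to f$ pointwise ultraweakly. Theorem~\ref{thm:integrationmaps2} gives $\int f_n \otimes d\E \to \int f \otimes d\E$ ultraweakly in $\MSR$. Normality of $\mu$ then carries this convergence into $\M$.

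\emph{Sharp and simple-function cases.} For sharp $\E$ the composite of two $*$-homomorphisms is a $*$-homomorphism. For simple $f = \sum_j a_j \chi_{X_j}$, formula \eqref{eq:intsimple} gives $\int f \otimes d\E = \sum_j a_j \otimes \E(X_j)$, and $\mu$ sends this to $\sum_j a_j \E(X_j)$, which is \eqref{eq:embeddedchar}.

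\emph{Uniqueness.} Lemma~\ref{lem:simpleapprox} supplies simple $f_n$ with $\|f_n\|_\infty \le \|f\|_\infty$ and $f_n(x) \to f(x)$ ultraweakly. Any pointwise-normal map agreeing with \eqref{eq:embeddedchar} on simple functions must then send $f$ to the ultraweak limit of $\sum_j a_{n,j}\E(X_{n,j})$. Ultraweak limits are unique, so $\int f\,d\E$ is determined.

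\emph{Formula \eqref{eq:globaleval}.} This is the definition: $\Omega\big(\mu(\int f \otimes d\E)\big) = (\Omega\circ\mu)(\int f \otimes d\E)$. Here $\Omega \circ \mu$ is a normal state on $\MSR$ because $\mu$ is normal, unital and positive.

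\emph{Isometry when $\E$ is localizable and $\mu$ is injective.} Theorem~\ref{thm:integrationmaps2} makes the tensor-product integral isometric. An injective $*$-homomorphism between $\mathrm{C}^*$-algebras is isometric, so the composite preserves norms.

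This last step needs the most care, though it is still routine. It is the only place where injectivity of $\mu$ is used, and the point is that the C$^*$-algebraic fact "injective $*$-homomorphisms are isometric" applies to $\mu$ as a map into $\M$ with no further W$^*$ argument. Without injectivity of $\mu$ only contractivity survives, which matches the hypothesis in the statement.
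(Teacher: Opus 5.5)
Your proposal is correct and follows the paper's proof. The paper likewise obtains every property by composing the tensor-product integration map with the normal unital $*$-homomorphism $\mu$, derives \eqref{eq:embeddedchar} from $\mu(\sum_j a_j\otimes\E(X_j))=\sum_j a_j\E(X_j)$, handles general $f$ through the simple approximants of Lemma~\ref{lem:simpleapprox} and normality of $\mu$, and reads off \eqref{eq:globaleval} from the definition; you additionally spell out two steps the paper leaves implicit, namely that the range lies in $\MS\vee\MR$ and that an injective $\mu$ is isometric.
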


\begin{proof}
Since $\mu$ is a normal unital $*$-homomorphism, composition with the integration map preserves all stated properties. The simple-function formula \eqref{eq:embeddedchar} follows from $\mu(\sum_j a_j \otimes \E(X_j)) = \sum_j a_j\E(X_j)$. The general case follows by ultraweak continuity: for simple $f_n \to f$ as in Lemma~\ref{lem:simpleapprox}, $\int f_n \, d\E \xrightarrow{\rm uw} \int f \, d\E$ in $\MS \vee \MR$ since $\mu$ is normal and $\int f_n \otimes d\E \xrightarrow{\rm uw} \int f \otimes d\E$ in $\MSR$. Eq.~\eqref{eq:globaleval} follows by definition.
\end{proof}

\begin{corollary}[Descent of the embedded integration map]\label{cor:embeddeddescent}
Assume $\MSp$ and $\MRp$ separable. The embedded integration map of Theorem~\ref{thm:embeddedprops} factors through the quotient $q_\E: \Bb \twoheadrightarrow \Linf$ to a normal unital CP map ($\mathrm{W}^*$-morphism)
\[
    \int_\Sigma d\E: \Linf \;\longrightarrow\; \MS \vee \MR, \qquad [f] \longmapsto \mu\Big(\int_\Sigma f \otimes d\E\Big).
\]
It is a $*$-homomorphism if $\E$ is sharp. It is faithful if $\mu$ is injective, isometric if $\E$ is localizable and $\mu$ is injective, and an isometric $*$-homomorphism if $\E$ is sharp and $\mu$ is injective.
\end{corollary}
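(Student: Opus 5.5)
The plan is to compose the descended map of Theorem~\ref{thm:descent} with $\mu$, exactly as Theorem~\ref{thm:embeddedprops} composes the undescended map with $\mu$.

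\textbf{Factorization.} The embedded map on $\Bb$ equals $\mu\circ\int_\Sigma d\E$. By Theorem~\ref{thm:descent}, $\int_\Sigma d\E$ vanishes on $\N_\E$, so the composite vanishes there as well. Hence it factors through $q_\E$ as $[f]\mapsto \mu(\int_\Sigma f\otimes d\E)$, which is the descended tensor-product map followed by $\mu$.

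\textbf{Properties inherited from composition.} The descended map $\Linf\to\MSR$ is a normal unital CP map by Theorem~\ref{thm:descent}. The map $\mu$ is a normal unital $*$-homomorphism, hence CP. Normality, unitality and complete positivity are stable under composition, so the composite is a $\mathrm{W}^*$-morphism into $\MS\vee\MR$. For sharp $\E$ the descended map is a $*$-homomorphism, and composing two $*$-homomorphisms gives a $*$-homomorphism. None of this requires injectivity of $\mu$.

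\textbf{Properties that use injectivity of $\mu$.}
\begin{itemize}
\item \emph{Faithfulness.} Suppose $[f]\ge 0$ and $\mu(\int f\otimes d\E)=0$. Injectivity of $\mu$ gives $\int f\otimes d\E=0$, and faithfulness of the descended map (Theorem~\ref{thm:descent}) gives $[f]=0$.
\item \emph{Isometry for localizable $\E$.} An injective $*$-homomorphism between $\mathrm{C}^*$-algebras is isometric, so $\mu$ is isometric. The descended map is isometric for localizable $\E$ by Theorem~\ref{thm:descent}, so the composite is isometric.
\item \emph{Sharp case.} For sharp $\E$ the descended map is an isometric $*$-homomorphism by Theorem~\ref{thm:descent}. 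Composing it with the isometric $*$-homomorphism $\mu$ gives an isometric $*$-homomorphism.
\end{itemize}

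\textbf{Expected obstacle.} There is essentially no hard step; the argument is bookkeeping on top of Theorem~\ref{thm:descent}. The one point to state carefully is that injectivity of a $*$-homomorphism implies it is isometric. This is exactly what lets the isometry conclusions pass through $\mu$, and it is why they need $\mu$ injective.
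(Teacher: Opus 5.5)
Your proposal is correct and matches the paper's proof essentially step for step. The factorization through $q_\E$ and the normal, unital, completely positive (and, for sharp $\E$, multiplicative) properties come from composing the descended map of Theorem~\ref{thm:descent} with the normal unital $*$-homomorphism $\mu$, while faithfulness and the isometric conclusions pass through $\mu$ via its injectivity and the fact that an injective $*$-homomorphism between $\mathrm{C}^*$-algebras is isometric.
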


\begin{proof}
The factorization through $q_\E$ is inherited from $\int_\Sigma d\E: \Bb \to \MSR$ (Thm.~\ref{thm:descent}) by composition with $\mu$. The descended map is the composition
\[
    \Linf \xrightarrow{\;\int_\Sigma d\E\;} \MSR \xrightarrow{\;\mu\;} \MS \vee \MR,
\]
inheriting normality, unitality, complete positivity, and contractivity from both factors. Faithfulness requires injectivity of both factors: $\int_\Sigma d\E$ on $\Linf$ is faithful by Theorem~\ref{thm:descent}, and $\mu$ is faithful by hypothesis. The structural strengthenings (isometric $*$-homomorphism under sharpness, isometry under localizability) follow by composing the corresponding properties of $\int_\Sigma d\E$ on $\Linf$ with $\mu$, using that $\mu$ injective implies $\mu$ isometric (a $*$-homomorphism between $\mathrm{C}^*$-algebras is isometric iff injective).
\end{proof}

\begin{remark}\label{rem:split}
In \eqref{eq:globaleval}, the normal state $\Omega \circ \mu \in \S(\MSR)$ encodes the restriction of the global state $\Omega$ to $\MS \vee \MR$, including any correlations between $\MS$ and $\MR$. In general, $\Omega \circ \mu \neq \Omega|_{\MS} \otimes \Omega|_{\MR}$; the product factorization holds precisely when $\Omega$ is a product state on $\MS \vee \MR$. Injectivity of $\mu$ --- i.e., $\MS \vee \MR \cong \MS \bar{\otimes} \MR$ canonically --- is the \emph{split property}. When $\mu$ is not injective, the embedded integral carries strictly less information than the tensor-product integral --- distinct elements of $\MSR$ can map to the same element of $\M$. The tensor-product case $\M = \MSR$ with $\mu = \id$ is recovered as a special case.
\end{remark}

\begin{remark}\label{rem:AQFT}
The commuting-subalgebra setting is directly relevant to algebraic quantum field theory (AQFT) \cite{haag_local_1996}. There, one assigns to each bounded spacetime region $O$ a $\mathrm{W}^*$-algebra $\M(O) \subset \M$ of local observables, subject to isotony ($O_1 \subset O_2 \Rightarrow \M(O_1) \subset \M(O_2)$) and Einstein causality ($O_1 \perp O_2 \Rightarrow [\M(O_1), \M(O_2)] = 0$). When two subsystems occupy spacelike separated regions $O_1$ and $O_2$, the embedded integral of Thm.~\ref{thm:embeddedprops} (with $\MS := \M(O_1)$ and $\MR := \M(O_2)$) places the integrated observables directly in $\M$. Under standard assumptions, local algebras of bounded spacetime regions are hyperfinite type~III$_1$ factors \cite{buchholz_universal_1987}, hence injective --- so the standing assumption of Section~\ref{sec:prelim} (existence of a normal $\mu$) is automatically satisfied. The split property ($\mu$ injective, equivalently a type~I factor separating $\MS$ from $\MR'$) is moreover a theorem for regions at positive spacelike separation under standard assumptions \cite{haag_local_1996}, so no information is lost in passing from $\MSR$ to $\M$. Separability of the predual likewise holds in the standard setting: the hyperfinite type~III$_1$ factor has separable predual, being the unique injective factor of this type acting on separable Hilbert space \cite[Ch.~XII]{takesaki2003theoryII}.
\end{remark}

\section{Parametrized integration}\label{sec:parametrized}

We now study how the integral depends on a parameter. Throughout this section, $\MS, \MR \subset \M$ are commuting $\mathrm{W}^*$-subalgebras with $\MSp$ separable and admitting a normal multiplication map.

\subsection{Regularity under parameters}

The following proposition establishes basic structural compatibility of parametrized integrals.

\begin{proposition}[Regularity of parametrized integrals]\label{prop:regularity}
Let $U$ be a set, $Q: U \times \Sigma \to \MS$ a function such that $Q(p,\cdot) \in \Bb$ for each $p \in U$ with $\|Q\|_\infty := \sup_{(p,x)} \|Q(p,x)\| < \infty$, and define $G: U \to \MS \vee \MR$ by $G(p) := \int_\Sigma Q(p,\cdot) \, d\E$. Then $\|G\|_\infty \leq \|Q\|_\infty$, and:
\begin{enumerate}[label=(\roman*)]
    \item If $(U,\mathcal{G})$ is a measurable space and $Q$ is jointly $(\mathcal{G} \otimes \F)$-ultraweakly measurable, then $G$ is ultraweakly $\mathcal{G}$-measurable.
  	\item Assume $\MRp$ separable. If $U$ is a topological space and the map $U \to \Linf$, $p \mapsto [Q(p,\cdot)]$, is ultraweakly continuous, then $G$ is ultraweakly continuous.
\end{enumerate}
\end{proposition}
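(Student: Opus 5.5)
The norm bound comes first. For each $p$, the map $\int_\Sigma d\E$ on $\Bb$ is unital and positive (Theorem~\ref{thm:integrationmaps2}); in fact it is CP by Corollary~\ref{cor:CP}, hence contractive. The map $\mu$ is a $*$-homomorphism and therefore also contractive. Together these give $\|G(p)\| \leq \|Q(p,\cdot)\|_\infty \leq \|Q\|_\infty$ uniformly in $p$. If one prefers to avoid invoking CP, the contractivity can be read off from the bound \eqref{eq:boundOnPositive} for self-adjoint $f$, combined with a Russo--Dye type argument for unital positive maps on $\mathrm{C}^*$-algebras.

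For (i), we must show that $p \mapsto \Omega(G(p))$ is $\mathcal{G}$-measurable for every $\Omega \in \S(\M)$. By \eqref{eq:globaleval}, $\Omega(G(p)) = (\Omega\circ\mu)(\int Q(p,\cdot)\otimes d\E)$, and $\Omega\circ\mu \in \MSRp$. By Lemma~\ref{lem:proddense}, $\Omega\circ\mu$ is a norm limit of finite linear combinations $\sigma_k$ of product states $\rho\otimes\omega$. Since $\|\int Q(p,\cdot)\otimes d\E\| \leq \|Q\|_\infty$ uniformly in $p$, the functions $p \mapsto \sigma_k(\int Q(p,\cdot)\otimes d\E)$ converge uniformly, hence pointwise, to $p\mapsto\Omega(G(p))$. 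It therefore suffices to treat a single product state, where \eqref{eq:characformula} gives
\[
(\rho\otimes\omega)\Big(\int_\Sigma Q(p,\cdot)\otimes d\E\Big) = \int_\Sigma \rho(Q(p,x))\,d\E_\omega(x).
\]
The integrand $(p,x)\mapsto \rho(Q(p,x))$ is bounded and $\mathcal{G}\otimes\F$-measurable by assumption, and $\E_\omega$ is a probability measure. The measurability part of the Fubini--Tonelli theorem (via monotone class, applied to real and imaginary parts, each bounded) then shows that $p \mapsto \int \rho(Q(p,x))\,d\E_\omega(x)$ is $\mathcal{G}$-measurable. This step is routine.

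For (ii), we factor $G$ as the composite
\[
U \ni p \longmapsto [Q(p,\cdot)] \in \Linf \xrightarrow{\;\int_\Sigma d\E\;} \MSR \xrightarrow{\;\mu\;} \M,
\]
where the middle arrow is the descended map of Theorem~\ref{thm:descent}; this factorization is well defined because $\N_\E \subseteq \ker \int d\E$. The last two arrows are normal, i.e.\ ultraweakly continuous: the middle one by Theorem~\ref{thm:descent} (which uses separability of $\MRp$), and $\mu$ by hypothesis. The first arrow is ultraweakly continuous by assumption. A composite of continuous maps is continuous, so $G$ is ultraweakly continuous; this holds for nets, so no first-countability of $U$ is needed.

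The only delicate point I anticipate is in (ii): the normality in Theorem~\ref{thm:descent} was established through monotone sequences, and one has to invoke the standard equivalence (for positive maps) between sequential normality, normality and ultraweak continuity. Once that equivalence is granted, (ii) reduces to a composition of continuous maps.
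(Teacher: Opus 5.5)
Your proposal is correct and follows the paper's proof essentially step for step. The norm bound comes from contractivity of the integration map and of $\mu$. For (i), you reduce via $\Omega\circ\mu \in \MSRp$ to product states and then use Lemma~\ref{lem:proddense}. For (ii), you factor through the normal descended map of Theorem~\ref{thm:descent}. Your explicit observation that approximating $\Omega\circ\mu$ by combinations of product states gives uniform, hence pointwise, convergence in $p$ spells out the density step that the paper only asserts.
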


\begin{proof}
We have $G = \mu \circ \tilde{G}$ where $\tilde{G}(p) := \int_\Sigma Q(p,\cdot) \otimes d\E \in \MSR$. Since $\mu: \MSR \to \M$ is a normal contraction, we have $\|G(p)\| \leq \|\tilde{G}(p)\|$. The bound $\|G\|_\infty \leq \|Q\|_\infty$ then follows from unitality of the integration map (Thm.~\ref{thm:integrationmaps2}).
(i) Since $G = \mu \circ \tilde{G}$ and $\mu$ is normal, we have  $\Omega(G(p)) = (\Omega \circ \mu)(\tilde{G}(p))$ with $\Omega \circ \mu \in \MSRp$. It therefore suffices to show that $p \mapsto \tilde{\Omega}(\tilde{G}(p))$ is $\mathcal{G}$-measurable for every $\tilde{\Omega} \in \MSRp$. For product states we have $(\rho \otimes \omega)(\tilde{G}(p)) = \int_\Sigma \rho(Q(p,x))\,d\E_\omega(x)$. Since $(p,x) \mapsto \rho(Q(p,x))$ is $(\mathcal{G} \otimes \F)$-measurable and bounded, the $\mathcal{G}$-measurability of $p \mapsto \int_\Sigma \rho(Q(p,x))\,d\E_\omega(x)$ follows from standard integration theory.\footnote{If $h: U \times \Sigma \to \Cn$ is bounded and $(\mathcal{G} \otimes \F)$-measurable and $\nu$ is a finite measure on $(\Sigma,\F)$, then $p \mapsto \int_\Sigma h(p,x)\,d\nu(x)$ is $\mathcal{G}$-measurable: the claim is immediate for simple functions and extends to bounded measurable $h$ by dominated convergence (see e.g.\ \cite[Thm.~2.15]{folland_real_1999}).} The extension to all $\tilde{\Omega} \in \MSRp$ follows by density of product states (Lemma~\ref{lem:proddense}).
(ii) By Theorem~\ref{thm:descent}, the descended integration map $\int_\Sigma d\E: \Linf \to \MSR$ is a $\mathrm{W}^*$-morphism (using separability of $\MRp$), hence ultraweakly continuous. Since $\tilde{G}(p) = \int_\Sigma [Q(p,\cdot)] \otimes d\E$ is the composition of $p \mapsto [Q(p,\cdot)]$ with the descended integration map, it is ultraweakly continuous as a function $U \to \MSR$. Composition with the normal $*$-homomorphism $\mu: \MSR \to \M$ then gives $G = \mu \circ \tilde{G}$ ultraweakly continuous as a function $U \to \M$.
\end{proof}

\begin{remark}\label{rem:seqreg}
For first-countable $U$ $($e.g.\ metrizable$)$, the hypothesis of item~(ii) can be checked sequentially: assuming $\MRp$ separable, if $p_n \to p$ in $U$ implies $Q(p_n,x) \xrightarrow{\rm uw} Q(p,x)$ for each $x$, and if $\|Q\|_\infty < \infty$, then $p \mapsto [Q(p,\cdot)] \in \Linf$ is ultraweakly continuous. Indeed, for $h \in L^1_{\mu_\E}(\Sigma)$ and $\rho \in \MSp$, $|h(x)\rho(Q(p_n,x))| \leq |h(x)|\,\|Q\|_\infty\,\|\rho\|$ is $\mu_\E$-integrable and the integrand converges pointwise, so the dominated convergence theorem gives $\langle [Q(p_n,\cdot)], h \otimes \rho\rangle \to \langle [Q(p,\cdot)], h \otimes \rho\rangle$ (Thm.~\ref{thm:Linfstructure}); sequential ultraweak continuity follows by density, and is equivalent to ultraweak continuity for first-countable $U$. For general topological $U$, the dominated convergence theorem applies only to sequences, and the $\Linf$-level continuity of item~(ii) is the appropriate hypothesis.
\end{remark}

\subsection{Leibniz rule}

We now turn to discuss how differentiability interacts with integration in the $\mathrm{W}^*$-algebraic setting.

\begin{definition}\label{def:uwdiff}
    Let $I \subset \Reals$ be an open interval and $f: I \to \MS$. The function $f$ is called \emph{ultraweakly differentiable at $t \in I$} if there exists an element $\dot{f}(t) \in \MS$, called the \emph{ultraweak derivative of $f$ at $t$}, such that for every $\rho \in \S(\MS)$ (equivalently, for every $\rho \in \MSp$) the scalar function
    \[
        f_\rho: I \ni s \mapsto \rho(f(s)) \in \Cn
    \]
    is differentiable at $s = t$ with derivative $\rho(\dot{f}(t))$. The function $f$ is ultraweakly differentiable on $I$ if it is ultraweakly differentiable at every $t \in I$.
\end{definition}

\begin{lemma}\label{lem:uwdiff}
Let $I \subset \Reals$ be an open interval and $f: I \to \MS$. The following are equivalent:
\begin{enumerate}[label=(\roman*)]
    \item $f$ is ultraweakly differentiable at $t \in I$ in the sense of Definition~\ref{def:uwdiff}, with ultraweak derivative $\dot f(t) \in \MS$;
    \item $f$ is weak-$*$ differentiable at $t$ when $\MS$ is viewed as the Banach dual $(\MSp)^*$, with weak-$*$ derivative $\dot f(t) \in \MS$; that is, $h^{-1}(f(t+h) - f(t)) \to \dot f(t)$ in the weak-$*$ topology $\sigma(\MS, \MSp)$ as $h \to 0$.
\end{enumerate}
The weak-$*$ topology on $\MS$ coincides with the ultraweak topology, so the convergence in (ii) is ultraweak. When the ultraweak derivative exists, it is unique.
\end{lemma}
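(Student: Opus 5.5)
The argument is essentially definitional; I would proceed in three short steps.

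\emph{Step 1: identify the topologies.} The weak-$*$ topology $\sigma(\MS,\MSp)$ on $\MS \cong (\MSp)^*$ is by definition the coarsest topology making all maps $a \mapsto \rho(a)$, $\rho \in \MSp$, continuous. This is exactly the definition of the ultraweak topology recalled in Section~\ref{sec:prelim}, so the two topologies coincide.

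\emph{Step 2: the equivalence (i)$\Leftrightarrow$(ii).} For fixed $\rho$ and $h \neq 0$ (small enough that $t+h \in I$), linearity gives
\[
    \rho\big(h^{-1}(f(t+h)-f(t))\big) = h^{-1}\big(f_\rho(t+h)-f_\rho(t)\big).
\]
Weak-$*$ convergence of the difference quotients to $\dot f(t)$ means that the left-hand side tends to $\rho(\dot f(t))$ for every $\rho \in \MSp$. That is precisely differentiability of each $f_\rho$ at $t$ with derivative $\rho(\dot f(t))$, so (ii) is just (i) tested on all of $\MSp$.

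The only real content is the parenthetical claim in Definition~\ref{def:uwdiff}, namely that testing on states $\rho \in \S(\MS)$ already suffices. For this I would use the Jordan decomposition: every $\rho \in \MSp$ is a $\Cn$-linear combination of at most four normal states. Both $\rho \mapsto f_\rho$ and $\rho \mapsto \rho(\dot f(t))$ are linear in $\rho$, and derivatives of finite linear combinations of scalar functions are the corresponding linear combinations. Hence differentiability with the prescribed derivative passes from states to all of $\MSp$. No uniform boundedness of the difference quotients is needed, because the decomposition involves only finitely many functionals.

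\emph{Step 3: uniqueness.} Suppose $d_1,d_2 \in \MS$ both satisfy (i). Then $\rho(d_1) = \rho(d_2)$ for every state, since both equal the derivative of $f_\rho$ at $t$. By Lemma~\ref{lem:extfromstates}, item~1, this forces $d_1 = d_2$; equivalently, $\MSp$ separates points of $\MS$, so weak-$*$ limits are unique (the topology is Hausdorff).

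I expect no genuine obstacle. The main point to handle carefully is the reduction from states to general normal functionals in Step~2, which is settled by the finite Jordan decomposition.
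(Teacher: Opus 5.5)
Your proposal is correct and follows the paper's proof essentially step by step. Both arguments note that $\sigma(\MS,\MSp)$ is the ultraweak topology by definition, and both see that weak-$*$ convergence of the difference quotients is exactly differentiability of every $f_\rho$, reducing from $\MSp$ to states via the Jordan decomposition. Uniqueness then follows in both because normal functionals, or already normal states, separate points of $\MS$.
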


\begin{proof}
$(i) \Rightarrow (ii)$ Since the weak-$*$ and ultraweak topologies on $\MS$ coincide by definition---the ultraweak topology on a $\mathrm{W}^*$-algebra is $\sigma(\MS, \MSp)$---this is immediate.

$(ii) \Rightarrow (i)$ By definition of the weak-$*$ topology $\sigma(\MS, \MSp)$, a net $a_h$ converges to $a$ in $\sigma(\MS, \MSp)$ if and only if $\rho(a_h) \to \rho(a)$ for every $\rho \in \MSp$. Applied to $a_h = h^{-1}(f(t+h) - f(t))$ and $a = \dot f(t)$, this gives the equivalence of (ii) with the condition that $\rho(h^{-1}(f(t+h) - f(t))) \to \rho(\dot f(t))$ for every $\rho \in \MSp$, i.e., $h^{-1}(\rho(f(t+h)) - \rho(f(t))) \to \rho(\dot f(t))$. The latter is precisely scalar differentiability of $s \mapsto \rho(f(s))$ at $s = t$ with derivative $\rho(\dot f(t))$ for every $\rho \in \MSp$ --- equivalently, for every $\rho \in \S(\MS)$ by Jordan decomposition and linearity. This is condition (i).

For uniqueness: if $\dot f_1(t)$ and $\dot f_2(t)$ both satisfy (i), then $\rho(\dot f_1(t) - \dot f_2(t)) = 0$ for every $\rho \in \MSp$, and since $\MSp$ separates points of $\MS$, $\dot f_1(t) = \dot f_2(t)$.
\end{proof}

\begin{proposition}[Quantum Leibniz rule]\label{prop:Leibniz}
Let $I \subset \Reals$ be an open interval and $Q: I \times \Sigma \to \MS$ such that:
\begin{enumerate}[label=(\alph*)]
    \item $Q(t,\cdot) \in \Bb$ for each $t \in I$, with $\sup_{t \in K} \|Q(t,\cdot)\|_\infty < \infty$ for every compact $K \subset I$;
    \item for each $x \in \Sigma$, $t \mapsto Q(t,x)$ is ultraweakly differentiable on $I$ with ultraweak derivative $\dot{Q}(t,x) \in \MS$;
    \item $\dot{Q}(t,\cdot) \in \Bb$ for each $t \in I$, with $\sup_{t \in K} \|\dot{Q}(t,\cdot)\|_\infty < \infty$ for every compact $K \subset I$.
\end{enumerate}
Then $t \mapsto \int_\Sigma Q(t,x)\,d\E(x)$ is ultraweakly differentiable on $I$ in $\M$, and integration commutes with the derivative:
\[
    \frac{d}{dt}\int_\Sigma Q(t,x)\,d\E(x) = \int_\Sigma \dot Q(t,x)\,d\E(x).
\]
\end{proposition}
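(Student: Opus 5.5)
The plan is to reduce the claim to a scalar statement by testing against normal states of $\M$, pull the state back through $\mu$ to $\MSRp$, and first prove the identity in $\MSR$ on product functionals, where the characterizing formula \eqref{eq:characformula} turns it into a classical differentiation-under-the-integral problem. Fix $t\in I$ and a compact neighbourhood $K=[t-\delta,t+\delta]\subset I$. Let $C_K:=\sup_{s\in K}\|Q(s,\cdot)\|_\infty$ and $D_K:=\sup_{s\in K}\|\dot Q(s,\cdot)\|_\infty$. For $0<|h|<\delta$ set
\[
\Delta_h(x):=h^{-1}\big(Q(t+h,x)-Q(t,x)\big).
\]
By (a), $\Delta_h\in\Bb$, and linearity of the integration map (Thm.~\ref{thm:integrationmaps2}) gives $h^{-1}\big(\int Q(t+h,\cdot)\otimes d\E-\int Q(t,\cdot)\otimes d\E\big)=\int\Delta_h\otimes d\E$. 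The goal is then to show $\int\Delta_h\otimes d\E\xrightarrow{\rm uw}\int\dot Q(t,\cdot)\otimes d\E$ in $\MSR$ as $h\to 0$. Applying the normal map $\mu$ then gives the claim in $\M$ (Thm.~\ref{thm:embeddedprops}), and uniqueness of the derivative is Lemma~\ref{lem:uwdiff}.

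The key estimate, and the main obstacle, is a uniform bound $\sup_{0<|h|<\delta}\|\Delta_h\|_\infty<\infty$. Pointwise differentiability alone does not control difference quotients uniformly. I would obtain the bound from a scalar mean value inequality. For fixed $x$ and $\rho\in\MSp$ with $\|\rho\|\le1$, the function $s\mapsto\rho(Q(s,x))$ is differentiable on $I$ with derivative $\rho(\dot Q(s,x))$ by (b), and this derivative is bounded by $D_K$ on $K$ by (c). Applying the mean value theorem to the real and imaginary parts gives $|\rho(\Delta_h(x))|\le 2D_K$. Taking the supremum over the unit ball of $\MSp$ yields $\|\Delta_h(x)\|\le 2D_K$ uniformly in $x$ and $h$. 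Together with (a), this is exactly what is needed to use the boundedness and convergence properties of the integration map.

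With the bound in hand, I would pass to sequences: it suffices to show $\int\Delta_{h_n}\otimes d\E\to\int\dot Q(t,\cdot)\otimes d\E$ ultraweakly for every sequence $h_n\to 0$. By (b), $\Delta_{h_n}(x)\xrightarrow{\rm uw}\dot Q(t,x)$ for each $x$, and $(\Delta_{h_n})$ is bounded in $\Bb$ by the previous step. Pointwise-normality of the integration map (Thm.~\ref{thm:integrationmaps2}, eq.~\eqref{eq:normality}) then gives the convergence directly. Alternatively, one can test on product states and use dominated convergence with the bound $2D_K$, then extend to all of $\MSRp$ by density (Lemma~\ref{lem:proddense}) together with the uniform norm bound on the integrals. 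Composing with $\mu$, and noting $\Omega\circ\mu\in\MSRp$ for every $\Omega\in\MS{}_*$-type normal functional of $\M$ as in \eqref{eq:globaleval}, gives ultraweak convergence of the difference quotients in $\M$ to $\int_\Sigma\dot Q(t,x)\,d\E(x)$. This is the asserted differentiability and formula at each $t\in I$.
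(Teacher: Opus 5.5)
Your proposal is correct and follows essentially the same route as the paper: difference quotients in $\Bb$, a uniform bound on $\|\Delta_h\|_\infty$ obtained from the scalar mean value theorem, pointwise-normality of the integration map, and then normality of $\mu$ together with Lemma~\ref{lem:uwdiff}. You deviate in two small ways, both valid and slightly tidier than the paper. First, you bound $\|\Delta_h(x)\|$ directly by duality over the unit ball of $\MSp$, where the paper goes through states and self-adjoint parts. Second, you explicitly reduce $h \to 0$ to sequences $h_n \to 0$, which pointwise-normality requires since it is formulated only for sequences.
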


\begin{proof}
Define $\tilde G(t) := \int_\Sigma Q(t,\cdot) \otimes d\E \in \MSR$ and $G(t) := \mu(\tilde G(t)) = \int_\Sigma Q(t,\cdot)\,d\E \in \M$. Fix $t_0 \in I$ and a compact neighborhood $K \subset I$ of $t_0$, and define the difference quotients
\[
    D_h(x) := h^{-1}\big(Q(t_0+h, x) - Q(t_0, x)\big), \qquad h \in \Reals\setminus\{0\} \text{ with } t_0 + h \in I.
\]
Each $D_h \in \Bb$ as a linear combination of elements of $\Bb$ (hypothesis~(a)), and by linearity of the integration map on $\Bb$,
\begin{equation}\label{eq:diffquotient}
    h^{-1}(\tilde G(t_0+h) - \tilde G(t_0)) = \int_\Sigma D_h \otimes d\E \in \MSR.
\end{equation}

\emph{Pointwise ultraweak convergence of $D_h$.} For each fixed $x \in \Sigma$, hypothesis~(b) and Definition~\ref{def:uwdiff} applied to $t \mapsto Q(t, x): I \to \MS$ give
\[
    \rho(D_h(x)) = h^{-1}\big(\rho(Q(t_0+h, x)) - \rho(Q(t_0, x))\big) \xrightarrow{h \to 0} \rho(\dot Q(t_0, x))
\]
for every $\rho \in \MSp$. By Lemma~\ref{lem:uwdiff}, this is precisely $D_h(x) \xrightarrow{\rm uw} \dot Q(t_0, x)$ in $\MS$ as $h \to 0$.

\emph{Uniform bound on $\|D_h(\cdot)\|_\infty$.} Fix $\rho \in \S(\MS)$ and $x \in \Sigma$. By hypothesis~(b) and Definition~\ref{def:uwdiff} applied to $s \mapsto Q(s, x)$, the scalar function $s \mapsto \rho(Q(s, x))$ is differentiable on $I$ with derivative $\rho(\dot Q(s, x))$ at each $s \in I$. The real and imaginary parts $s \mapsto \mathrm{Re}(\rho(Q(s,x)))$ and $s \mapsto \mathrm{Im}(\rho(Q(s,x)))$ are therefore real-differentiable on $I$ with derivatives $\mathrm{Re}(\rho(\dot Q(s,x)))$ and $\mathrm{Im}(\rho(\dot Q(s,x)))$. By the mean value theorem applied on the closed interval between $t_0$ and $t_0 + h$ (contained in $K$ for $|h|$ small enough),
\[
    \big|\mathrm{Re}(\rho(D_h(x)))\big| \leq \sup_{s \in K} \big|\mathrm{Re}(\rho(\dot Q(s,x)))\big|,
\]
and similarly for the imaginary part. The bound $|\rho(\dot Q(s,x))| \leq \|\rho\|\,\|\dot Q(s,x)\|_{\MS} \leq \|\dot Q(s,\cdot)\|_\infty$ holds for $\rho \in \S(\MS)$ (so $\|\rho\| = 1$), giving
\[
    \big|\mathrm{Re}(\rho(D_h(x)))\big| \leq C_K, \qquad \big|\mathrm{Im}(\rho(D_h(x)))\big| \leq C_K, \qquad C_K := \sup_{s \in K} \|\dot Q(s,\cdot)\|_\infty,
\]
which is finite by hypothesis~(c). Decomposing $D_h(x) = \mathrm{Re}(D_h(x)) + i\,\mathrm{Im}(D_h(x))$ into self-adjoint parts in $\MS$,
\[
    \|\mathrm{Re}(D_h(x))\|_{\MS} = \sup_{\rho \in \S(\MS)} \big|\rho(\mathrm{Re}(D_h(x)))\big| = \sup_{\rho \in \S(\MS)} \big|\mathrm{Re}(\rho(D_h(x)))\big| \leq C_K
\]
by Lemma~\ref{lem:extfromstates}(3) (using $\rho(\mathrm{Re}(a)) = \mathrm{Re}(\rho(a))$ for self-adjoint functionals on self-adjoint elements), and likewise $\|\mathrm{Im}(D_h(x))\|_{\MS} \leq C_K$. The triangle inequality gives $\|D_h(x)\|_{\MS} \leq 2 C_K$ for every $x \in \Sigma$ and every $h$ with $t_0 + h \in K$, hence $\|D_h(\cdot)\|_\infty \leq 2 C_K$.

\emph{Ultraweak differentiability of $G$.} Combining the uniform bound $\|D_h(\cdot)\|_\infty \leq 2 C_K$ with the pointwise convergence $D_h(x) \xrightarrow{\rm uw} \dot Q(t_0, x)$ for each $x$, pointwise-normality of the integration map on $\Bb$ (Thm.~\ref{thm:intop}) gives
\[
    \int_\Sigma D_h \otimes d\E \;\xrightarrow{\rm uw}\; \int_\Sigma \dot Q(t_0, \cdot) \otimes d\E \quad \text{in } \MSR \text{ as } h \to 0,
\]
where the limit is well-defined in $\MSR$ by Theorem~\ref{thm:intop} applied to $\dot Q(t_0, \cdot) \in \Bb$ (hypothesis~(c)). Normality of $\mu: \MSR \to \M$ propagates this convergence to $\M$:
\[
    \mu\Big(\int_\Sigma D_h \otimes d\E\Big) \;\xrightarrow{\rm uw}\; \mu\Big(\int_\Sigma \dot Q(t_0,\cdot) \otimes d\E\Big) = \int_\Sigma \dot Q(t_0,\cdot)\,d\E \quad \text{in } \M.
\]
Together with~\eqref{eq:diffquotient} and the definition of $G$,
\[
    h^{-1}\big(G(t_0+h) - G(t_0)\big) \;\xrightarrow{\rm uw}\; \int_\Sigma \dot Q(t_0, \cdot)\,d\E \quad \text{in } \M.
\]
By Lemma~\ref{lem:uwdiff} applied to $G: I \to \M$, $G$ is ultraweakly differentiable at $t_0$ with ultraweak derivative
\[
    \dot G(t_0) = \int_\Sigma \dot Q(t_0,\cdot)\,d\E \in \M.
\]
Since $t_0 \in I$ was arbitrary, $G$ is ultraweakly differentiable on $I$ with the displayed derivative, as claimed.
\end{proof}

\subsection{Operator-valued Fubini theorem}\label{sec:Fubini}

We now establish an operator-valued Fubini theorem in which both measures are POVMs valued in independent $\mathrm{W}^*$-algebras. We first construct the product POVM, then state and prove the Fubini identity.

\begin{lemma}[Product POVM]\label{lem:productPOVM}
Let $\M_1$ and $\M_2$ be $\mathrm{W}^*$-algebras, $\E_1: \F_1 \to \Eff(\M_1)$ a POVM on $(\Sigma_1,\F_1)$ and $\E_2: \F_2 \to \Eff(\M_2)$ a POVM on $(\Sigma_2,\F_2)$. There exists a unique POVM $\E_1 \otimes \E_2: \F_1 \otimes \F_2 \to \Eff(\M_1 \bar\otimes \M_2)$ satisfying $(\E_1 \otimes \E_2)(X_1 \times X_2) = \E_1(X_1) \otimes \E_2(X_2)$ for all $X_1 \in \F_1$, $X_2 \in \F_2$. It is characterized by the \emph{product-measure property}: for all $\omega_1 \in \S(\M_1)$, $\omega_2 \in \S(\M_2)$ and $Z \in \F_1 \otimes \F_2$,
\begin{equation}\label{eq:prodmeasure}
    (\E_1 \otimes \E_2)_{\omega_1 \otimes \omega_2}(Z) = ((\E_1)_{\omega_1} \otimes (\E_2)_{\omega_2})(Z),
\end{equation}
where $(\E_1)_{\omega_1} \otimes (\E_2)_{\omega_2}$ is the product of the probability measures on $\Sigma_1 \times \Sigma_2$. If $\E_1$ and $\E_2$ are both sharp $($PVMs$)$, then $\E_1 \otimes \E_2$ is sharp. If $\E_1$ and $\E_2$ are both localizable, then $\E_1 \otimes \E_2$ is localizable.
\end{lemma}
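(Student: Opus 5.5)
The plan is to build $\E_1 \otimes \E_2$ from the integration theory already developed rather than from scratch. I work under the separability hypothesis in force for the Fubini setting, so $(\M_2)_*$ is separable. Theorem~\ref{thm:intop} then applies with $\MS := \M_2$ and $\MR := \M_1$.

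\textbf{Construction.} For $Z \in \F_1 \otimes \F_2$ and $x_1 \in \Sigma_1$, let $Z_{x_1} := \{x_2 : (x_1,x_2) \in Z\} \in \F_2$ be the section, and set $f_Z(x_1) := \E_2(Z_{x_1}) \in \Eff(\M_2)$. This function is uniformly bounded by $1$. It is ultraweakly measurable because, for each $\omega_2 \in \S(\M_2)$, the map $x_1 \mapsto (\E_2)_{\omega_2}(Z_{x_1})$ is $\F_1$-measurable; this is the standard section lemma of classical Fubini, proved by a monotone class argument starting from rectangles. I then define
\[
(\E_1 \otimes \E_2)(Z) := \mathrm{flip}\Big(\int_{\Sigma_1} f_Z \otimes d\E_1\Big) \in \M_1 \bar\otimes \M_2 .
\]
The characterizing formula \eqref{eq:characformula} gives
\[
(\omega_2 \otimes \omega_1)(\cdot) = \int_{\Sigma_1} (\E_2)_{\omega_2}(Z_{x_1})\, d(\E_1)_{\omega_1}(x_1) = ((\E_1)_{\omega_1} \otimes (\E_2)_{\omega_2})(Z)
\]
by classical Fubini, which is exactly \eqref{eq:prodmeasure}.

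\textbf{POVM properties and uniqueness.} For a rectangle, $f_{X_1 \times X_2} = \E_2(X_2)\chi_{X_1}$ is simple, so \eqref{eq:intsimple} yields $\E_1(X_1) \otimes \E_2(X_2)$. Effect-valuedness follows from positivity of the integration map (Theorem~\ref{thm:integrationmaps2}) applied to $f_Z \ge 0$ and to $\id - f_Z = f_{Z^c} \ge 0$, together with unitality. For $\sigma$-additivity, take disjoint $Z_n$. The partial sums $\sum_{n \le N} f_{Z_n}$ increase, are bounded by $\id$, and converge pointwise ultraweakly to $f_{\cup Z_n}$ by $\sigma$-additivity of $\E_2$. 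Pointwise-normality \eqref{eq:normality} then gives ultraweak convergence of the integrals, and the flip is normal. For uniqueness, let $\E'$ be another POVM agreeing on rectangles. For every $\Omega \in \S(\M_1 \bar\otimes \M_2)$, the probability measures $\E'_\Omega$ and $(\E_1\otimes\E_2)_\Omega$ agree on the $\pi$-system of rectangles, hence on $\F_1\otimes\F_2$ by Dynkin's $\pi$-$\lambda$ theorem. Item 1 of Lemma~\ref{lem:extfromstates} then gives $\E' = \E_1 \otimes \E_2$. The same argument run with product states shows that \eqref{eq:prodmeasure} characterizes the POVM.

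\textbf{Sharpness.} If $\E_2$ is a PVM then each $f_Z(x_1)$ is a projection, so $f_Z^* f_Z = f_Z$. If $\E_1$ is a PVM the integration map is a $*$-homomorphism (Theorem~\ref{thm:integrationmaps2}), so $(\int f_Z \otimes d\E_1)^2 = \int f_Z \otimes d\E_1$ is a self-adjoint idempotent, i.e.\ a projection.

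\textbf{Localizability, the main obstacle.} Given localizing sequences $\omega^{x_1}_n$ and $\omega^{x_2}_n$, I would take $\Omega_n := \omega^{x_1}_n \otimes \omega^{x_2}_n$, whose induced measure is $\mu_n \otimes \nu_n$ by \eqref{eq:prodmeasure}. For bounded measurable $g$ on $\Sigma_1\times\Sigma_2$, put $h_n(y) := \int g(y,\cdot)\,d\nu_n$ and $h(y) := g(y,x_2)$. Then
\[
\int g\, d(\mu_n\otimes\nu_n) - g(x_1,x_2) = \int (h_n - h)\, d\mu_n + \Big(\int h\, d\mu_n - h(x_1)\Big).
\]
The second term tends to $0$ by localizability of $\E_1$. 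For the first term, $h_n \to h$ pointwise and boundedly, while $\mu_n \to \delta_{x_1}$ setwise. Here I would invoke the generalized dominated convergence theorem for setwise convergent sequences of measures (Royden/Serfozo), which gives $\int h_n\,d\mu_n \to h(x_1)$ and hence $\int (h_n - h)\,d\mu_n \to 0$. The underlying mechanism is that setwise convergence makes $(\mu_n)$ uniformly countably additive (Vitali--Hahn--Saks), after which an Egorov-type argument applies. This is the one step that needs a genuinely non-routine input. If it proved troublesome, the fallback is the double-index iterated limit $\lim_n \lim_m$ over $\omega^{x_1}_n \otimes \omega^{x_2}_m$, combined with a diagonal extraction.
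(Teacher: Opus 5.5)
Your argument is sound under the hypothesis you add, but the lemma does not assume it. It is stated for arbitrary $\mathrm{W}^*$-algebras $\M_1,\M_2$. Every tool you import is proved in the paper only for $(\M_2)_*$ separable:
\begin{itemize}
\item existence of $\int_{\Sigma_1} f_Z\otimes d\E_1\in\M_2\bar\otimes\M_1$ (Theorem~\ref{thm:intop});
\item its positivity and pointwise-normality;
\item above all, multiplicativity for sharp $\E_1$ (Theorem~\ref{thm:integrationmaps2}).
\end{itemize}
Separability enters through the simple-function approximation of Lemma~\ref{lem:simpleapprox}. As written, you prove the lemma only when one of the two preduals is separable (swap the factors otherwise). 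That is enough for Theorem~\ref{thm:Fubini}, but not for the statement.

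The paper's proof avoids this. It Naimark-dilates both POVMs and takes the joint PVM of $\hat\E_1(\cdot)\otimes\id$ and $\id\otimes\hat\E_2(\cdot)$ on $\hik_1\otimes\hik_2$. It then compresses by $V_1\otimes V_2$ and obtains membership in $\pi_1(\M_1)\bar\otimes\pi_2(\M_2)$ by a Dynkin argument over $Z$, so no approximation of functions is needed.

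Your route can be repaired in the same spirit without dilations. Step~1 of Theorem~\ref{thm:intop}, Lemma~\ref{lem:semiip}, and the normality estimate in the proof of Theorem~\ref{thm:integrationmaps2} use no separability. That estimate is the bound $4\|f\|_\infty\|T\|_1$ plus dominated convergence on finite sums of product trace-class operators. So $P(Z):=\int\tilde f_Z\otimes d\tilde\E_1\in B(\hi_2\otimes\hi_1)$ exists, is an effect, and is additive and ultraweakly $\sigma$-additive in $Z$. Hence $\{Z:P(Z)\in\pi_2(\M_2)\bar\otimes\pi_1(\M_1)\}$ is a $\lambda$-system containing the rectangles. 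For sharpness, replace multiplicativity of the integration map by the standard two-step $\pi$-$\lambda$ argument for $P(Z)P(W)=P(Z\cap W)$: first for $W$ a rectangle, then for arbitrary $W$, using separate ultraweak continuity of multiplication.

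Granting the extra hypothesis, the rest is correct and genuinely different from the paper. You realize $\E_1\otimes\E_2$ as the Fubini-type integral of the section-effect function $x_1\mapsto\E_2(Z_{x_1})$. This makes the product-measure property literally classical Fubini. Effect-valuedness, $\sigma$-additivity and sharpness are then inherited from positivity/unitality, pointwise-normality and multiplicativity of the integration map, with no dilation and no joint PVM. The paper's construction is independent of the integration theory, which is what frees it from separability.

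Your uniqueness argument ($\pi$-$\lambda$ over rectangles for every normal state $\Omega$, then item~1 of Lemma~\ref{lem:extfromstates}) is more explicit than the paper's. The paper reads uniqueness off the product-measure property without spelling out why any POVM agreeing on rectangles must satisfy it.

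For localizability you use the same Fubini splitting as the paper, but the cross term does not need the generalized dominated convergence theorem. Because the limit is a point mass, $A_N:=\bigcup_{n\ge N}\{|h_n-h|\ge\epsilon\}\downarrow\emptyset$ gives $x_1\notin A_{N_0}$ for some $N_0$. Hence $|\int(h_n-h)\,d\mu_n|\le 2\|g\|_\infty\,\mu_n(A_{N_0})+\epsilon$ for $n\ge N_0$, and $\mu_n(A_{N_0})\to\delta_{x_1}(A_{N_0})=0$. The diagonal-extraction fallback is unnecessary. It would also be delicate, since localizability requires a single sequence that works for all bounded measurable $g$ at once.
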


\begin{proof}
Choose faithful normal representations $\pi_1: \M_1 \hookrightarrow B(\hi_1)$ and $\pi_2: \M_2 \hookrightarrow B(\hi_2)$ and set $\tilde{\E}_j := \pi_j \circ \E_j$. Let $(\hik_1, V_1, \hat{\E}_1)$ and $(\hik_2, V_2, \hat{\E}_2)$ be Naimark dilations (Sec.~\ref{sec:CP}). The PVMs $\hat{\E}_1(X_1) \otimes \id_{\hik_2}$ and $\id_{\hik_1} \otimes \hat{\E}_2(X_2)$ on $\hik_1 \otimes \hik_2$ commute (acting on different tensor factors), so by the spectral theorem for commuting projection-valued measures they determine a unique PVM $\hat{\E}_1 \otimes \hat{\E}_2: \F_1 \otimes \F_2 \to B(\hik_1 \otimes \hik_2)$ satisfying $(\hat{\E}_1 \otimes \hat{\E}_2)(X_1 \times X_2) = \hat{\E}_1(X_1) \otimes \hat{\E}_2(X_2)$. Define
\[
    (\tilde{\E}_1 \boxtimes \tilde{\E}_2)(Z) := (V_1 \otimes V_2)^*\,(\hat{\E}_1 \otimes \hat{\E}_2)(Z)\,(V_1 \otimes V_2) \; \text{for } Z \in \F_1 \otimes \F_2.
\]
This is a POVM on $B(\hi_1 \otimes \hi_2)$: it is positive (compression of positive operators), $\sigma$-additive (in the ultraweak topology, inherited from the PVM), and normalized since $V_1, V_2$ are isometries.

\emph{Values in $\M_1 \bar{\otimes} \M_2$.} Let $\mathcal{C} := \{Z \in \F_1 \otimes \F_2 : (\tilde{\E}_1 \boxtimes \tilde{\E}_2)(Z) \in \pi_1(\M_1) \bar{\otimes} \pi_2(\M_2)\}$. On rectangles, $(\tilde{\E}_1 \boxtimes \tilde{\E}_2)(X_1 \times X_2) = \tilde{\E}_1(X_1) \otimes \tilde{\E}_2(X_2) \in \pi_1(\M_1) \otimes_{\rm alg} \pi_2(\M_2) \subset \pi_1(\M_1) \bar{\otimes} \pi_2(\M_2)$, so $\mathcal{C}$ contains all rectangles. Moreover, $\mathcal{C}$ is closed under complements (since $(\tilde{\E}_1 \boxtimes \tilde{\E}_2)(Z^c) = \id - (\tilde{\E}_1 \boxtimes \tilde{\E}_2)(Z)$ and $\id \in \pi_1(\M_1) \bar{\otimes} \pi_2(\M_2)$) and under countable disjoint unions (ultraweak $\sigma$-additivity and ultraweak closedness of $\pi_1(\M_1) \bar{\otimes} \pi_2(\M_2)$). By the Dynkin $\pi$-$\lambda$ theorem, $\mathcal{C} = \F_1 \otimes \F_2$.

We may therefore define the abstract product POVM $(\E_1 \otimes \E_2)(Z) := (\pi_1 \otimes \pi_2)^{-1}((\tilde{\E}_1 \boxtimes \tilde{\E}_2)(Z)) \in \M_1 \bar{\otimes} \M_2$, which is a POVM (inheriting positivity, $\sigma$-additivity and normalization from the Hilbert-space level). On rectangles, $(\E_1 \otimes \E_2)(X_1 \times X_2) = \E_1(X_1) \otimes \E_2(X_2)$ holds by construction. For the product-measure property: for $\omega_j \in \S(\M_j)$ with positive trace-class representatives $\tilde{\omega}_j$,
\begin{align*}
    (\E_1 \otimes \E_2)_{\omega_1 \otimes \omega_2}(Z) &= \tr[(\tilde{\omega}_1 \otimes \tilde{\omega}_2)\,(\tilde{\E}_1 \boxtimes \tilde{\E}_2)(Z)]
    = \tr[(V_1\tilde{\omega}_1V_1^* \otimes V_2\tilde{\omega}_2V_2^*)\,(\hat{\E}_1 \otimes \hat{\E}_2)(Z)] \\
    &= (\hat{\E}_1 \otimes \hat{\E}_2)_{V_1\tilde{\omega}_1V_1^* \otimes V_2\tilde{\omega}_2V_2^*}(Z)
    = ((\hat{\E}_1)_{V_1\tilde{\omega}_1V_1^*} \otimes (\hat{\E}_2)_{V_2\tilde{\omega}_2V_2^*})(Z)
    = ((\E_1)_{\omega_1} \otimes (\E_2)_{\omega_2})(Z),
\end{align*}
where the fourth equality uses that the scalar measure induced by a product state on the joint PVM of commuting PVMs factors as the product of the individual scalar measures, and the last uses $(\hat{\E}_j)_{V_j\tilde{\omega}_jV_j^*}(X) = \tr[\tilde{\omega}_j\,V_j^*\hat{\E}_j(X)V_j] = (\E_j)_{\omega_j}(X)$. Uniqueness follows from Lemma~\ref{lem:extfromstates}: the product-measure property determines $(\E_1 \otimes \E_2)(Z)$ on all product states $\omega_1 \otimes \omega_2$, hence uniquely as an element of $\M_1 \bar{\otimes} \M_2$ (items 1--2); in particular, the construction is independent of the choice of representations.

Sharpness: if $\E_1$ and $\E_2$ are PVMs, the Naimark dilations are trivial ($V_j = \id$), so $\E_1 \otimes \E_2 = \hat{\E}_1 \otimes \hat{\E}_2$ is a PVM.

Localizability: if $\E_j$ is localizable with sequences $\omega_n^{(j)} \in \S(\M_j)$ such that $(\E_j)_{\omega_n^{(j)}} \to \delta_{x_j}$ weakly for each $x_j \in \Sigma_j$, then $\omega_n^{(1)} \otimes \omega_n^{(2)} \in \S(\M_1 \bar\otimes \M_2)$ and $(\E_1 \otimes \E_2)_{\omega_n^{(1)} \otimes \omega_n^{(2)}} = (\E_1)_{\omega_n^{(1)}} \otimes (\E_2)_{\omega_n^{(2)}}$ by \eqref{eq:prodmeasure}. We must show $(\E_1)_{\omega_n^{(1)}} \otimes (\E_2)_{\omega_n^{(2)}} \to \delta_{(x_1,x_2)}$ weakly; this does not follow from weak convergence of the marginals alone and requires the argument below. We claim that for any bounded $(\F_1 \otimes \F_2)$-measurable $g: \Sigma_1 \times \Sigma_2 \to \Cn$,
\[
    \int g \, d((\E_1)_{\omega_n^{(1)}} \otimes (\E_2)_{\omega_n^{(2)}}) \to g(x_1,x_2).
\]
By Fubini, the left-hand side equals $\int_{\Sigma_2} h_n(s_2) \, d(\E_2)_{\omega_n^{(2)}}(s_2)$ where $h_n(s_2) := \int_{\Sigma_1} g(s_1,s_2) \, d(\E_1)_{\omega_n^{(1)}}(s_1) \to g(x_1,s_2)$ pointwise (by localizability of $\E_1$) with $|h_n(s_2)| \leq \|g\|_\infty$. Now write $\int_{\Sigma_2} h_n \, d(\E_2)_{\omega_n^{(2)}}$ as
\[\int_{\Sigma_2} h_n \, d(\E_2)_{\omega_n^{(2)}} = \int_{\Sigma_2} g(x_1,\cdot) \, d(\E_2)_{\omega_n^{(2)}} + \int_{\Sigma_2} (h_n - g(x_1,\cdot)) \, d(\E_2)_{\omega_n^{(2)}}.
\]
The first term converges to $g(x_1,x_2)$ by localizability of $\E_2$. For the second, fix $\epsilon > 0$ and set $A_N := \bigcup_{n \geq N}\{s_2 : |h_n(s_2) - g(x_1,s_2)| \geq \epsilon\}$. Since $h_n(s_2) \to g(x_1,s_2)$ pointwise, $A_N \downarrow \emptyset$, so $x_2 \notin A_{N_0}$ for some $N_0$. For $n \geq N_0$, we have $\{|h_n - g(x_1,\cdot)| \geq \epsilon\} \subset A_{N_0}$, so $|\int_{\Sigma_2} (h_n - g(x_1,\cdot)) \, d(\E_2)_{\omega_n^{(2)}}| \leq 2\|g\|_\infty \cdot (\E_2)_{\omega_n^{(2)}}(A_{N_0}) + \epsilon$; since $(\E_2)_{\omega_n^{(2)}}(A_{N_0}) \to \delta_{x_2}(A_{N_0}) = 0$, the claim follows.
\end{proof}

\begin{theorem}[Quantum Fubini Theorem]\label{thm:Fubini}
Let $\MS$, $\M_1$ and $\M_2$ be mutually commuting $\mathrm{W}^*$-subalgebras of a $\mathrm{W}^*$-algebra $\M$ with separable preduals.\footnote{Separability of $(\M_j)_*$ is used for the iterated integrals, where the inner integral produces an element of $\MS \vee \M_j$ and the outer integration requires separability of the predual of the domain algebra. This is guaranteed: since $\mu: \MS \bar\otimes \M_j \to \M$ is a normal surjection onto $\MS \vee \M_j$, its predual map embeds $(\MS \vee \M_j)_*$ isometrically into $(\MS \bar\otimes \M_j)_*$, which is separable (it is a quotient of $\T(\his \otimes \hi_j)$ for separable $\his, \hi_j$).} Let $\E_j: \F_j \to \Eff(\M_j)$ be a POVM on $(\Sigma_j,\F_j)$ for $j = 1,2$, and let $Q: \Sigma_1 \times \Sigma_2 \to \MS$ be uniformly bounded and ultraweakly jointly $(\F_1 \otimes \F_2)$-measurable. Define:
\[
    G_1: \Sigma_1 \to \MS \vee \M_2, \; G_1(x_1) := \int_{\Sigma_2} Q(x_1,x_2) \, d\E_2(x_2),
\]
\[
    G_2: \Sigma_2 \to \MS \vee \M_1, \; G_2(x_2) := \int_{\Sigma_1} Q(x_1,x_2) \, d\E_1(x_1).
\]
\begin{enumerate}[label=(\roman*)]
    \item \emph{Well-posedness.} $G_1$ and $G_2$ are uniformly bounded and ultraweakly measurable, with $\|G_j\|_\infty \leq \|Q\|_\infty$.
    \item \emph{Fubini identity.} The iterated integrals agree and equal the product integral:
    \begin{equation}\label{eq:Fubini}
        \int_{\Sigma_1} G_1(x_1) \, d\E_1(x_1) = \int_{\Sigma_1 \times \Sigma_2} Q(x_1,x_2) \, d(\E_1\E_2)(x_1,x_2) = \int_{\Sigma_2} G_2(x_2) \, d\E_2(x_2)
    \end{equation}
    in $\MS \vee \M_1 \vee \M_2 \subset \M$, where $\E_1\E_2 := \mu_{12} \circ (\E_1 \otimes \E_2): \F_1 \otimes \F_2 \to \Eff(\M_1 \vee \M_2)$ is the embedding of the product POVM $($Lemma~\ref{lem:productPOVM}$)$ via the multiplication map $\mu_{12}: \M_1 \bar\otimes \M_2 \to \M$.
\end{enumerate}
\end{theorem}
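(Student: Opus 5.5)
Part (i) follows directly from Proposition~\ref{prop:regularity}(i). For $G_1$, take $U = \Sigma_1$, $\mathcal{G} = \F_1$, and use $\E_2$ with the commuting pair $(\MS,\M_2)$ and its multiplication map; $Q$ is jointly measurable and uniformly bounded, so $G_1$ is ultraweakly $\F_1$-measurable with $\|G_1\|_\infty \leq \|Q\|_\infty$. $G_2$ is symmetric. To feed $G_1$ into the outer integral against $\E_1$, we regard $\MS \vee \M_2$ as the "system" algebra. It commutes with $\M_1$, and its predual is separable by the footnote. We also need a normal multiplication map $(\MS \vee \M_2)\bar\otimes \M_1 \to \M$. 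I would assume this is part of the standing hypothesis of the section, namely that the family $\MS,\M_1,\M_2$ admits a normal $\mu$ on the triple spatial tensor product. Then the required map is obtained by factoring through $\mu_{S2}\otimes \id$ once one checks that $\mu_{S2}\otimes\id_{\M_1}$ exists as a normal map (spatial tensor products of normal $*$-homomorphisms).

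For (ii), the plan is to lift everything to the triple spatial tensor product $\MS\bar\otimes\M_1\bar\otimes\M_2$ and compare there. Define $T := \int_{\Sigma_1\times\Sigma_2} Q \otimes d(\E_1\otimes\E_2) \in \MS\bar\otimes(\M_1\bar\otimes\M_2)$ via Theorem~\ref{thm:intop}. Also define lifted iterated integrals: $\tilde G_1(x_1) := \int_{\Sigma_2} Q(x_1,\cdot)\otimes d\E_2 \in \MS\bar\otimes\M_2$, which is measurable and bounded by the argument of (i), and then $\tilde T_1 := \int_{\Sigma_1}\tilde G_1 \otimes d\E_1$. Test both $T$ and $\tilde T_1$ (after the flip $\M_1\leftrightarrow\M_2$) against product states $\rho\otimes\omega_1\otimes\omega_2$. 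For $T$, the characterizing formula~\eqref{eq:characformula} and the product-measure property~\eqref{eq:prodmeasure} give $\int Q_\rho\, d((\E_1)_{\omega_1}\otimes(\E_2)_{\omega_2})$. For $\tilde T_1$, the characterizing formula applied twice gives $\int_{\Sigma_1}\big(\int_{\Sigma_2} Q_\rho(x_1,x_2)\,d(\E_2)_{\omega_2}\big)\,d(\E_1)_{\omega_1}$. The outer step applies the formula with the state $\rho\otimes\omega_2$ on $\MS\bar\otimes\M_2$, which is legitimate since $\rho\otimes\omega_2$ is a normal state. The scalar Fubini theorem for bounded jointly measurable $Q_\rho$ then identifies the two expressions. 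Triple product states are linearly dense (Lemma~\ref{lem:proddense} iterated), so Lemma~\ref{lem:extfromstates} gives $T=\tilde T_1$, and symmetrically $T=\tilde T_2$.

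Finally, push down to $\M$ with the triple multiplication map $\mu$. The main obstacle is showing that $\mu(\tilde T_1)$ equals the embedded iterated integral $\int_{\Sigma_1} G_1\,d\E_1$, where $G_1 = \mu_{S2}\circ\tilde G_1$. This is a naturality statement: $(\mu_{S2}\otimes\id_{\M_1})\big(\int \tilde G_1\otimes d\E_1\big) = \int (\mu_{S2}\circ \tilde G_1)\otimes d\E_1$. I would obtain it from the functoriality formula~\eqref{eq:pushfwd} with $\Psi=\mu_{S2}$, a normal unital $*$-homomorphism and hence a channel, and $\Phi=\id$, $\alpha=\id$. One still has to check that $\mu = \mu_{S1}' \circ (\mu_{S2}\otimes\id)$, which holds on elementary tensors $a\otimes b_1\otimes b_2\mapsto ab_2b_1 = ab_1b_2$ by commutativity, and hence everywhere by normality and density. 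Similarly, $\mu(T) = \int Q\, d(\E_1\E_2)$ follows by functoriality with $\Phi=\mu_{12}$ and the same factorization $\mu = \mu_{S,12}\circ(\id\otimes\mu_{12})$. Combining these gives~\eqref{eq:Fubini}.
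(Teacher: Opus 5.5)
Your proposal is correct and follows the paper's own argument. Part (i) comes from Proposition~\ref{prop:regularity}(i), and part (ii) compares the tensor-level integrals in $\MS \bar\otimes \M_1 \bar\otimes \M_2$ on product states $\rho\otimes\omega_1\otimes\omega_2$ via the characterizing formula, the product-measure property \eqref{eq:prodmeasure} and classical Fubini, then pushes down with $\mu$. Your extra step --- functoriality \eqref{eq:pushfwd} with $\Psi=\mu_{S2}$ (resp.\ $\Phi=\mu_{12}$) together with the factorization of $\mu$ through $\mu_{S2}\bar\otimes\id$ --- makes explicit what the paper only asserts: that the embedded iterated integral of the $\MS\vee\M_2$-valued $G_1$ is the image under $\mu$ of the lifted integral $\int_{\Sigma_1}\tilde G_1\otimes d\E_1$.
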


\begin{proof}

(i) Follows from Proposition \ref{prop:regularity}.

(ii) Each expression in \eqref{eq:Fubini} is the image under $\mu$ of an element of $\MS \bar\otimes \M_1 \bar\otimes \M_2$: the three tensor-product-level integrals live \emph{a priori} in $(\MS \bar\otimes \M_2) \bar\otimes \M_1$, $\MS \bar\otimes (\M_1 \bar\otimes \M_2)$ and $(\MS \bar\otimes \M_1) \bar\otimes \M_2$ respectively, but are canonically identified via the associativity and flip isomorphisms (cf.\ Remark~\ref{rem:Fubini}). Since product states $\rho \otimes \omega_1 \otimes \omega_2$ with $\rho \in \S(\MS)$, $\omega_j \in \S(\M_j)$ separate points of $\MS \bar\otimes \M_1 \bar\otimes \M_2$ (Lemma~\ref{lem:proddense}), it suffices to verify that the three preimages agree when tested against all such states; applying $\mu$ then gives \eqref{eq:Fubini}. For $\rho \in \S(\MS)$, $\omega_1 \in \S(\M_1)$, $\omega_2 \in \S(\M_2)$, the first iterated integral gives
\begin{align*}
    (\rho \otimes \omega_1 \otimes \omega_2)\Big(\int_{\Sigma_1} G_1 \otimes d\E_1\Big)
    &= \int_{\Sigma_1} \Big(\int_{\Sigma_2} Q(x_1,x_2)_\rho \, d(\E_2)_{\omega_2}(x_2)\Big) d(\E_1)_{\omega_1}(x_1),
\end{align*}
and the second one
\begin{align*}
    (\rho \otimes \omega_1 \otimes \omega_2)\Big(\int_{\Sigma_2} G_2 \otimes d\E_2\Big)
    &= \int_{\Sigma_2} \Big(\int_{\Sigma_1} Q(x_1,x_2)_\rho \, d(\E_1)_{\omega_1}(x_1)\Big) d(\E_2)_{\omega_2}(x_2).
\end{align*}
Both are iterated integrals of the bounded jointly measurable scalar function $(x_1,x_2) \mapsto Q(x_1,x_2)_\rho$ against the finite measures $(\E_1)_{\omega_1}$ and $(\E_2)_{\omega_2}$; they agree by the classical Fubini theorem. For the product integral, the characterizing formula (Thm.~\ref{thm:intop}) and the product-measure property \eqref{eq:prodmeasure} give
\[
    (\rho \otimes \omega_1 \otimes \omega_2)\Big(\int_{\Sigma_1 \times \Sigma_2} Q \otimes d(\E_1 \otimes \E_2)\Big)
    = \int_{\Sigma_1 \times \Sigma_2} Q(x_1,x_2)_\rho \, d((\E_1)_{\omega_1} \otimes (\E_2)_{\omega_2})(x_1,x_2),
\]
which equals the common value of the iterated integrals.
\end{proof}

\begin{remark}\label{rem:Fubini}
In the tensor-product case $\M = \MS \bar\otimes \M_1 \bar\otimes \M_2$, the three expressions in \eqref{eq:Fubini} live \emph{a priori} in different spaces: $\int_{\Sigma_1} G_1 \otimes d\E_1 \in (\MS \bar\otimes \M_2) \bar\otimes \M_1$, the product integral in $\MS \bar\otimes (\M_1 \bar\otimes \M_2)$, and $\int_{\Sigma_2} G_2 \otimes d\E_2 \in (\MS \bar\otimes \M_1) \bar\otimes \M_2$. Under the canonical associativity and flip isomorphisms identifying these with $\MS \bar\otimes \M_1 \bar\otimes \M_2$, the Fubini identity becomes
\begin{equation}\label{eq:Fubiniproduct}
    \int_{\Sigma_1} G_1 \otimes d\E_1 \cong \int_{\Sigma_1 \times \Sigma_2} Q \otimes d(\E_1 \otimes \E_2) \cong \int_{\Sigma_2} G_2 \otimes d\E_2
\end{equation}
in $\MS \bar\otimes \M_1 \bar\otimes \M_2$. The subalgebras formulation of Theorem~\ref{thm:Fubini} avoids these identifications: the multiplication map collapses all three expressions into the same algebra $\MS \vee \M_1 \vee \M_2$, and the equality is literal.
\end{remark}

\section{Main theorem}\label{sec:intsummary}

We collect the results of Sections~\ref{sec:generalities}--\ref{sec:parametrized} into a single statement.

\begin{theorem}[Operator-valued integration]\label{thm:master}
Let $\MS$ be a $\mathrm{W}^*$-algebra with separable predual, $\MR$ a $\mathrm{W}^*$-algebra, $(\Sigma, \F)$ a measurable space, and $\E: \F \to \Eff(\MR)$ a normalized POVM.
\begin{enumerate}[label=(\roman*)]
    
  	\item \emph{Existence and uniqueness.} (Thm.~\ref{thm:intop}) For each uniformly bounded ultraweakly $\F$-measurable function $f: \Sigma \to \MS$ there exists a unique element $\int_\Sigma f \otimes d\E \in \MSR$ satisfying
    \begin{equation}\label{eq:masterchar}
        (\rho \otimes \omega)\Big(\int_\Sigma f \otimes d\E\Big) = \int_\Sigma f_\rho \, d\E_\omega
    \end{equation}
    for all $\rho \in \S(\MS)$ and $\omega \in \S(\MR)$, where $f_\rho := \rho \circ f$ and $\E_\omega := \omega \circ \E$.
    
    \item \emph{Functoriality.} (Thm.~\ref{thm:intop}) For any measurable $\alpha: (\Sigma,\F) \to (\Sigma',\F')$, uniformly bounded $\F'$-measurable $f: \Sigma' \to \MS$, and normal channels $\Psi: \MS \to \N_\S$, $\Phi: \MR \to \N_\R$ we have
    \[
        \int_\Sigma (\Psi \circ f \circ \alpha) \otimes d(\Phi \circ \E) = (\Psi \otimes \Phi)\Big(\int_{\alpha(\Sigma)} f \otimes d(\alpha_*\E)\Big).
    \]
    
   \item \emph{Domains.} When $\MSp$ is separable, the space $\Bb$ of uniformly bounded ultraweakly measurable functions $f: \Sigma \to \MS$ is a $\mathrm{C}^*$-algebra under the pointwise operations and the uniform norm (Prop.~\ref{prop:BbisCstar}). It is the largest common domain for all integration maps: any ultraweakly measurable $f \notin \Bb$ fails to be integrable with respect to some POVM (Prop.~\ref{prop:integrability}).
   
   Once a POVM $\E$ is fixed, the natural domain is the quotient $\Linf := \Bb/\N_\E$ by the $*$-ideal $\N_\E = \{f \in \Bb : f = 0 \; \E\text{-a.e.}\}$. If additionally $\MRp$ is separable, $\Linf$ is a $\mathrm{W}^*$-algebra with $\Linf \cong \MS \, \bar\otimes \, L^\infty_\E(\Sigma)$ and predual $L^1_\E(\Sigma) \,\hat\otimes_\pi\, \MSp$, with $L^\infty_\E(\Sigma)=L^\infty_{\mu_\E}(\Sigma)$ for any dominating probability measure $\mu_\E$ on $(\Sigma,\F)$ (Thm. \ref{thm:Linfstructure}).
    
    \item \emph{Integration map.} (Thm.~\ref{thm:integrationmaps2}) Assuming separability of $\MSp$, the map
    \[
        \int_\Sigma d\E: \Bb \longrightarrow \MSR, \; f \longmapsto \int_\Sigma f \otimes d\E,
    \]
    is a pointwise-normal unital CP map. It is isometric $($hence injective$)$ if $\E$ is localizable, and a $*$-homomorphism if $\E$ is sharp $($PVM$)$. Together with pointwise-normality, the elementary simple-function formula
    \[
        \int_\Sigma \sum_j a_j \chi_{X_j} \otimes d\E = \sum_j a_j \otimes \E(X_j)
    \]
    
      determines the integral uniquely.
    \item \emph{Descent to $\Linf$} (Thm.~\ref{thm:descent}). The ideal $\N_\E \subset \Bb$ lies in the kernel of $\int_\Sigma d\E$, so the integration map factors uniquely as
    \[
        \Bb \twoheadrightarrow \Linf \xrightarrow{\;\int_\Sigma d\E\;} \MSR,
    \]
    and the descended map is a faithful pointwise-normal unital CP map. It is an isometric (hence injective) $*$-homomorphism when $\E$ is sharp $($and isometric when $\E$ is localizable, in which case $\Linf = \Bb$$)$. Under separability of $\MRp$, $\Linf \cong \MS \bar\otimes L^\infty_\E(\Sigma)$ is a $\mathrm{W}^*$-algebra and the descended map is a faithful normal unital CP map.
    
 	\item \emph{Monoidal characterization}
 		(Thm.~\ref{thm:int-categorical}). Under the bijection $\E \leftrightarrow \Phi_\E$ between normalized POVMs and faithful normal unital CP maps on commutative domains (Prop.~\ref{prop:povm-bijection}), translating sharpness and localizability to multiplicativity and injectivity, the descended integration map coincides with the $\mathrm{W}^*$-monoidal lift
 	\[
    	\int_\Sigma d\E \;\cong\; \id_{\MS} \,\bar\otimes\, \Phi_\E\;:\; \MS \,\bar\otimes\, L^\infty_\E(\Sigma) \;\longrightarrow\; \MS \,\bar\otimes\, \MR.
	\]

    \item \emph{Stinespring factorization} (Thm.~\ref{thm:stinespring}). In any faithful normal representation, the integration map factors through the Naimark dilation $(\hik,V,\hat{\E})$ of the represented POVM:
    \[
        (\pi_\S \otimes \pi_\R)\Big(\int_\Sigma f \otimes d\E\Big) = (\id_{\his} \otimes V^*)\Big(\int_\Sigma (\pi_\S \circ f) \otimes d\hat{\E}\Big)(\id_{\his} \otimes V),
    \]
    where the right-hand side is integration against a PVM $($a pointwise-normal isometric unital $*$-homomorphism$)$.
        
    \item \emph{Commuting subalgebras} (Thm.~\ref{thm:embeddedprops}). If $\MS, \MR \subset \M$ are commuting $\mathrm{W}^*$-subalgebras of a $\mathrm{W}^*$-algebra $\M$ with $\MSp$ separable and admitting a normal multiplication map $\mu: \MSR \to \M$, then the \emph{embedded integration map}
    \[
        \int_\Sigma d\E: \Bb \longrightarrow \MS \vee \MR, \; f \longmapsto \int_\Sigma f \, d\E := \mu\left(\int_\Sigma f \otimes d\E \right),
    \]
    is a pointwise-normal unital CP map, injective if $\E$ is localizable and $\mu$ is injective (the split property), and a $*$-homomorphism for $\E$ sharp; it retains the elementary formula for simple functions and descends to a normal unital CP map (Cor.~\ref{cor:embeddeddescent}):
\[
    \int_\Sigma d\E: \Linf \;\longrightarrow\; \MS \vee \MR, \qquad [f] \longmapsto \mu\Big(\int_\Sigma f \otimes d\E\Big).
\]
The descended map is faithful if $\mu$ is injective, isometric if $\E$ is localizable and $\mu$ is injective, and an isometric $*$-homomorphism if $\E$ is sharp and $\mu$ is injective,
and inherits the functoriality properties (Cor.~\ref{cor:descent-functorial}).

\item \emph{Quantum Leibniz rule} (Prop.~\ref{prop:Leibniz}). If $Q: I \times \Sigma \to \MS$ on an open interval $I \subset \Reals$ is such that $Q(t,\cdot)$ and its ultraweak derivative $\dot Q(t,\cdot)$ (Definition~\ref{def:uwdiff}) lie in $\Bb$ with locally uniform bounds in $t$, then $t \mapsto \int_\Sigma Q(t,x)\,d\E(x)$ is ultraweakly differentiable on $I$ in $\M$ and
    \[
        \frac{d}{dt}\int_\Sigma Q(t,x)\,d\E(x) = \int_\Sigma \dot Q(t,x)\,d\E(x).
    \]

\item \emph{Quantum Fubini theorem} (Thm.~\ref{thm:Fubini}). If $\MS, \M_1, \M_2 \subset \M$ are mutually commuting $\mathrm{W}^*$-subalgebras with separable preduals, and $\E_j: \F_j \to \Eff(\M_j)$ are POVMs for $j = 1,2$, then for any uniformly bounded jointly measurable $Q: \Sigma_1 \times \Sigma_2 \to \MS$ we have
    \[
        \int_{\Sigma_1}\!\Big(\int_{\Sigma_2} Q \, d\E_2\Big) d\E_1 = \int_{\Sigma_1 \times \Sigma_2} Q \, d(\E_1\E_2) = \int_{\Sigma_2}\!\Big(\int_{\Sigma_1} Q \, d\E_1\Big) d\E_2
    \]
    in $\MS \vee \M_1 \vee \M_2 \subset \M$.
\end{enumerate}
\end{theorem}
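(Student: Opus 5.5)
The proof will be an assembly argument: each item of Theorem~\ref{thm:master} has already been established in Sections~\ref{sec:generalities}--\ref{sec:parametrized}. The plan is to verify, item by item, that the hypotheses quoted in the master statement match those of the cited result, and to supply short bridging arguments wherever the master statement combines several earlier results.

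Items (i) and (ii) are exactly Theorem~\ref{thm:intop}, using separability of $\MSp$. Item (iii) combines Proposition~\ref{prop:BbisCstar}, Proposition~\ref{prop:integrability}, Lemma~\ref{lem:canonical} (identifying $\Linf$ with $\Bb/\N_\E$) and Theorem~\ref{thm:Linfstructure}.

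For item (iv), I take positivity, unitality, pointwise-normality, isometry for localizable $\E$ and multiplicativity for sharp $\E$ from Theorem~\ref{thm:integrationmaps2}, and complete positivity from Corollary~\ref{cor:CP}. The uniqueness clause then follows from two facts. First, the simple-function formula \eqref{eq:intsimple} fixes the map on simple functions. Second, Lemma~\ref{lem:simpleapprox} gives every $f \in \Bb$ as a bounded pointwise-ultraweak limit of simple functions. Pointwise-normality then forces any two pointwise-normal maps that agree on simple functions to agree on all of $\Bb$.

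Item (v) is Theorem~\ref{thm:descent}. One point needs care here: the master statement also asserts faithfulness and pointwise-normality of the descended map without assuming $\MRp$ separable.
\begin{itemize}
\item \emph{Pointwise-normality} is inherited directly from $\Bb$, since the quotient map is pointwise.
\item \emph{Faithfulness} I would reprove without $\mu_\E$. If $f \geq 0$ and $\int f_\rho\,d\E_\omega = 0$ for all states, then for each $\rho$ in a countable dense set and each $C>0$ one has $C\,\E_\omega(\{f_\rho \geq C\}) \leq 0$ for all $\omega$. Hence $\E(\{f_\rho \geq C\}) = 0$, and Lemma~\ref{lem:singlenull}-type countable unions give $f = 0$ $\E$-a.e.
\item \emph{The isometric $*$-homomorphism claim} for sharp $\E$ then follows exactly as in Theorem~\ref{thm:descent}, since that argument used only faithfulness and multiplicativity.
\end{itemize}

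Items (vi)--(x) are, in turn, Theorem~\ref{thm:int-categorical} together with Proposition~\ref{prop:povm-bijection}, Theorem~\ref{thm:stinespring}, Theorem~\ref{thm:embeddedprops} with Corollaries~\ref{cor:embeddeddescent} and~\ref{cor:descent-functorial}, Proposition~\ref{prop:Leibniz}, and Theorem~\ref{thm:Fubini}. The only bridging needed is to note that the standing hypotheses of the later sections (commuting subalgebras, a normal $\mu$, separability) are restated in items (viii)--(x) themselves.

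The main obstacle is not any single step but consistency of hypotheses. The separability of $\MRp$ is needed for the $\mathrm{W}^*$-structure in (iii), (v) and (vi), but it is not needed for the $\mathrm{C}^*$-level claims, and the faithfulness argument in (v) must avoid the dominating measure.

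A second issue is the wording in (vi). Proposition~\ref{prop:povm-bijection} yields \emph{positive} maps, and localizable/sharp should correspond to injective/multiplicative. I would handle this by upgrading positivity to complete positivity, using that positive maps out of the commutative algebra $L^\infty_\E(\Sigma)$ are automatically CP. For the correspondence, I would read the statement as pairing sharp with multiplicative and localizable with injective, as in the Corollary following Proposition~\ref{prop:povm-bijection}.
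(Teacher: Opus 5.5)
Your plan matches the paper's own proof, which is exactly this item-by-item citation. Three of your bridges are correct and cover spots where the paper's citations do not literally reach the statement:
\begin{itemize}
\item uniqueness in (iv), from Lemma~\ref{lem:simpleapprox} plus pointwise-normality;
\item faithfulness in (v) without separability of $\MRp$; your $C\,\E(\{f_\rho\ge C\})$ argument works, and Lemma~\ref{lem:singlenull} already gives it using only separability of $\MSp$, so the paper's detour through $\mu_\E$ is unnecessary;
\item the upgrade of $\Phi_\E$ from positive to CP via Stinespring's theorem for commutative domains, which is needed for $\id_{\MS}\bar\otimes\Phi_\E$ to be defined at all.
\end{itemize}

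The step that does not go through is your handling of the ``translation'' clause in (vi). First, the Corollary after Proposition~\ref{prop:povm-bijection} does not say what you attribute to it: it reads ``localizable/sharp iff multiplicative/injective'', which is the opposite pairing. Second, with your pairing only part of the equivalence is true:
\begin{itemize}
\item Sharp $\Leftrightarrow$ multiplicative holds. The forward direction is Theorem~\ref{thm:integrationmaps2} with $\MS=\Cn$. The converse holds because $\E(X)^2=\Phi_\E([\chi_X])^2=\Phi_\E([\chi_X]^2)=\E(X)$.
\item Localizable $\Rightarrow$ injective holds, via the isometry.
\item Injective $\not\Rightarrow$ localizable. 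Take $\Sigma=\{1,2\}$, $\MR=\Cn^2$, $\E(\{1\})=(1,\tfrac12)$, $\E(\{2\})=(0,\tfrac12)$. Then $\Phi_\E(a,b)=(a,\tfrac{a+b}{2})$ is injective, faithful, normal and unital. Yet $\omega(\E(\{2\}))\le\tfrac12$ for every state $\omega$, so there is no localizing sequence at $2$.
\end{itemize}
So you must prove these implications directly and assert only the true ones, rather than cite the Corollary for an equivalence.

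The same hypothesis audit you set yourself catches a parenthetical you passed over in (vii). Integration against the PVM $\hat\E$ on $\BbBH$ is a unital $*$-homomorphism, but it is not isometric once $\hat\E$ has a nonempty null set $N$: then $f=a\chi_N$ integrates to $0$. By minimality of the dilation, this happens whenever $\E$ itself has a nonempty null set. Isometry holds only after passing to the quotient by $\hat\E$-null functions. That quotient version is exactly what your own faithfulness-plus-multiplicativity argument from (v) provides, and it does not need separability of the predual of $B(\hik)$.
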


\begin{proof}
Items (i) and (ii) are Theorem~\ref{thm:intop}; universal integrability of $\Bb$ is Proposition~\ref{prop:integrability}. Item (iii): $\Bb$ is a $\mathrm{C}^*$-algebra by Proposition~\ref{prop:BbisCstar}; maximality is Proposition~\ref{prop:integrability}; the $\Linf$ description is Lemma~\ref{lem:canonical} and Theorem~\ref{thm:Linfstructure}. Item (iv) is Theorem~\ref{thm:integrationmaps2}, which establishes positivity, unitality, adjoint-preservation, pointwise-normality, isometry, and multiplicativity directly, and uses Corollary~\ref{cor:CP} for complete positivity. Item (v) is Theorem~\ref{thm:descent}. Item (vi) is Theorem~\ref{thm:int-categorical}. Item (vii) is Theorem~\ref{thm:stinespring}. Item (viii) is Theorem~\ref{thm:embeddedprops} together with Corollary~\ref{cor:embeddeddescent}. Item (ix) is Proposition~\ref{prop:Leibniz}, and (x) is Theorem~\ref{thm:Fubini}.
\end{proof}
\section{Concluding remarks}\label{sec:conclusion}

This paper has developed a theory of operator-valued integration in the $\mathrm{W}^*$-algebraic setting, culminating in an integration map of the form of a faithful normal unital CP map
\[
    \int_\Sigma d\E: \Linf \;\longrightarrow\; \MS \vee \MR, \qquad [f] \longmapsto \mu\Big(\int_\Sigma f \otimes d\E\Big),
\]
which lifts the change-of-variables formula, simple-function approximation, monotone convergence, the Leibniz rule, and the Fubini theorem to the $\mathrm{W}^*$-algebraic setting in a structurally intrinsic way.

Several questions remain open. First, it is natural to ask whether the separability hypothesis on $\MSp$ can be relaxed. The present construction uses separability in a single but essential place (Lemma~\ref{lem:simpleapprox}), and it is not clear whether an alternative approximation scheme is available in the non-separable case.

Second, extending the domain of integration to an $L^1$-type space, relaxing the uniform boundedness assumption used throughout, is an open problem. A definition we have explored is the following.

\begin{definition*}
    Consider a POVM $\E: (\Sigma,\F) \to \Eff(\MR)$ and a function $f: \Sigma \to \MS$. The function $f$ is called $\E$-\emph{integrable} if for every $\rho \in \S(\MS)$ the scalar function
    \[
    f_\rho := \rho \circ f : \Sigma \ni x \mapsto \rho(f(x)) \in \Cn
    \]
    is $\E_\omega$-integrable for every $\omega \in \S(\MR)$ and
    \[
		\|f\|_{\mathcal{L}^1(\E)} \;:=\; \sup_{(\rho,\omega) \in \S(\MS) \times \S(\MR)} \|f_\rho\|_{L^1(\E_\omega)} < \infty.
    \]
    The complex $*$-vector space of $\E$-integrable functions with the seminorm $\|\cdot\|_{\mathcal{L}^1(\E)}$ is denoted $\mathcal{L}^1_\E(\Sigma,\MS)$, and its $\E$-equivalence quotient $\mathcal{L}^1_\E(\Sigma,\MS)/\!\sim_\E$ is denoted $L^1_\E(\Sigma,\MS)$.
\end{definition*}

\noindent The crucial bound underlying the definition of the integral as the extension of a functional on product states comes directly from Lemma~\ref{lem:semiip}, which uses uniform boundedness in an essential way. Extending the $\mathrm{W}^*$-integration theory to integrable but not uniformly bounded functions would likely require entirely different techniques. Note also that $L^1_\E(\Sigma,\MS)$ is in general not even a Banach space, departing significantly from the operator-algebraic setting endorsed in this work. Extending the integration theory further to functions taking values in \emph{unbounded operators} --- as appears appropriate for some QFT applications \cite{fedida_foundations_2025} --- requires leaving $\MSR$ for the realm of operators affiliated with it: spectral integration for sharp $\E$, and noncommutative $L^p$-theory with respect to a semifinite trace more generally.
 
A third, more structural question is whether the construction extends beyond $\mathrm{W}^*$-algebras to general dual order-unit spaces equipped with appropriate positive tensor products \cite{kavruk_tensor_2011}. The $\mathrm{W}^*$-algebraic framework supplies not only the dual order structure needed for positivity of the integration map but also the Hilbert-space realization used to establish the key norm bound (Lemma~\ref{lem:semiip}) and the Naimark/Stinespring factorization. Since abstract operator systems admit Hilbert-space theoretic representations, we believe the $\mathrm{W}^*$-integration theory can be lifted to this setting. Such a generalization would extend the reach of the theory to a wider class of convex-operational physical models.

Lastly, from the categorical perspective, the results of Section~\ref{sec:monoidal} suggest the following generalization. Consider a monoidal category $(\mathcal{C},\otimes)$ extending probability theory in the sense of being embeddable into complex ordered vector spaces with the tensor unit $\Cn$ and subsuming commutative $\mathrm{W}^*$-algebras and normal unital positive maps as a monoidal subcategory. For any commutative $\mathrm{W}^*$-algebra $\A$ we have $\A \cong L^\infty_\mu(\Sigma)$ for some measured space $(\Sigma,\F,\mu)$, where $\mu$ is fixed up to absolute continuity class. Now for any object $\S \in \mathcal{C}$ we can identify $\S \otimes \A$ with the space $L^\infty_\mu(\Sigma,\S)$ of $\S$-valued essentially bounded functions on $\Sigma$ and for any positive morphism $\Phi: \A \to \R$ the map $\id_\S \otimes \Phi$ admits an interpretation of an integration map extending the elementary integrals of simple functions. Such a setup deserves separate investigation.
 
The primary application of the integration theory developed here is to quantum reference frames (QRFs) and, more generally, to the foundations of operational quantum physics. In a forthcoming paper \cite{glowacki_QRF_2026}, the first author uses it to construct relativization maps --- expressing system observables relative to a quantum reference frame --- on general $G$-sets, including on principal bundles, in full $\mathrm{W}^*$-algebraic generality and for arbitrary symmetry groups. With the $\mathrm{W}^*$-integration theory at hand, relativization maps can be understood as symmetry-constrained integration, with the original setup recovered as a gauge-fixed setting under the transitivity assumption on the action of $G$ on $\Sigma$. More precisely, we study the following setup.

\begin{definition*}
	Let $G$ be a topological group acting measurably on $(\Sigma,\F)$ and ultraweakly on $\MS \vee \MR$, and let $\E_\R: \F \to \Eff(\MR)$ be a $G$-covariant POVM. Denote by $\Linf^G \subseteq \Linf$ the $\mathrm{W}^*$-subalgebra of $G$-equivariant $\E_\R$-essentially bounded functions. The \emph{relativization map} is defined as
	\begin{equation}
		\yen^\R := \int_\Sigma d\E_\R: \Linf^G \longrightarrow (\MS \vee \MR)^G.
	\end{equation}
\end{definition*}

\noindent Invariance of the image under the $G$-action follows by a change of variables. When $G$ is locally compact, second countable, and Hausdorff, $\Sigma$ is a topological space, and the action of $G$ on $\Sigma$ is continuous and transitive, we have $\Sigma \cong G/H$ for some closed subgroup $H \subseteq G$. An equivariant function $f \in \Linf^G$ is then determined by its value at the identity coset, since $f(g'.gH) = \alpha_{g'}(f(gH))$, and $f(H) \in \MS^H$. A choice of identity coset (gauge-fixing) establishes a bijection $\Linf^G \cong \MS^H$, recovering the usual domain \cite{fewster_quantum_2024,glowacki_quantum_2024}. Under this identification, with $\MR = B(\hir)$ and unitary $G$-action on $B(\his)$, the relativization map in the form studied in previous works is recovered:
\[
	\yen^\R: B(\his)^H \longrightarrow B(\hisr)^G, \qquad A \longmapsto \int_{G/H} U_\S(g)\,A\,U_\S(g)^\dagger \otimes d\E_\R.
\]
When the action of $G$ on $\Sigma$ is not transitive, $\Linf^G$ appears to be the natural domain of the relativization map. This generalization opens the way to the program of relational quantum field theory (RQFT) \cite{glowacki_towards_2024}, which requires integration of operator-valued \emph{fields} over principal bundles.\footnote{Foundations of scalar RQFT on flat spacetime, including relational Poincar\'e covariance and causality conditions, have been established in \cite{fedida_foundations_2025} using existing tools.}

\paragraph*{Acknowledgments}
JG thanks Prof.\ Klaas Landsman, who encouraged him to study mathematics; Dr.\ Leon Loveridge, from whom he learned so much; and Prof.\ Markus M\"uller for his kind support. This publication was made possible through the support of the ID\#~62312 grant from the John Templeton Foundation, as part of the \href{https://www.templeton.org/grant/the-quantuminformation-structure-ofspacetime-qiss-second-phase}{`The Quantum Information Structure of Spacetime' Project (QISS)}. The opinions expressed in this publication are those of the authors and do not necessarily reflect the views of the John Templeton Foundation. This research was funded in whole or in part by the Austrian Science Fund (FWF) 10.55776/PAT1562525. For open-access purposes, the authors have applied a CC~BY public copyright license to any author-accepted manuscript version arising from this submission.

\end{document}